\documentclass[%
reprint,
superscriptaddress,
bibnotes,longbibliography
amsmath,amssymb,
aps,
floatfix,
onecolumn
]{revtex4-2}
\usepackage[dvipdfmx]{graphicx}
\usepackage{graphicx}
\usepackage{color}
\usepackage{natbib}
\usepackage{amsmath,amssymb,amsthm,mathrsfs,amsfonts,dsfont}
\usepackage{subfigure, epsfig}
\usepackage{braket}
\usepackage{bm}
\usepackage{bbm}
\usepackage{enumerate}
\usepackage{comment}
\usepackage{appendix} 
\usepackage[colorlinks,linkcolor=blue,citecolor=blue]{hyperref}
\newcommand{\tr}{\mathrm{tr}}

\usepackage{makecell} 

\newtheorem{theorem}{Theorem}

\newtheorem{lemma}{Lemma}
\newtheorem{proposition}{Proposition}

\theoremstyle{definition}
\newtheorem{dfn}{Definition}
\theoremstyle{remark}
\newtheorem{rem}{Remark}
\theoremstyle{example}
\newtheorem{ex}{Example}

\newcommand{\Exnu}{\underset{U \sim\nu}{\mathbb{E}}}
\newcommand{\ExnuK}{\underset{U \sim \nu^{*K}}{\mathbb{E}}}
\newcommand{\ExHaar}{\underset{U \sim \rm Haar}{\mathbb{E}}}
\newcommand{\poly}{\mathrm{poly}}
\newcommand{\polylog}{\mathrm{polylog}}

\usepackage[utf8]{inputenc}
\usepackage{array}       
\usepackage{cellspace}   
\usepackage{graphicx}    

\setlength\cellspacetoplimit{7pt}
\setlength\cellspacebottomlimit{7pt}

\newcolumntype{C}[1]{>{\centering\arraybackslash}m{#1}}
\addparagraphcolumntypes{C}   

\newcommand{\red}[1]{\textcolor{red}{#1}}

\allowdisplaybreaks[4]

\begin{document}

\title{Supplemental Material for ``Non-Haar random circuits form unitary designs as fast as Haar random circuits''}

\author{Toshihiro Yada}
\affiliation{\mbox{Department of Applied Physics, The University of Tokyo, 7-3-1 Hongo, Bunkyo-ku, Tokyo 113-8656, Japan}}

\author{Ryotaro Suzuki}
\affiliation{\mbox{Dahlem Center for Complex Quantum Systems, Freie Universität Berlin, Berlin 14195, Germany}}

\author{Yosuke Mitsuhashi}
\affiliation{\mbox{Department of Basic Science, The University of Tokyo, 3-8-1 Komaba, Meguro-ku, Tokyo 153-8902, Japan}}

\author{Nobuyuki Yoshioka}
\affiliation{\mbox{International Center for Elementary Particle Physics, The University of Tokyo, 7-3-1 Hongo, Bunkyo-ku, Tokyo 113-0033, Japan}}

\setcounter{section}{0}
\setcounter{equation}{0}
\setcounter{figure}{0}
\setcounter{table}{0}
\setcounter{page}{1}
\renewcommand{\thesection}{S\arabic{section}}
\renewcommand{\theequation}{S\arabic{equation}}
\renewcommand{\thefigure}{S\arabic{figure}}
\renewcommand{\thetable}{S\arabic{table}}
\renewcommand{\bibnumfmt}[1]{[S#1]}
\renewcommand{\citenumfont}[1]{S#1}

\maketitle

\tableofcontents

\vspace{4mm}\noindent
We here outline the structure of the Supplemental Material. 
In Sec.~\ref{Ss:preliminary}, we introduce the concept of approximate unitary designs and explain the standard method for upper bounding the unitary design formation depth in random circuits. 
In Sec.~\ref{Ss:main_results}, we present the open questions addressed in this work and briefly summarize our main results. 
In Sec.~\ref{Ss:single_layer_repetition}, we address the single-layer-connected circuit and provide the proof of Theorem~1 in the main text. 
In Sec.~\ref{Ss:multilayer_circuit}, we analyze the multilayer-connected circuit and present the proof of Theorems~2 in the main text.
In Sec.~\ref{Ss:patchwork}, we discuss the unitary design formation depth in patchwork circuits. 
In Sec.~\ref{Ss:local_unitary}, we provide conditions under which the local spectral gap remains nonzero and analyze its scaling behavior.

\section{Preliminary} \label{Ss:preliminary}

\subsection{Approximate unitary design}

In this subsection, we consider the probqbility distribution on the set of unitary operations on $q$-dimensional Hilbert space, which is denoted as $\mathrm{U}(q)$.
A unitary $t$-design is the probability distribution whose $t$-th moment is identical to that of the uniform distribution (i.e., Haar distribution). 
The $t$-th moment of the probability distribution on $\mathrm{U}(q)$ is completely characterized with the $t$-th moment operator defined as
\begin{equation} \label{seq:def_moment_operator}
   M_{\nu}^{(t)} \equiv \Exnu \left[ U^{\otimes t, t} \right]
\end{equation}
where $U^{\otimes t, t}\equiv U^{\otimes t}\otimes U^{*\otimes t}$ and $U^*$ denotes the complex conjugate of unitary $U$ with respect to a fixed basis. 
The moment operator for the Haar ensemble is denoted as $M_{\rm Haar}^{(t)}$, 
and it is known to be the projection onto the invariant subspace $\mathrm{Inv}(\mathrm{U}(q)^{\otimes t, t})$ of $\mathrm{U}(q)^{\otimes t, t}\equiv\{U^{\otimes t, t} | U\in \mathrm{U}(q)\}$ \cite{mele2024introduction}.
The invariant subspace is explicitly written as $\mathrm{Inv}(\mathrm{U}(q)^{\otimes t, t}) = \mathrm{span} \left\{\left. \ket{\sigma} \right| \sigma \in S_t \right\}$ with $\ket{\sigma} \equiv q^{-t/2} \sum_{1\leq i_j\leq q} \ket{i_1,i_2,\dots, i_t}\ket{i_{\sigma(1)},i_{\sigma(2)},\dots, i_{\sigma(t)}}$ by the Schur-Weyl duality, where $S_t$ is the symmetric group of degree $t$. We represent this projector onto the invariant subspace as $P^{(t)}= M_{\rm Haar}^{(t)}$.
The moment operator for any unitary ensemble $\nu$ is known to be decomposed as $M_{\nu}^{(t)} = P^{(t)} + R^{(t)}_{\nu}$, where the residual part $R^{(t)}_{\nu}$ nontrivially acts only on the subspace orthogonal to $\mathrm{Inv}(\mathrm{U}(q)^{\otimes t, t})$, which means the equality $P^{(t)}R^{(t)}_{\nu} = R^{(t)}_{\nu}P^{(t)} =0$ is satisfied. 

Using the moment operator, we here introduce the definition of an approximate unitary design mainly used in this work.
\begin{dfn}[Approximate unitary design based on the moment operator] \label{def:mom_approx_unitary_design}
For $t\in\mathbb{N}$, $\varepsilon\geq0$, a probability distribution $\nu$ on $\mathrm{U}(q)$ is an $\varepsilon$-approximate unitary $t$-design if and only if 
\begin{equation}
\label{eq:def_mom_approx_unitary_design}
    \|M_{\nu}^{(t)} - M_{\rm Haar}^{(t)} \|_{\infty} \leq \frac{\varepsilon}{q^{2t}},
\end{equation}
where $\|\cdot\|_{\infty}$ is the operator norm.
\end{dfn}
\noindent
To distinguish the operator norm from other norms, we write it as $\|\cdot\|_{\infty}$ in the Supplemental Material, rather than using the simpler notation $\|\cdot\|$ used in the main text.
This definition is widely used in evaluating the rate of unitary design formation in random circuits \cite{brandao2016local,haferkamp2022random,chen2024incompressibility,haferkamp2021improved,mittal2023local,belkin2023approximate}. However, there are also other definitions of approximate unitary designs, depending on how to quantify the difference between $\nu$ and the the Haar measure. The representative examples are the definitions based on the twirling channel.

The $t$-fold twirling channel is defined as
\begin{equation}
    \Phi_{\nu}^{(t)} (\cdot ) \equiv \Exnu \left[ U^{\otimes t}\cdot U^{\dag \otimes t} \right],
\end{equation}
where $U^{\dag}$ is the Hermite conjugate of $U$. We denote the twirling channel for the Haar measure as $\Phi_{\rm Haar}^{(t)}$.
There are two frequently used definitions of approximate unitary designs based on the twirling channel, which are relative error and addtive error approximate designs.
\begin{dfn}[Approximate unitary design based on the twirling channel] \label{def:twirl_approx_unitary_design}
For $t\in\mathbb{N}$, $\varepsilon\geq0$, a probability distribution $\nu$ on $\mathrm{U}(q)$ is a relative error $\varepsilon$-approximate unitary $t$-design if and only if 
\begin{equation}
\label{seq:def_rel_t_des}
    (1-\varepsilon) \Phi_{\rm Haar}^{(t)}  \preccurlyeq  \Phi_{\nu}^{(t)} \preccurlyeq (1+\varepsilon) \Phi_{\rm Haar}^{(t)},
\end{equation}
where $\mathcal{E} \preccurlyeq \mathcal{F}$ means that the channel $\mathcal{F} -\mathcal{E}$ is completely positive.
The ensemble $\nu$ is an additive error $\varepsilon$-approximate unitary $t$-design if and only if
\begin{equation}
\label{seq:def_add_t_des}
    \| \Phi_{\nu}^{(t)} - \Phi_{\rm Haar}^{(t)} \|_{\diamond} \leq \varepsilon ,
\end{equation}
where $\| \mathcal{E} \|_{\diamond} \equiv \sup_{R} \frac{\left\|\mathcal{E} \otimes \mathrm{id}_{R} (X) \right\|_1}{\|X\|_1}$ is the diamond norm for the channel.
\end{dfn}
\noindent
These definitions are known to have operational meanings in terms of the indistinguishability \cite{schuster2024random}, while the detailed discussion is abberviated here.

Now, we show that Definitions~\ref{def:mom_approx_unitary_design} and \ref{def:twirl_approx_unitary_design} are quantitatively related to each other. Specifically, the definition with the moment operator (Definition~\ref{def:mom_approx_unitary_design}) implies the other definitions as follows:
\begin{lemma} \label{lem:mom_to_design}
Let $t\in\mathbb{N}$, $\varepsilon \geq 0$, and a probability distribution on $\mathrm{U}(q)$, denoted as $\nu$,  satisfy Eq.~\eqref{eq:def_mom_approx_unitary_design}. Then, $\nu$ is a relative error $\varepsilon$-approximate unitary $t$-design, and an additive error $q^{-t}\varepsilon$-approximate unitary $t$-design.
\end{lemma}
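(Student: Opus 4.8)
The plan is to transport the moment-operator bound onto the twirling channels and then read off both conclusions from it. The bridge is the vectorization dictionary $X\mapsto\sum_{ij}X_{ij}\ket{i}\ket{j}$: under it the superoperator $\Phi_\nu^{(t)}$ is represented by the matrix $M_\nu^{(t)}$ (up to a fixed permutation of the $2t$ tensor legs), so hypothesis \eqref{eq:def_mom_approx_unitary_design} reads exactly $\|\Phi_\nu^{(t)}-\Phi_{\rm Haar}^{(t)}\|_{2\to2}=\|M_\nu^{(t)}-M_{\rm Haar}^{(t)}\|_\infty\le\varepsilon q^{-2t}$, where $\|\cdot\|_{2\to2}$ is the Schatten-$2$-to-$2$ induced norm (equivalently the operator norm of the matrix). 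I will also use that the linear map $\mathcal E\mapsto J(\mathcal E)=(\mathcal E\otimes\mathrm{id})(\ket{\Omega}\!\bra{\Omega})$ to the Choi matrix (with $\ket{\Omega}$ the unnormalized maximally entangled vector) merely reshuffles matrix entries, hence preserves the Frobenius norm, and that $\mathcal E\preccurlyeq\mathcal F\iff J(\mathcal F)-J(\mathcal E)\succeq0$.

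\emph{Additive error.} This is the soft half. For any linear superoperator $\mathcal E$ on $d$-dimensional operators one has $\|\mathcal E\|_\diamond\le d\,\|\mathcal E\|_{2\to2}$: the diamond norm stabilizes with a reference of dimension $d$, the induced trace-norm bound of a map on a $d^2$-dimensional space is at most $\sqrt{d^2}$ times its induced Hilbert--Schmidt bound by Cauchy--Schwarz between Schatten norms, and tensoring with the identity leaves $\|\cdot\|_{2\to2}$ unchanged. Taking $d=q^t$ and $\mathcal E=\Phi_\nu^{(t)}-\Phi_{\rm Haar}^{(t)}$ gives $\|\Phi_\nu^{(t)}-\Phi_{\rm Haar}^{(t)}\|_\diamond\le q^t\cdot\varepsilon q^{-2t}=q^{-t}\varepsilon$, which is \eqref{seq:def_add_t_des} with error $q^{-t}\varepsilon$.

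\emph{Relative error.} I would pass to Choi matrices, so the goal becomes $(1-\varepsilon)J(\Phi_{\rm Haar}^{(t)})\preceq J(\Phi_\nu^{(t)})\preceq(1+\varepsilon)J(\Phi_{\rm Haar}^{(t)})$, and prove three facts. (i) From $J(\Phi_\nu^{(t)})=\Exnu[(U^{\otimes t}\otimes I)\ket{\Omega}\!\bra{\Omega}(U^{\dagger\otimes t}\otimes I)]$ its support lies in $S:=\mathrm{span}\{(U^{\otimes t}\otimes I)\ket{\Omega}:U\in\mathrm{U}(q)\}=\mathrm{supp}\,J(\Phi_{\rm Haar}^{(t)})$, so $J(\Phi_\nu^{(t)})-J(\Phi_{\rm Haar}^{(t)})$ is supported on $S$. (ii) A short Schur--Weyl computation -- decompose $(\mathbb C^q)^{\otimes t}=\bigoplus_{\lambda\vdash t}V_\lambda\otimes S^\lambda$ and average inside each irreducible $V_\lambda$ via Schur orthogonality -- gives $J(\Phi_{\rm Haar}^{(t)})=\bigoplus_\lambda(\dim V_\lambda)^{-1}\,I_{V_\lambda\otimes V_\lambda}\otimes\ket{\Omega_{S^\lambda}}\!\bra{\Omega_{S^\lambda}}$, whose nonzero eigenvalues are $\dim S^\lambda/\dim V_\lambda\ge(\dim S^\lambda)^2/q^t\ge q^{-t}$, using $\sum_\lambda(\dim V_\lambda)(\dim S^\lambda)=q^t$; hence $J(\Phi_{\rm Haar}^{(t)})\succeq q^{-t}\Pi_S$ with $\Pi_S$ the projector onto $S$. (iii) By the reshuffle/isometry remark, $\|J(\Phi_\nu^{(t)})-J(\Phi_{\rm Haar}^{(t)})\|_\infty\le\|J(\Phi_\nu^{(t)})-J(\Phi_{\rm Haar}^{(t)})\|_2=\|M_\nu^{(t)}-M_{\rm Haar}^{(t)}\|_2\le q^t\|M_\nu^{(t)}-M_{\rm Haar}^{(t)}\|_\infty\le q^{-t}\varepsilon$. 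Combining (i)--(iii): $-\varepsilon J(\Phi_{\rm Haar}^{(t)})\preceq-\varepsilon q^{-t}\Pi_S\preceq J(\Phi_\nu^{(t)})-J(\Phi_{\rm Haar}^{(t)})\preceq\varepsilon q^{-t}\Pi_S\preceq\varepsilon J(\Phi_{\rm Haar}^{(t)})$, which is the desired two-sided Choi bound; undoing the Choi correspondence yields \eqref{seq:def_rel_t_des}.

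The main obstacle is step (ii): pinning down the exact constant $q^{-t}$ requires genuinely diagonalizing the Haar-twirl Choi matrix by representation theory, not a soft estimate, while everything else is routine bookkeeping with Schatten-norm inequalities, the diamond-norm stabilization, and the Choi correspondence. A cruder lower bound on the spectral gap of $J(\Phi_{\rm Haar}^{(t)})$ would still give relative error, but with a weaker prefactor.
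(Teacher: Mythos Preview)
Your proof is correct. The paper does not give its own argument for this lemma; it simply cites Lemma~4 of Ref.~\cite{brandao2016local}, and your approach is essentially the standard one found there: pass to Choi matrices, observe that the difference $J(\Phi_\nu^{(t)})-J(\Phi_{\rm Haar}^{(t)})$ is supported on $\mathrm{supp}\,J(\Phi_{\rm Haar}^{(t)})$, lower bound the smallest nonzero eigenvalue of $J(\Phi_{\rm Haar}^{(t)})$ by $q^{-t}$ via Schur--Weyl, and convert the operator-norm hypothesis into the two-sided Choi bound. The additive-error half via $\|\cdot\|_\diamond\le d\,\|\cdot\|_{2\to2}$ is likewise standard.

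One small remark on step~(iii): the chain $\|\cdot\|_\infty\le\|\cdot\|_2$ followed by $\|M\|_2\le q^t\|M\|_\infty$ is valid but slightly wasteful, since you first enlarge and then shrink; the cited reference instead bounds $\|J(\Phi_\nu^{(t)})-J(\Phi_{\rm Haar}^{(t)})\|_\infty$ more directly. Either way you land on the same $q^{-t}\varepsilon$ bound, so this is cosmetic. Your identification of step~(ii) as the only place requiring genuine representation-theoretic input is accurate.
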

\noindent
The proof of this lemma is provided e.g., in Lemma~4 of Ref.~\cite{brandao2016local}. This lemma demonstrates that Definition~\ref{def:mom_approx_unitary_design} is the strongest definition which implies the others. 

\subsection{Basic technique for evaluating unitary design formation depth}
In this work, we consider the random circuit on $N$-qudit system, which is composed of multiple layers.
Denoting the unitary ensemble of the $i$-th layer as $\nu_i$, the whole random circuit is characterized by $\{\nu_i\}_{i \in \mathbb{N}}$, and the convolution of $L$ layers is represented as $\chi_L = \nu_L * \cdots* \nu_2  * \nu_1$, where $*$ represents the convolution of unitary ensembles. For simplicity, we often use the notation $\chi_L = *_{1\leq i \leq L} \nu_i$. The ensemble $\chi_L$ is known to approach the global Haar ensemble as the circuit depth $L$ increases. The approaching rate is characterized by the spectral gap, which is defined as $\Delta_{\nu_i}^{(t)} \equiv 1- \|M_{\nu_i}^{(t)} - M_{\rm Haar}^{(t)}\|_{\infty}$.
Specifically, the following lemma can be derived:
\begin{lemma}\label{lem:basic}
The moment operator of the $L$-fold convolution of unitary ensemble $\chi_L \equiv *_{1\leq i \leq L} \nu_i$ satisfies 
\begin{equation}
    \|M_{\chi_L}^{(t)} -M_{\rm Haar}^{(t)} \|_{\infty} \leq \exp\left(-\sum_{i=1}^L\Delta_{\nu_i}^{(t)} \right).
\end{equation}
\end{lemma}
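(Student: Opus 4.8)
The plan is to reduce the claim to a submultiplicative bound on the residual operators $R^{(t)}_{\nu_i}$ and then apply the elementary inequality $1-x\le e^{-x}$. First I would record that the moment operator is multiplicative under convolution: since the layers are drawn independently and $(U_L\cdots U_1)^{\otimes t,t}=U_L^{\otimes t,t}\cdots U_1^{\otimes t,t}$, taking the expectation factor by factor gives $M_{\chi_L}^{(t)}=M_{\nu_L}^{(t)}\cdots M_{\nu_1}^{(t)}$, the factors appearing in the order in which the layers act.

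Next I would substitute the decomposition $M_{\nu_i}^{(t)}=P^{(t)}+R^{(t)}_{\nu_i}$ into this product and exploit the orthogonality relations $P^{(t)}R^{(t)}_{\nu_i}=R^{(t)}_{\nu_i}P^{(t)}=0$ together with $P^{(t)}P^{(t)}=P^{(t)}$. Expanding $(P^{(t)}+R^{(t)}_{\nu_L})\cdots(P^{(t)}+R^{(t)}_{\nu_1})$ into its $2^L$ monomials, any monomial other than the all-$P^{(t)}$ and the all-$R^{(t)}$ one contains, somewhere in the word read from left to right, a $P^{(t)}$ immediately adjacent to some $R^{(t)}_{\nu_j}$, and hence vanishes. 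Therefore
\begin{equation}
M_{\chi_L}^{(t)}=P^{(t)}+R^{(t)}_{\nu_L}R^{(t)}_{\nu_{L-1}}\cdots R^{(t)}_{\nu_1},
\end{equation}
and since $P^{(t)}=M_{\rm Haar}^{(t)}$ this yields the clean identity $M_{\chi_L}^{(t)}-M_{\rm Haar}^{(t)}=R^{(t)}_{\nu_L}\cdots R^{(t)}_{\nu_1}$.

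Finally, submultiplicativity of the operator norm gives $\|M_{\chi_L}^{(t)}-M_{\rm Haar}^{(t)}\|_\infty\le\prod_{i=1}^L\|R^{(t)}_{\nu_i}\|_\infty$, while the same decomposition gives $\|R^{(t)}_{\nu_i}\|_\infty=\|M_{\nu_i}^{(t)}-M_{\rm Haar}^{(t)}\|_\infty=1-\Delta_{\nu_i}^{(t)}$ straight from the definition of the spectral gap. Hence the right-hand side is $\prod_{i=1}^L(1-\Delta_{\nu_i}^{(t)})\le\prod_{i=1}^L e^{-\Delta_{\nu_i}^{(t)}}=\exp(-\sum_{i=1}^L\Delta_{\nu_i}^{(t)})$, using $1-x\le e^{-x}$. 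It is worth noting in passing that $\Delta_{\nu_i}^{(t)}\in[0,1]$, since $M_{\nu_i}^{(t)}$ is an average of unitaries and thus a contraction (so $\|R^{(t)}_{\nu_i}\|_\infty\le 1$), which keeps every step meaningful.

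There is no genuinely hard estimate here; the only points needing care are the cancellation of the mixed monomials — those containing both a $P^{(t)}$ factor and an $R^{(t)}$ factor — and the bookkeeping of the factor order in the convolution. Once the algebraic identity $M_{\chi_L}^{(t)}=P^{(t)}+R^{(t)}_{\nu_L}\cdots R^{(t)}_{\nu_1}$ is in hand, the norm bound is immediate.
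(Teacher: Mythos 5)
Your proof is correct and is essentially the paper's argument: the identity $M_{\chi_L}^{(t)}-M_{\rm Haar}^{(t)}=R^{(t)}_{\nu_L}\cdots R^{(t)}_{\nu_1}$ you obtain by expanding $(P^{(t)}+R^{(t)}_{\nu_i})$ and cancelling mixed monomials is exactly the paper's factorization $M_{\chi_L}^{(t)}-M_{\rm Haar}^{(t)}=\prod_i\bigl(M_{\nu_i}^{(t)}-M_{\rm Haar}^{(t)}\bigr)$, which it justifies via $M_{\nu_i}^{(t)}M_{\rm Haar}^{(t)}=M_{\rm Haar}^{(t)}M_{\nu_i}^{(t)}=M_{\rm Haar}^{(t)}$. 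The remaining steps (submultiplicativity of the operator norm and $1-x\le e^{-x}$) coincide with the paper's proof.
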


\begin{proof}
Here, we abbreviate superscript $(t)$ for simplicity.
Since the moment operator for $L$ layers can be represented as $M_{*_{1\leq i \leq L}\nu_i} = M_{\nu_L}M_{\nu_{L-1}}\cdots M_{\nu_2}M_{\nu_1}$, we can derive the following inequality:
\begin{equation}
\label{seq:conv_mom_op}
\begin{split}
    \| M_{*_{1\leq i \leq L}\nu_i} - M_{\rm Haar} \|_{\infty} 
    &= \left \|(M_{\nu_L} - M_{\rm Haar})(M_{\nu_{L-1}} - M_{\rm Haar}) \cdots (M_{\nu_1} - M_{\rm Haar})\right\|_{\infty} \\
    &\leq \prod_{i=1}^L \| M_{\nu_i}  - M_{\rm Haar} \|_{\infty} \\
    &=  \prod_{i=1}^L (1- \Delta_{\nu_i}) \\
    &\leq \exp\left(-\sum_{i=1}^L \Delta_{\nu_i} \right),
\end{split}
\end{equation}
where we use the equality $M_{\nu_i} M_{\rm Haar} = M_{\rm Haar} M_{\nu_i} = M_{\rm Haar}$ in the first line, the sub-multiplicity of the spectral norm in the second line, and the inequality $1-x\leq e^{-x}$ in the fourth line.
\end{proof}

This lemma shows that the moment operator monotonically approaches to that of the Haar measure as the circuit depth increases, and its approaching rate can be characterized by the spectral gap of each layer.
Since the approximate unitary design is defined based on the moment operator (Definition~\ref{def:mom_approx_unitary_design}), this lemma implies that there exists a circuit depth $L$, such that $\chi_L$ is an $\varepsilon$-approximate unitary $t$-design while $\chi_{L-1}$ is not. We call such a circuit depth $L$ as the formation depth of an $\varepsilon$-approximate $t$-design.
When the random circuit is the repetition of the same layer represented as $\nu$, Lemma~\ref{lem:basic} shows that an $\varepsilon$-approximate $t$-design is formed when the circuit depth satisfies 
\begin{equation}
    L \geq \left(\Delta_{\nu}^{(t)}\right)^{-1} (2 Nt \log d - \log \varepsilon),
\end{equation}
where $\Delta_{\nu}^{(t)}$ is the spectral gap of each layer $\nu$ and $d$ is the local dimension.

\section{Main results} \label{Ss:main_results}

\subsection{Setup and open questions} \label{Sss:setup_and_openQs}

Since this work comprehensively covers various circuit structures, we classify them into the following three types based on their essential configuration. By addressing each type collectively, we streamline the discussion and avoid overlaps. 
\begin{itemize}
    \item \textit{Single-layer-connected circuit} is a circuit where the two-qudit gates available in a single layer form a connected graph covering all sites. Here, we define the graph whose vertices represent the sites of qudits and edges indicate locations where two-qudit gates act with nonzero probability. The graph defined in such a way is said to be connected if every pair of vertices are connected by a path of edges.
    Importantly, this class does not require that all sites be deterministically connected in each layer; rather, the key condition is that the graph of possible two-qudit interactions in a single layer is connected across all sites. A representative example of this class is the 1D local random circuit.
    \item \textit{Multilayer-connected circuit} is the circuit that takes multiple layers for all sites to be connected. 
    A representative example of this class is the 1D brickwork circuit, where all sites are connected in two layers rather than a single layer. Furthermore, the circuits with fixed architectures, where the order and positions of the unitary gate applications are predetermined, are always multilayer-connected, since the number of available gates in a single layer is at most $\frac{N}{2}$, and they cannot form a connected graph. 
    \item \textit{Patchwork circuit} is the circuit with special geometry proposed in Refs.~\cite{schuster2024random,laracuente2024approximate}. This class of circuit is constructed by gluing together small patches of normal random circuits, either single-layer-connected circuits or multilayer-connected circuits. Haar random circuits with such structures are known to form approximate unitary designs in the circuit depth of $O(\log N)$ in $N$-qubit systems.
\end{itemize}

\begin{table}[t]
    \centering
    \begin{tabular}{C{3.6cm} |C{4.45cm} C{2.55cm} C{4cm} |C{3cm}}
    \hline
    & \multicolumn{3}{c|}{Previous works} & \red{Our work} \\ \cline{2-5}
         Circuit structure & \fontsize{8}{8} \selectfont \strut Haar \cite{brandao2016local,haferkamp2022random,chen2024incompressibility,haferkamp2021improved,mittal2023local,belkin2023approximate,schuster2024random,laracuente2024approximate} & \fontsize{8}{8} \selectfont \strut Inverse-closed \cite{brandao2016local}& \fontsize{8}{8} \selectfont \strut General non-Haar \cite{oszmaniec2021epsilon,oszmaniec2022saturation} & \fontsize{8}{8} \selectfont \strut \red{General non-Haar} \\
         \hline\hline
    1D local  & $L_{\rm lr}^{\rm H} = \widetilde{O}\left(N (Nt - \log \varepsilon)\right)$ & $\left(\Delta_{\rm loc}^{(t)}\right)^{-1} L_{\rm lr}^{\rm H}$  & $ (2N-2) \left(\Delta_{\rm loc}^{(t)}\right)^{-1} L_{\rm lr}^{\rm H}$ & \red{$2 \left(\Delta_{\rm loc}^{(t)}\right)^{-1} L_{\rm lr}^{\rm H}$} \\
         \hline
    1D parallel & $L_{\rm pr}^{\rm H} = \widetilde{O}\left(Nt  - \log \varepsilon\right)$ & $\left(\Delta_{\rm loc}^{(t)}\right)^{-1} L_{\rm pr}^{\rm H}$  & $4 \left(\Delta_{\rm loc}^{(t)}\right)^{-1}  L_{\rm pr}^{\rm H}$  & \red{$2\left(\Delta_{\rm loc}^{(t)}\right)^{-1} L_{\rm pr}^{\rm H}$} \\
         \hline
    All-to-all & $L_{\rm all}^{\rm H} = \widetilde{O}\left(N (Nt  - \log \varepsilon)\right)$  & $\left(\Delta_{\rm loc}^{(t)}\right)^{-1} L_{\rm all}^{\rm H}$  & $(N^2-N) \left(\Delta_{\rm loc}^{(t)}\right)^{-1}  L_{\rm all}^{\rm H} $ & \red{$2 \left(\Delta_{\rm loc}^{(t)}\right)^{-1} L_{\rm all}^{\rm H}$} \\
         \hline
    \makecell{\strut \fontsize{9.5}{10} General connectivity \\ \fontsize{7}{8}\selectfont \strut (edge number $|E|=\Omega(N)$) } &  $L_{G}^{\rm H} = \widetilde{O}\left(|E| (Nt  - \log \varepsilon)\right)$  & $\left(\Delta_{\rm loc}^{(t)}\right)^{-1} L_{G}^{\rm H}$ &  $|E| \left(\Delta_{\rm loc}^{(t)}\right)^{-1}  L_{G}^{\rm H}$ & \red{ $2\left(\Delta_{\rm loc}^{(t)}\right)^{-1} L_{G}^{\rm H}$}\\
         \hline \hline
    1D brickwork  & $L_{\rm bw}^{\rm H} = \widetilde{O}\left(Nt  - \log \varepsilon\right)$ & - & - & \red{$ \left(\Delta_{\rm loc}^{(t)}\right)^{-2} L_{\rm bw}^{\rm H}$} \\
    \hline
    \makecell{\strut \fontsize{9.5}{10} Fixed architecture \\ \fontsize{7}{8}\selectfont \strut (complete, $l$-layer-connected)} & $L_{A}^{\rm H} = \widetilde{O}\left(Nt  - \log \varepsilon\right)$ & - & - & \red{$ \left(\Delta_{\rm loc}^{(t)}\right)^{-l} L_{A}^{\rm H}$} \\
    \hline
    \makecell{\strut \fontsize{9.5}{10} Fixed architecture\\ \fontsize{7}{8}\selectfont \strut (arbitrary, $l$-layer-connected)} & \makecell{\strut \fontsize{8.5}{8.5} $L_{A^\prime}^{\rm H} = O \left(C_{N,t} (Nt -\log \varepsilon)\right)$  \\ \fontsize{6}{6}\selectfont \strut $C_{N,t} \equiv \exp \left(O(\log\log N \cdot \log \log t)\right)$}  & - & - & \red{$ \left(\Delta_{\rm loc}^{(t)}\right)^{-l} L_{A^\prime}^{\rm H}$} \\
    \hline\hline
    \makecell{\strut \fontsize{9.5}{10} Patchwork* \\ \fontsize{7}{8}\selectfont \strut (each patch is 1D brickwork)} & $L_{\rm pw}^{\rm H} = O(\log \frac{N}{\varepsilon} \cdot t\ \polylog t)$   & - & -& \red{$\left(\Delta_{\rm loc}^{(t)}\right)^{-2} L_{\rm pw}^{\rm H}$} \\
    \hline
    \end{tabular}
    \caption{Best known upper bounds for the circuit depths required to form $\varepsilon$-approximate unitary $t$-design in $N$-qubit system. Each row corresponds to the different circuit structures, and each column represents the choice of the local randomizer. While the upper bounds for Haar random circuits have been obtained for various circuit structures in previous works \cite{brandao2016local,haferkamp2022random,chen2024incompressibility,haferkamp2021improved,mittal2023local,belkin2023approximate,schuster2024random,laracuente2024approximate}, the bounds for non-Haar random circuits have remained largely unexplored or non-optimal. 
    In this work, we comprehensively investigate various circuit architectures and reveal that the formation depths in non-Haar random circuits can be upper bounded by the constant factor multiples of those in the corresponding Haar random circuits, as shown in the column of our work, highlighted with the red text. These results provide the solution to all the open problems raised in Sec.~\ref{Sss:setup_and_openQs}. 
    The prefactor is defined with the local spectral gap $\Delta_{\rm loc}^{(t)}$, which is an $N$-independent constant that quantifies the strength of the local randomizer. When the local randomizer contain a universal gate set, it can be further shown to scales as $\Delta_{\rm loc}^{(t)} =\Omega ((\log t)^{-2})$ with respect to $t$ (see Sec.~\ref{Ss:local_unitary} for the detailed discussion). 
    An asterisk in the patchwork circuit row indicates that the definition of approximate unitary design used there differs from those in the other rows: in that row, a relative error \(\varepsilon\)-approximate \(t\)-design is used, whereas the other rows employs Definition~\ref{def:mom_approx_unitary_design}, which is based on the moment operator.
    The order notation $\widetilde{O}(\cdot)$ indicates that a $\polylog t$ factor is ignored. 
    We note that the depth scalings shown in the column of the Haar random circuits are only valid for restricted values of $t$: $t \leq 2^{\frac{2}{5}N}$ in all architectures except for the 1D patchwork circuit \cite{chen2024incompressibility}, and $t \leq \poly (N)$ in the 1D patchwork circuit \cite{schuster2024random}. 
    A detailed explanation of each circuit structure is provided in the following sections, and thus is abbreviated here: the single-layer-connected circuit (i.e., 1D local, 1D parallel, all-to-all, and general connectivity circuits) is discussed in Sec.~\ref{Ss:single_layer_repetition}, the multilayer-connected circuit (i.e., 1D brickwork circuit and the circuits with fixed architectures) is addressed in Sec.~\ref{Ss:multilayer_circuit}, and the patchwork circuit is discussed in Sec.~\ref{Ss:patchwork}.
    }
    \label{tab:summary_supp}
\end{table}

Regarding single-layer-connected circuits, there are a few previous works with non-Haar local ensembles. 
In Ref.~\cite{brandao2016local}, it was shown that 1D local and parallel circuits with \textit{inverse-closed} two-qudit unitary ensembles form unitary $t$-designs in the same circuit depth order as the corresponding Haar random circuits. Here, a two-qudit unitary ensemble is said to be inverse-closed when it contains a certain unitary $U$ and its inverse $U^\dag$ as a pair with equal probabilities. Furthermore, Refs.~\cite{oszmaniec2021epsilon,oszmaniec2022saturation} investigated the situation where the local gates are chosen from general non-Haar ensembles, which are not necessarily inverse-closed. However, the depth upper bounds obtained in Refs.~\cite{oszmaniec2021epsilon,oszmaniec2022saturation} are $\poly N$-factor multiples of those for the corresponding Haar random circuits, and therefore are not optimal. 
Thus, it remains an open question whether unitary designs can be formed within the same circuit depth order as the corresponding Haar random circuits.

For multilayer-connected circuits, the rate of unitary design formation with non-Haar local ensembles has not been investigated in previous works. The main technical challenge in this class is that we need to bound the spectral gap for a multiple-layer block of the circuit. While it can be done with the detectability lemma when the local gates are chosen Haar randomly, this lemma cannot be directly applied to non-Haar random circuits. 

Regarding the patchwork circuits, the unitary design formation depth with non-Haar local randomizers has not been clarified, due to the lack of the investigations on the other classes of circuits (i.e., single-layer-connected and multilayer-connected circuits). This is because the formation depth in the patchwork circuits highly depends on the randomness generation rate of each small patch, as discussed in Refs.~\cite{schuster2024random,laracuente2024approximate}. 
It remains a significant task to elucidate whether an approximate unitary design can be formed in $O(\log N)$-depth circuit even with non-Haar local randomizers.

\subsection{Our main results}

In this work, we reveal that circuit depths required to form unitary designs in non-Haar random circuits are upper bounded by  those of the corresponding Haar random circuits up to constant prefactors. 
As shown in Table~\ref{tab:summary_supp}, the prefactors are defined as $\poly((\Delta_{\rm loc}^{(t)})^{-1})$, where $\Delta_{\rm loc}^{(t)}$ is the spectral gap of the two-qudit unitary ensemble, and do not depend of the system size $N$. 
When the two-qudit ensembles contain universal gate sets, it can be further shown that $\Delta_{\rm loc}^{(t)} = \Omega ((\log t)^{-2})$, which means that the prefactor $\poly((\Delta_{\rm loc}^{(t)})^{-1})$ scales as $\polylog t$ (see Sec.~\ref{Ss:local_unitary} for a detailed discussion).
While Table~\ref{tab:summary_supp} only shows the required circuit depths for $N$-qubit systems, our results are also valid for general $N$-qudit systems with local dimension $d > 2$. These results solve all the open questions raised in the previous subsection, by investigating all three classes of circuit structures, and providing the optimal upper bounds for the unitary design formation depths up to constant factors.

In deriving these results, we develop new techniques to bound the formation depth in non-Haar random circuits.
Our main technical contributions are twofold.
First, we significantly improve the spectral gap bound for single-layer-connected circuits by utilizing the Cauchy–Schwarz inequality (Lemma~\ref{lem:cs_like}).
This improvement leads to an optimal bound for the $t$-design formation depth up to a constant factor, eliminating the additional $\poly N$ factor present in previous works \cite{oszmaniec2021epsilon,oszmaniec2022saturation}.
Second, we introduce a method to lower bound the spectral gap of the multilayer block of non-Haar random circuit. For that purpose, we extend the applicability of the detectability lemma and its variant from Haar random circuits to general non-Haar random circuits (Proposition~2 in Appendix B and Lemma~\ref{lem:fixed_decomp_into_two_nonHaar}).
These techniques would serve as fundamental tools for future investigations into general non-Haar random circuits.

These results have significant implications for both practical applications and fundamental physics.
On the application side, we quantify the rate of global randomness generation in real-world experiments, where local gates are typically chosen from non-Haar random ensembles, such as discrete gate sets.
Our findings improve the flexibility of random circuit construction and ensure the robustness of the randomness generation rate against local gate deformations, for example due to coherent errors.
Specific tasks that would benefit from our results include randomized benchmarking, random circuit sampling, and randomized measurements.
On the fundamental physics side, our results highlight the universal emergence of the chaotic phenomena in random circuits, regardless of the choice of the local randomness.
Since the notion of unitary designs is closely tied to various chaotic phenomena in quantum many-body systems, such as complexity growth, information scrambling, decoupling of initial correlation, and quantum thermalization, our results provide strong evidence that these phenomena occur universally at the same circuit depth order.

\section{Single-layer-connected circuit} \label{Ss:single_layer_repetition}
\begin{figure*}[]
    \centering
    \includegraphics[width=0.95\textwidth]{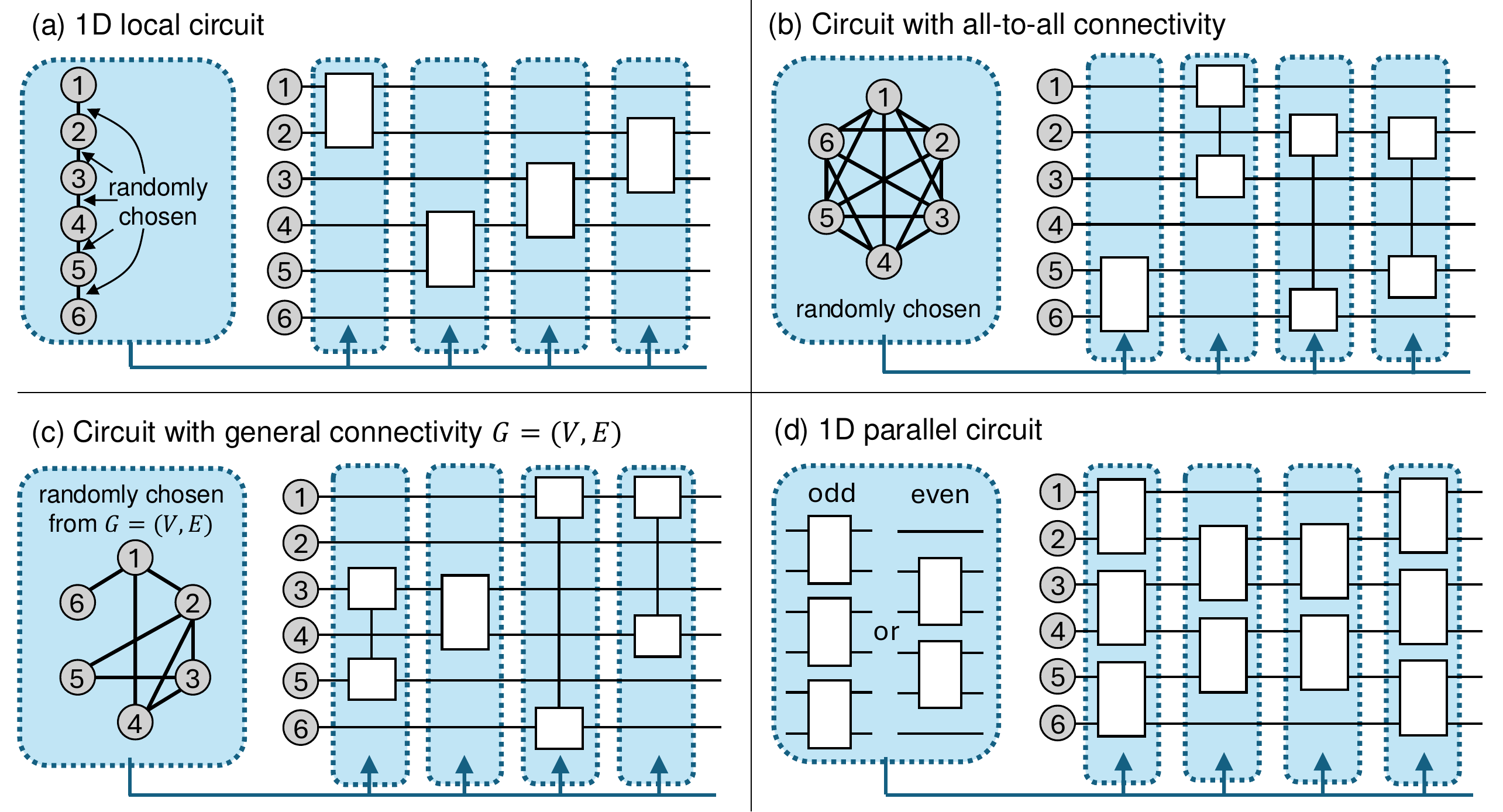}
    \caption{The schematics for (a) 1D local circuit, (b) all-to-all circuit, (c) circuit with general connectivity, and (d) 1D parallel circuit.
    All of these circuits are classified to single-layer-connected circuits, where the graph representing the interaction for a single layer covers all sites.
    }
    \label{fig:rep}
\end{figure*}

\subsection{Definition and examples of single-layer-connected circuit} \label{Sss:sing_def_ex}

The single-layer-connected circuit is a random circuit in which the two-qudit gates available in a single layer form a connected graph covering all sites.  
The unitary ensemble for a single layer is determined by a set of protocols $\{\eta\}$ and their corresponding application probabilities $\{q_\eta\}$.  
Each protocol $\eta$ specifies a set of two-qudit pairs where unitary gates are applied, along with the local unitary ensembles associated with each pair.  
The set of two-qudit pairs for protocol $\eta$ is denoted as $c(\eta) \equiv \{\{i_1, j_1\}, \{i_2, j_2\}, \dots, \{i_{n_\eta}, j_{n_\eta}\}\}$, where $n_\eta$ is the number of two-qudit unitary gates applied in protocol $\eta$.  
The moment operator for such a single-layer ensemble $\nu$ is given by  
\begin{equation}
\label{seq:sin_rep_circuit}
\begin{split}
    M_{\nu}^{(t)} &= \sum_{\eta} q_\eta M_\eta^{(t)}, \\
    M_\eta^{(t)} &\equiv \bigotimes_{\{i,j\} \in c(\eta)} M^{(t)}_{i, j} \otimes \mathbbm{1}_{\overline{c(\eta)}},
\end{split}
\end{equation}
where $M^{(t)}_{i,j}$ represents the moment operator for each local two-qudit ensemble, and $\mathbbm{1}_{\overline{c(\eta)}}$ is the identity operator on the $N-2n_\eta$ qudits that are not included in $c(\eta)$.  

We further introduce a local Haar-random ensemble corresponding to $\nu$.  
First, we define a protocol $\eta^{\rm H}$ in which all two-qudit local ensembles in $\eta$ are replaced by local Haar ensembles.  
Here, the superscript $\rm H$ represents local Haar randomness.  
Then, by replacing every protocol $\eta$ contained in the ensemble $\nu$ with its local Haar-random counterpart $\eta^{\rm H}$, we can define the local Haar-random ensemble $\nu^{\rm H}$ corresponding to $\nu$. 
The moment operator for $\nu^{\rm H}$ is given by  
\begin{equation}
\label{seq:sin_rep_circuit_lHaar}
\begin{split}
    M_{\nu^{\rm H}}^{(t)} &= \sum_{\eta^{\rm H}} q_{\eta^{\rm H}} M_{\eta^{\rm H}}^{(t)}, \\
    M_{\eta^{\rm H}}^{(t)} &\equiv \bigotimes_{\{i,j\} \in c(\eta)} P^{(t)}_{i, j} \otimes \mathbbm{1}_{\overline{c(\eta)}},
\end{split}
\end{equation}
where $q_{\eta^{\rm H}} = q_{\eta}$ for corresponding protocols $\eta$ and $\eta^{\rm H}$, and $P^{(t)}_{i,j}$ represents the projector onto the invariant subspace $\mathrm{Inv}(\mathrm{U}(d^2)^{\otimes t, t})$ of the $i$-th and $j$-th qudits with local dimension $d$.

Many of random circuits studied in the previous works are classified to the single-layer-connected circuit.
We here raise the representative examples illustrated in Figs.~\ref{fig:rep}(a-d).
\begin{itemize}
    \item 1D local circuit: In each layer, choose a single pair of neighboring sites uniformly at random and apply two-qudit gate (i.e., $q_\eta = \frac{1}{N-1}$, $n_\eta=1$ for all protocols $\eta$ in Eqs.~(\ref{seq:sin_rep_circuit}) and (\ref{seq:sin_rep_circuit_lHaar})). The moment operators for Haar and non-Haar random 1D local circuits are represented as follows:
    \begin{align}
        M_{\nu_{\rm lr}^{\rm H}}^{(t)} &\equiv \sum_{i=1}^{N-1} \frac{1}{N-1} P_{i,i+1}^{(t)} \otimes \mathbbm{1}_{\overline{i,i+1}}, \\ 
        M_{\nu_{\rm lr}}^{(t)} &\equiv \sum_{i=1}^{N-1} \frac{1}{N-1} M_{i,i+1}^{(t)} \otimes \mathbbm{1}_{\overline{i,i+1}}, 
    \end{align}
    The unitary design formation depth of Haar random 1D local circuit was studied e.g., in Refs.~\cite{brandao2016local,haferkamp2022random,chen2024incompressibility}.
    \item Circuit with all-to-all connectivity: In each layer, choose a single pair of qudits uniformly at random and apply two-qudit gate (i.e., $q_\eta = \frac{2}{N(N-1)}$, $n_\eta=1$ for all protocols $\eta$). The moment operators for Haar and non-Haar random all-to-all circuits are as follows:
    \begin{align}
        M_{\nu_{\rm all}^{\rm H}}^{(t)} &\equiv \sum_{1\leq i < j \leq N} \frac{2}{N(N-1)} P_{i,j}^{(t)} \otimes \mathbbm{1}_{\overline{i,j}}, \\ 
        M_{\nu_{\rm all}}^{(t)} &\equiv \sum_{1\leq i < j \leq N} \frac{2}{N(N-1)} M_{i,j}^{(t)} \otimes \mathbbm{1}_{\overline{i,j}}.   
    \end{align}
    The unitary design formation in Haar random all-to-all circuit was studied e.g., in Refs.~\cite{haferkamp2021improved,chen2024incompressibility}.
    \item Circuit with general connectivity: The connectivity of circuit is characterized with graph $G(V,E)$, where $V$ is the set of vertices corresponding to each site (i.e., $|V|=N$) and $E$ is the set of the edges connecting a pair of vertices. In each layer, choose a single edge from $E$ uniformly at random, and apply two-qudit gate (i.e., $q_\eta = \frac{1}{|E|}$, $n_\eta =1$ for all protocols $\eta$). The moment operators for Haar and non-Haar random circuit on graph $G(V,E)$ are as follows:
    \begin{align}
        M_{\nu_{G}^{\rm H}}^{(t)} &\equiv \sum_{\{i,j\}\in E } \frac{1}{|E|} P_{i,j}^{(t)} \otimes \mathbbm{1}_{\overline{i,j}}, \\ 
        M_{\nu_{G}}^{(t)} &\equiv \sum_{\{i,j\}\in E} \frac{1}{|E|} M_{i,j}^{(t)} \otimes \mathbbm{1}_{\overline{i,j}}.   
    \end{align}
    While the depth required for \(t\)-design formation clearly depends on the specific structure of the connectivity graph, Ref.~\cite{mittal2023local} classified them into several types and derived the depth upper bounds for every classes. The upper bound shown in Table~\ref{tab:summary_supp} corresponds to the case where the graph has a spanning tree of constant depth and height. The results of Ref.~\cite{chen2024incompressibility} are also incorporated to improve the scaling with respect to \(t\). For a broader discussion including other connectivity graph types, see Ref.~\cite{mittal2023local}.
    \item 1D parallel circuit: In each layer, apply unitary gates $U_{1,2}\otimes U_{3,4}\otimes \cdots \otimes U_{N-1,N}$ or $U_{2,3}\otimes U_{4,5}\otimes \cdots \otimes U_{N-2,N-1}$ randomly (i.e., $q_\eta = \frac{1}{2}$ for these two protocols, and $n_\eta=\frac{N}{2}$ or $\frac{N}{2}-1$). We here assume the number of qudits $N$ is even, while the the same discussion applies to the case of odd number of qudits.
    The moment operators for Haar and non-Haar 1D parallel circuits are as follows:
    \begin{align}
        M_{\nu_{\rm pr}^{\rm H}}^{(t)} &\equiv \frac{1}{2} P_{1,2}^{(t)} \otimes P_{3,4}^{(t)} \otimes \dots \otimes P_{N-1,N}^{(t)} + \frac{1}{2} P_{2,3}^{(t)} \otimes P_{4,5}^{(t)} \otimes \dots \otimes P_{N-2,N-1}^{(t)}, \\ 
        M_{\nu_{\rm pr}}^{(t)} &\equiv \frac{1}{2} M_{1,2}^{(t)} \otimes M_{3,4}^{(t)} \otimes \dots \otimes M_{N-1,N}^{(t)} + \frac{1}{2} M_{2,3}^{(t)} \otimes M_{4,5}^{(t)} \otimes \dots \otimes M_{N-2,N-1}^{(t)}.
    \end{align}
    The unitary design formation in Haar random 1D parallel circuit was studied e.g., in Ref.~\cite{brandao2016local}.
\end{itemize}

\subsection{Unitary design formation in non-Haar random circuit}
We here provide the upper bound for the unitary design formation depth in non-Haar random single-layer-connected circuits.
The important notion in bounding the formation depth is the spectral gap of the local randomizers.
The local spectral gap for the protocol $\eta$ is defined as 
\begin{equation}
\label{seq:loc_spec_gap_eta}
 \Delta_{\rm loc, \eta}^{(t)} \equiv 1 - \max_{\{i,j\} \in c(\eta)} \left\|M^{(t)}_{i,j}- P^{(t)}_{i,j} \right\|_{\infty},
\end{equation}
where the maximization is taken over all the two-qudit pairs. By using $\Delta_{\rm loc, \eta}^{(t)}$, the local spectral gap for a single layer unitary ensemble $\nu$ is defined as
\begin{equation}
    \label{seq:loc_spec_gap_nu}
    \Delta_{\rm loc, \nu}^{(t)} \equiv \min_{\eta} \Delta_{\rm loc, \eta}^{(t)},
\end{equation}
where the minimization is taken over all the protocols $\{\eta\}$ constituting the ensemble $\nu$.
With this quantity, we can bound the spectral gap of $\nu$ as follows:
\begin{proposition} \label{prop:repetition_circuit} 
Let $\nu$ be a unitary ensemble for local non-Haar single-layer-connected circuit, and $\nu^{\rm H}$ be the ensemble of corresponding local Haar random circuit.
Then, the gap of moment operator of non-Haar random circuit can be lower bounded as 
\begin{equation}
\label{seq:slrep_thm}
    \Delta_{\nu}^{(t)} \geq \frac{\Delta_{\rm loc,\nu}^{(t)}}{2} \Delta_{\nu^{\rm H}}^{(t)},
\end{equation}
where $\Delta_{\rm loc,\nu}^{(t)}$ is defined as Eq.~\eqref{seq:loc_spec_gap_nu}.
\end{proposition}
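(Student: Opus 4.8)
The plan is to work with the operator-norm form of the gap, $\Delta_\nu^{(t)} = 1 - \|M_\nu^{(t)} - P^{(t)}\|_\infty$, and to treat each protocol ensemble $\eta$ separately, exploiting that it carries a natural ``Haar core'' $\Pi_\eta := M_{\eta^{\rm H}}^{(t)} = \bigotimes_{\{i,j\}\in c(\eta)} P_{i,j}^{(t)}\otimes\mathbbm{1}_{\overline{c(\eta)}}$. Since $c(\eta)$ is a matching, the factors $P_{i,j}^{(t)}$ act on disjoint qudits, so $\Pi_\eta$ is an orthogonal projector; and the local fixed-point identities $M_{i,j}^{(t)}P_{i,j}^{(t)} = P_{i,j}^{(t)}M_{i,j}^{(t)} = P_{i,j}^{(t)}$ (valid because $M_{i,j}^{(t)}$ is a two-qudit moment operator, hence fixes $\ket{\sigma}$ for every $\sigma\in S_t$) tensor up to $M_\eta^{(t)}\Pi_\eta = \Pi_\eta M_\eta^{(t)} = \Pi_\eta$. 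Consequently $M_\eta^{(t)} - \Pi_\eta = (\mathbbm{1}-\Pi_\eta)M_\eta^{(t)}(\mathbbm{1}-\Pi_\eta)$: the non-Haar part of each protocol lives entirely in $\mathrm{ran}(\mathbbm{1}-\Pi_\eta)$. The same identities give $P^{(t)}M_\nu^{(t)} = M_\nu^{(t)}P^{(t)} = P^{(t)}$, so $\|M_\nu^{(t)} - P^{(t)}\|_\infty = \sup\{\,\|M_\nu^{(t)}\ket{\psi}\| : \ket{\psi}\in\mathrm{Inv}(\mathrm{U}(d^N)^{\otimes t,t})^\perp,\ \|\psi\|=1\,\}$, which is the quantity I will bound.

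The first real step is the $N$-independent protocol bound $\|M_\eta^{(t)} - \Pi_\eta\|_\infty \le 1 - \Delta_{\rm loc,\eta}^{(t)} \le 1 - \Delta_{\rm loc,\nu}^{(t)}$. Writing $M_{i,j}^{(t)} = P_{i,j}^{(t)} + R_{i,j}$ with $R_{i,j}$ supported on $\mathrm{ran}(\mathbbm{1}-P_{i,j}^{(t)})$ and $\|R_{i,j}\|_\infty\le 1$ (as $M_{i,j}^{(t)}$ is a contraction), I expand the tensor product over the disjoint pairs of $c(\eta)$: abbreviating the two-qudit operators on the $k$-th pair as $R_k$ and $P_k$, $M_\eta^{(t)} - \Pi_\eta = \sum_{\emptyset\ne S\subseteq c(\eta)}\bigl(\bigotimes_{k\in S}R_k\bigr)\otimes\bigl(\bigotimes_{k\notin S}P_k\bigr)\otimes\mathbbm{1}_{\overline{c(\eta)}}$. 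Since the $n_\eta$ nontrivial factors act on disjoint sites, the terms for distinct $S$ have mutually orthogonal domains and mutually orthogonal ranges, so $M_\eta^{(t)} - \Pi_\eta$ is block-diagonal and $\|M_\eta^{(t)} - \Pi_\eta\|_\infty = \max_{\emptyset\ne S}\prod_{k\in S}\|R_k\|_\infty \le \max_k\|R_k\|_\infty = 1 - \Delta_{\rm loc,\eta}^{(t)}$, the last inequality because every factor is $\le 1$ and $S$ is nonempty. This is the key point: a naive telescoping over the $n_\eta\le N/2$ factors would only give a prefactor of order $N$, which is exactly the $\poly(N)$ loss in earlier works.

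Next I fix a unit $\ket{\psi}\in\mathrm{Inv}(\mathrm{U}(d^N)^{\otimes t,t})^\perp$. For each $\eta$, the orthogonal splitting $\ket{\psi} = \Pi_\eta\ket{\psi} + (\mathbbm{1}-\Pi_\eta)\ket{\psi}$ together with $\Pi_\eta M_\eta^{(t)} = M_\eta^{(t)}\Pi_\eta = \Pi_\eta$ gives $M_\eta^{(t)}\ket{\psi} = \Pi_\eta\ket{\psi} + (\mathbbm{1}-\Pi_\eta)M_\eta^{(t)}(\mathbbm{1}-\Pi_\eta)\ket{\psi}$ as an orthogonal sum, so by the previous bound $\|M_\eta^{(t)}\ket{\psi}\|^2 \le \|\Pi_\eta\ket{\psi}\|^2 + (1-\Delta_{\rm loc,\nu}^{(t)})^2\bigl(1 - \|\Pi_\eta\ket{\psi}\|^2\bigr)$. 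Applying the Cauchy--Schwarz-type inequality of Lemma~\ref{lem:cs_like} in the form $\|\sum_\eta q_\eta M_\eta^{(t)}\ket{\psi}\|^2 \le \sum_\eta q_\eta\|M_\eta^{(t)}\ket{\psi}\|^2$, then using $\sum_\eta q_\eta\|\Pi_\eta\ket{\psi}\|^2 = \bra{\psi}M_{\nu^{\rm H}}^{(t)}\ket{\psi} \le \|M_{\nu^{\rm H}}^{(t)} - P^{(t)}\|_\infty = 1 - \Delta_{\nu^{\rm H}}^{(t)}$ (valid since $M_{\nu^{\rm H}}^{(t)}$ is a convex combination of the projectors $\Pi_\eta$, hence positive semidefinite and invariant on $\mathrm{Inv}^\perp$), I obtain $\|M_\nu^{(t)}\ket{\psi}\|^2 \le 1 - \Delta_{\nu^{\rm H}}^{(t)}\Delta_{\rm loc,\nu}^{(t)}\bigl(2 - \Delta_{\rm loc,\nu}^{(t)}\bigr)$. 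Taking the supremum over $\ket{\psi}$, then using $\sqrt{1-x}\le 1-x/2$ and $2 - \Delta_{\rm loc,\nu}^{(t)}\ge 1$, yields $\Delta_\nu^{(t)} \ge \tfrac12\Delta_{\rm loc,\nu}^{(t)}\Delta_{\nu^{\rm H}}^{(t)}$, which is the claim.

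The step I expect to be the crux is the second paragraph --- the $N$-independent bound $\|M_\eta^{(t)} - \Pi_\eta\|_\infty \le 1 - \Delta_{\rm loc,\nu}^{(t)}$. The soft triangle-inequality route loses a factor of $n_\eta$ and is useless; what rescues it is that $R_{i,j}$ is supported on $\mathrm{ran}(\mathbbm{1}-P_{i,j}^{(t)})$ and that the gates of a protocol sit on disjoint qudits, which forces the block-diagonal structure and collapses the sum over $S$ to a single factor. A minor but genuine subtlety throughout is that $M_\nu^{(t)}$ need not be Hermitian, so I phrase everything via the vector norm $\|M_\nu^{(t)}\ket{\psi}\|$ on $\mathrm{Inv}^\perp$ (which does equal the operator norm) rather than via quadratic forms, and apply Lemma~\ref{lem:cs_like} at the level of the vectors $M_\eta^{(t)}\ket{\psi}$ rather than of the operators.
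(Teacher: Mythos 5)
Your proof is correct and follows essentially the same route as the paper: the convexity step via Lemma~\ref{lem:cs_like} and the per-protocol bound are exactly the paper's ingredients, with your orthogonal subset expansion of $M_\eta^{(t)}-M_{\eta^{\rm H}}^{(t)}$ being an equivalent repackaging of Lemma~\ref{lem:ltog_gap} and your vector-norm argument on $\mathrm{Inv}(\mathrm{U}(d^N)^{\otimes t,t})^\perp$ a rephrasing of the paper's Gram-matrix inequality Eq.~\eqref{seq:slrep_them_prf1}. The final algebra (square root versus $(1-\Delta_\nu)^2\geq 1-2\Delta_\nu$) yields the same factor of $1/2$, so no gap remains.
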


In the proof of this theorem, we use two key lemmas, Lemmas~\ref{lem:ltog_gap} and \ref{lem:cs_like}.
Lemma~\ref{lem:ltog_gap} relates the moment operators $M_\eta^{(t)}$ and $M_{\eta^{\rm H}}^{(t)}$ as follows:
\begin{lemma} \label{lem:ltog_gap}
Consider a protocol $\eta$ where random two-qudit gates are applied to the set of $n_\eta$ two-qudit pairs $c(\eta) \equiv \{\{i_1, j_1\}, \{i_2, j_2\}, \dots, \{i_{n_\eta}, j_{n_\eta}\}\}$ without any overlaps. 
Then, for the moment operators $M_\eta^{(t)}$ and $M_{\eta^{\rm H}}^{(t)}$, defined as Eqs.~\eqref{seq:sin_rep_circuit} and \eqref{seq:sin_rep_circuit_lHaar}, respectively, the following inequality is satisfied:
\begin{equation}
    M_{\eta}^{(t)} M_{\eta}^{(t)\dag} \leq \left(1 - \Delta_{\rm loc,\eta}^{(t)}\right)^2 \mathbbm{1} +  \left[1 - \left(1 - \Delta_{\rm loc,\eta}^{(t)}\right)^2\right] M_{\eta^{\rm H}}^{(t)},
\end{equation}
where the local spectral gap $\Delta_{\rm loc,\eta}^{(t)}$ is defined as Eq.~\eqref{seq:loc_spec_gap_eta}.
\end{lemma}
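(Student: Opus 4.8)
The plan is to prove Lemma~\ref{lem:ltog_gap} by working on each two-qudit pair separately and exploiting the tensor-product structure of $M_\eta^{(t)}$. Since the pairs in $c(\eta)$ are disjoint, we may write $M_\eta^{(t)} = \bigotimes_{\{i,j\}\in c(\eta)} M_{i,j}^{(t)} \otimes \mathbbm{1}_{\overline{c(\eta)}}$, and similarly for $M_{\eta^{\rm H}}^{(t)}$ with $P_{i,j}^{(t)}$ in place of $M_{i,j}^{(t)}$. The key single-pair fact I would establish first is that, writing $M_{i,j}^{(t)} = P_{i,j}^{(t)} + R_{i,j}^{(t)}$ with $P^{(t)}R^{(t)} = R^{(t)}P^{(t)} = 0$ and $\|R_{i,j}^{(t)}\|_\infty \le 1-\Delta_{\rm loc,\eta}^{(t)} =: g_{i,j} \le g := 1-\Delta_{\rm loc,\eta}^{(t)}$, one has
\begin{equation}
    M_{i,j}^{(t)} M_{i,j}^{(t)\dag} = P_{i,j}^{(t)} + R_{i,j}^{(t)} R_{i,j}^{(t)\dag} \preccurlyeq P_{i,j}^{(t)} + g^2 (\mathbbm{1} - P_{i,j}^{(t)}) = g^2 \mathbbm{1} + (1-g^2) P_{i,j}^{(t)},
\end{equation}
using $M M^\dag = (P+R)(P+R^\dag) = P + RR^\dag$ since cross terms vanish by the orthogonality of the supports, and $RR^\dag \preccurlyeq g^2(\mathbbm{1}-P)$ because $RR^\dag$ is supported on the orthocomplement of the invariant subspace. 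This is exactly the claimed bound for the case $n_\eta = 1$.

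Next I would lift this to the tensor product. Taking the tensor product of the single-pair inequalities over all $\{i,j\}\in c(\eta)$ (and tensoring the identity on $\overline{c(\eta)}$), and using that all operators involved are positive semidefinite so that $A \preccurlyeq B$ and $A' \preccurlyeq B'$ imply $A\otimes A' \preccurlyeq B\otimes B'$, gives
\begin{equation}
    M_\eta^{(t)} M_\eta^{(t)\dag} \preccurlyeq \bigotimes_{\{i,j\}\in c(\eta)} \Bigl( g^2 \mathbbm{1}_{i,j} + (1-g^2) P_{i,j}^{(t)} \Bigr) \otimes \mathbbm{1}_{\overline{c(\eta)}}.
\end{equation}
Expanding the tensor product over the two terms in each factor yields a sum over subsets $S \subseteq c(\eta)$ of operators of the form $g^{2(n_\eta - |S|)}(1-g^2)^{|S|}$ times a projector $\bigotimes_{\{i,j\}\in S} P_{i,j}^{(t)} \otimes \mathbbm{1}$. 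Every such projector is $\preccurlyeq \mathbbm{1}$, and the full-subset term $S = c(\eta)$ gives exactly $(1-g^2)^{n_\eta}$ times $\bigotimes_{\{i,j\}} P_{i,j}^{(t)} \otimes \mathbbm{1} = M_{\eta^{\rm H}}^{(t)}$. The cleanest route is to bound all the non-full-subset terms by replacing their projectors with $\mathbbm{1}$: this gives $\bigl(\sum_{S \subsetneq c(\eta)} g^{2(n_\eta-|S|)}(1-g^2)^{|S|}\bigr)\mathbbm{1} = \bigl((g^2 + (1-g^2))^{n_\eta} - (1-g^2)^{n_\eta}\bigr)\mathbbm{1} = (1 - (1-g^2)^{n_\eta})\mathbbm{1}$, and keeping the full-subset term as $(1-g^2)^{n_\eta} M_{\eta^{\rm H}}^{(t)}$. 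This yields $M_\eta^{(t)}M_\eta^{(t)\dag} \preccurlyeq g^2 \mathbbm{1} + (1-g^2) M_{\eta^{\rm H}}^{(t)}$ — wait, one must check this matches the stated bound: the stated bound has coefficient $(1-\Delta_{\rm loc,\eta}^{(t)})^2 = g^2$ on $\mathbbm{1}$ and $1 - g^2$ on $M_{\eta^{\rm H}}^{(t)}$, whereas the naive bookkeeping above produced $1 - (1-g^2)^{n_\eta}$ and $(1-g^2)^{n_\eta}$. Since $n_\eta \ge 1$ and $0 \le g \le 1$, we have $(1-g^2)^{n_\eta} \le 1-g^2$, hence $1 - (1-g^2)^{n_\eta} \le 1 - (1-g^2)$ is false in general; rather the correct comparison is that the tighter tensor bound implies the looser stated bound, since $M_{\eta^{\rm H}}^{(t)} \preccurlyeq \mathbbm{1}$ lets us absorb the difference: $(1-(1-g^2)^{n_\eta})\mathbbm{1} + (1-g^2)^{n_\eta} M_{\eta^{\rm H}}^{(t)} \preccurlyeq g^2 \mathbbm{1} + (1-g^2) M_{\eta^{\rm H}}^{(t)}$ iff $(1-g^2) - (1-g^2)^{n_\eta} \ge -\bigl((1-g^2) - (1-g^2)^{n_\eta}\bigr)$ on the support of $M_{\eta^{\rm H}}^{(t)}$... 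I would simply present the $n_\eta=1$-style bound directly on the product using that $\bigotimes P_{i,j}^{(t)} \otimes \mathbbm{1} \preccurlyeq$ each partial projector, so the partial-subset terms telescope against $\mathbbm{1}$, giving precisely the claimed form with the single $g^2 = (1-\Delta_{\rm loc,\eta}^{(t)})^2$.

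The main obstacle is the combinatorial bookkeeping in the expansion of the tensor product and making sure the coefficients collapse to exactly $(1-\Delta_{\rm loc,\eta}^{(t)})^2$ rather than some $n_\eta$-dependent expression; the resolution is that every partial projector dominates the full projector ($\bigotimes_{\{i,j\}\in S} P_{i,j}^{(t)}\otimes\mathbbm{1} \succcurlyeq M_{\eta^{\rm H}}^{(t)}$ for $S \subseteq c(\eta)$) and is dominated by $\mathbbm{1}$, so one can choose for each subset term whether to bound it by $\mathbbm{1}$ or by $M_{\eta^{\rm H}}^{(t)}$; bounding the empty and all non-full subsets by $\mathbbm{1}$ and noting $\sum_{S}\binom{n_\eta}{|S|} g^{2(n_\eta-|S|)}(1-g^2)^{|S|} = 1$ gives the $\mathbbm{1}$-coefficient as $1$ minus the full-subset coefficient. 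To land exactly on the stated inequality I would instead group terms as follows: factor $M_\eta^{(t)}M_\eta^{(t)\dag} = \bigotimes(P_{i,j} + R_{i,j}R_{i,j}^\dag)$, and prove by a short induction on $n_\eta$ that $\bigotimes_{k=1}^{n}(P_k + Q_k) \preccurlyeq g^2\mathbbm{1} + (1-g^2)\bigotimes_k P_k$ whenever $0 \preccurlyeq Q_k \preccurlyeq g^2(\mathbbm{1}-P_k)$ and $P_k$ are commuting projectors on distinct tensor factors — the inductive step uses $\bigotimes_{k=1}^{n}(P_k+Q_k) = (P_n + Q_n)\otimes\bigotimes_{k<n}(P_k+Q_k) \preccurlyeq (P_n + Q_n)\otimes\bigl(g^2\mathbbm{1} + (1-g^2)\bigotimes_{k<n}P_k\bigr)$ and then expanding and using $Q_n \preccurlyeq g^2(\mathbbm{1}-P_n)$, $P_n \preccurlyeq \mathbbm{1}$ to collapse to the desired form. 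Finally, tensoring with $\mathbbm{1}_{\overline{c(\eta)}}$ on both sides preserves the operator inequality and recovers the statement.
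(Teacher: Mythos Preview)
Your strategy --- establish the single-pair bound $M_{i,j}M_{i,j}^\dag \preccurlyeq P_{i,j} + g^2(\mathbbm{1}-P_{i,j})$ with $g = 1-\Delta_{\rm loc,\eta}^{(t)}$, then combine over the tensor factors --- is exactly the paper's. Where you get tangled is the combinatorics of the tensor expansion. Expanding $\bigotimes_k(g^2\mathbbm{1} + (1-g^2)P_k)$ over subsets $S$ and then upper-bounding every proper-subset projector by $\mathbbm{1}$ yields only $[1-(1-g^2)^{n_\eta}]\mathbbm{1} + (1-g^2)^{n_\eta}M_{\eta^{\rm H}}$, which for $n_\eta \ge 2$ is strictly \emph{weaker} than the claimed $g^2\mathbbm{1}+(1-g^2)M_{\eta^{\rm H}}$ (both are convex combinations $\alpha\mathbbm{1}+(1-\alpha)M_{\eta^{\rm H}}$, and $1-(1-g^2)^{n_\eta}>g^2$). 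You catch this, but your induction is not actually closed as written: after tensoring with $(P_n+Q_n)$ and applying the inductive hypothesis, one obtains
\[
g^2(P_n+Q_n)\otimes\mathbbm{1} + (1-g^2)(P_n+Q_n)\otimes\!\bigotimes_{k<n}\!P_k,
\]
and bounding each piece in the obvious way leaves an extra positive term $g^2(1-g^2)(\mathbbm{1}-P_n)\otimes\bigotimes_{k<n}P_k$. This \emph{can} be absorbed by the slack $g^2[(P_n+Q_n)-\mathbbm{1}]\otimes\mathbbm{1}\le -g^2(1-g^2)(\mathbbm{1}-P_n)\otimes\mathbbm{1}$, but that cancellation is the content of the step, not a one-line ``collapse''.

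The paper sidesteps all of this with a cleaner bookkeeping device: write each factor in the \emph{orthogonal} form $\Pi_{k,0} + g^2\Pi_{k,1}$ with $\Pi_{k,0}=P_k$ and $\Pi_{k,1}=\mathbbm{1}-P_k$. Then
\[
\bigotimes_k\bigl(\Pi_{k,0}+g^2\Pi_{k,1}\bigr)=\sum_{\vec{l}\in\{0,1\}^{n_\eta}} g^{2|\vec{l}|}\,\bigotimes_k\Pi_{k,l_k},
\]
where the summands are \emph{mutually orthogonal} projectors summing to $\mathbbm{1}$. For every $\vec{l}\neq\vec{0}$ one has $g^{2|\vec{l}|}\le g^2$, so the non-zero-tuple part is bounded by $g^2\sum_{\vec{l}\neq\vec{0}}\bigotimes_k\Pi_{k,l_k} = g^2(\mathbbm{1}-M_{\eta^{\rm H}})$; adding the $\vec{l}=\vec{0}$ term $M_{\eta^{\rm H}}$ gives exactly $g^2\mathbbm{1}+(1-g^2)M_{\eta^{\rm H}}$. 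This orthogonal expansion is the missing idea that makes the combinatorics trivial.
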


\begin{proof}
Here, we omit superscript $(t)$ for simplicity. 
We define  $R_{i, j}\equiv M_{i, j}-P_{i, j}$. 
Then, $R_{i, j}$ satisfies $P_{i,j}R_{i,j} = R_{i,j}P_{i,j} = 0$ and $\|R_{i,j}\|_{\infty} \leq 1 - \Delta_{\rm loc,\eta}$, which implies 
\begin{align}
\label{seq:sing_protocol_prf1}
    P_{i_k, j_k}+R_{i_k, j_k}R_{i_k, j_k}^\dag
    \leq P_{i_k, j_k}+(1-\Delta_{\rm loc,\eta})^2(\mathbbm{1}_{i_k, j_k}-P_{i_k, j_k})
    =\sum_{l\in\{0, 1\}} (1-\Delta_{\rm loc,\eta})^{2l}\Pi_{k, l}, 
\end{align}
where $\Pi_{k, 0}\equiv P_{i_k, j_k}$ and $\Pi_{k, 1}\equiv \mathbbm{1}_{i_k, j_k}-P_{i_k, j_k}$. 
By using this property, we get
\begin{align}
    M_\eta M_\eta^\dag &= \left[\bigotimes_{k=1}^{n_\eta} \left(P_{i_k, j_k} + R_{i_k, j_k}R_{i_k, j_k}^\dag \right)\right] \otimes \mathbbm{1}_{\overline{c(\eta)}} \nonumber\\
    &\leq \left[ \sum_{(l_1, ..., l_{n_\eta})\in\{0, 1\}^{n_\eta}} \bigotimes_{k=1}^{n_\eta} (1-\Delta_{\rm loc,\eta})^{2l_k}\Pi_{k, l_k}\right] \otimes \mathbbm{1}_{\overline{c(\eta)}} \nonumber\\
    &=\left[\bigotimes_{k=1}^{n_\eta} \Pi_{k, 0}
    +\sum_{(l_1, ..., l_{n_\eta})\in\{0, 1\}^{n_\eta}\setminus\{0\}^{n_\eta}} \bigotimes_{k=1}^{n_\eta} (1-\Delta_{\rm loc,\eta})^{2l_k}\Pi_{k, l_k}\right] \otimes \mathbbm{1}_{\overline{c(\eta)}} \nonumber\\
    &\leq\left[\bigotimes_{k=1}^{n_\eta} \Pi_{k, 0}
    +(1-\Delta_{\rm loc,\eta})^2\sum_{(l_1, ..., l_{n_\eta})\in\{0, 1\}^{n_\eta}\setminus\{0\}^{n_\eta}} \bigotimes_{k=1}^{n_\eta} \Pi_{k, l_k}\right] \otimes \mathbbm{1}_{\overline{c(\eta)}} \nonumber\\
    &=\left\{\bigotimes_{k=1}^{n_\eta} \Pi_{k, 0}
    +(1-\Delta_{\rm loc,\eta})^2\left[\bigotimes_{k=1}^{n_\eta} (\Pi_{k, 0}+\Pi_{k, 1})-\bigotimes_{k=1}^{n_\eta} \Pi_{k, 0}\right]\right\} \otimes \mathbbm{1}_{\overline{c(\eta)}} \nonumber\\
    &=\left\{(1-\Delta_{\rm loc,\eta})^2 \mathbbm{1}_{i_1, j_1, ..., i_{n_\eta}, j_{n_\eta}}+\left[1-(1-\Delta_{\rm loc,\eta})^2\right]\bigotimes_{k=1}^{n_\eta} P_{i_k, j_k}\right\}\otimes\mathbbm{1}_{\overline{c(\eta)}} \nonumber\\
    &= (1 - \Delta_{\rm loc,\eta})^2 \mathbbm{1} + (1- (1 - \Delta_{\rm loc,\eta})^2 ) M_{\eta^{\rm H}},
\end{align}
where the second line follows from Eq.~\eqref{seq:sing_protocol_prf1}.
\end{proof}

\begin{lemma} \label{lem:cs_like}
Let $n\in\mathbb{N}$, $\{M_x\}_{x=1}^n$ be a set of general (not necessarily Hermitian) matrices, and $\{q_x\}_{x=1}^n$ be a probability distribution. 
Then,
\begin{equation}
\label{eq:cs_like}
    \left(\sum_{x=1}^n q_x M_x \right)\left(\sum_{x^\prime =1}^n q_{x^\prime} M_{x^\prime} \right)^\dag \leq \left(\sum_{x=1}^n q_x M_x M_x^\dag \right).
\end{equation}
\end{lemma}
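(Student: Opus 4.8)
The plan is to reduce the operator inequality to the ordinary Cauchy–Schwarz inequality by testing against an arbitrary vector, which is precisely the origin of the lemma's name. Fix any vector $\ket{v}$ in the space on which the $M_x$ act and set $\ket{w_x} \equiv M_x^\dagger \ket{v}$. Conjugating the left-hand side of Eq.~\eqref{eq:cs_like} by $\bra{v}(\cdot)\ket{v}$ produces $\big\| \sum_{x} q_x \ket{w_x} \big\|^2$, while the same conjugation of the right-hand side produces $\sum_x q_x \|\ket{w_x}\|^2$, using $\bra{v} M_x M_x^\dagger \ket{v} = \|M_x^\dagger\ket{v}\|^2$. Hence it suffices to establish $\big\| \sum_x q_x \ket{w_x} \big\|^2 \le \sum_x q_x \|\ket{w_x}\|^2$ for an arbitrary finite family of vectors and an arbitrary probability distribution $\{q_x\}$.

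This remaining inequality is Cauchy–Schwarz (equivalently, Jensen applied to the convex function $z \mapsto \|z\|^2$): writing $q_x = \sqrt{q_x}\cdot\sqrt{q_x}$ and applying Cauchy–Schwarz to the sequences $(\sqrt{q_x})_x$ and $(\sqrt{q_x}\ket{w_x})_x$ gives $\big\| \sum_x q_x\ket{w_x}\big\| \le \sqrt{\sum_x q_x}\,\sqrt{\sum_x q_x\|\ket{w_x}\|^2} = \sqrt{\sum_x q_x\|\ket{w_x}\|^2}$, where the last step uses $\sum_x q_x = 1$. Squaring both sides closes the argument, and since $\ket{v}$ was arbitrary the operator inequality Eq.~\eqref{eq:cs_like} follows.

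As a remark I would also record the equivalent basis-free form of the same computation: with $\overline{M} \equiv \sum_x q_x M_x$ one has the identity $\sum_x q_x M_x M_x^\dagger - \big(\sum_x q_x M_x\big)\big(\sum_{x'} q_{x'} M_{x'}\big)^\dagger = \sum_x q_x (M_x - \overline{M})(M_x - \overline{M})^\dagger$, obtained by expanding the right-hand side and using $\sum_x q_x = 1$; the right-hand side is manifestly a nonnegative combination of positive semidefinite operators. This exhibits the statement as the matrix analogue of $\mathbb{E}[X]\,\mathbb{E}[X]^\dagger \preccurlyeq \mathbb{E}[XX^\dagger]$, i.e.\ ``variance is positive semidefinite.''

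I do not expect any real obstacle here. The only point needing a little care is the first reduction—verifying that conjugation by an arbitrary $\ket{v}$ suffices for an operator inequality and that it turns $\sum_x q_x M_x M_x^\dagger$ into $\sum_x q_x \|M_x^\dagger\ket{v}\|^2$—which is immediate. Note that no Hermiticity of the $M_x$ is used anywhere, and the $M_x$ need only share a common (possibly rectangular) shape for the products $M_x M_x^\dagger$ and the sum $\sum_x q_x M_x$ to be defined.
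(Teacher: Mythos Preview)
Your proof is correct. The basis-free identity you record as a remark---$\sum_x q_x M_x M_x^\dagger - \overline{M}\,\overline{M}^\dagger = \sum_x q_x (M_x - \overline{M})(M_x - \overline{M})^\dagger \geq 0$---is exactly the paper's one-line proof; the vector-testing route you lead with is a valid but slightly longer alternative to the same conclusion.
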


\begin{proof}
By introducing the representation $M\equiv \sum_{x=1}^n q_x M_x$, Eq.~(\ref{eq:cs_like}) is derived as follows, by utilizing Cauchy-Schwarz inequality:
\begin{equation}
    \left(\sum_{x=1}^n q_x M_x M_x^\dag\right) -MM^\dag
    =\sum_{x=1}^n q_x (M_x-M)(M_x-M)^\dag
    \geq 0.
\end{equation}
\end{proof}

By utilizing Lemmas~\ref{lem:ltog_gap} and \ref{lem:cs_like}, we can derive Proposition~\ref{prop:repetition_circuit}. 
In the proof, we denote the projection to the invariant subspace $\mathrm{Inv}(\mathrm{U}(d^N)^{\otimes t, t})$ as $P^{(t)}_{\rm all}$, which is identical to the $t$-th moment operator for the global Haar ensemble on $N$-qudit system, i.e., $P^{(t)}_{\rm all} = M^{(t)}_{\rm Haar}$.
\begin{proof}[Proof of Proposition~\ref{prop:repetition_circuit}]
In this proof, we abbreviate the superscript $(t)$ for simplicity.
Since the moment operator for non-Haar random circuit is not necessarily Hermitian, we evaluate its Gram matrix instead of the moment operator itself as follows:
\begin{equation}
\label{seq:slrep_them_prf1}
\begin{split}
(M_{\nu} - M_{\rm Haar})(M_{\nu} - M_{\rm Haar})^\dag &= M_{\nu} M_{\nu}^{\dag} - M_{\rm Haar} \\
&= \left(\sum_{\eta} q_\eta M_\eta \right) \left(\sum_{\eta^\prime} q_{\eta^\prime} M_{\eta^\prime} \right)^\dag - P_{\rm all} \\
    &\leq \sum_{\eta} q_\eta M_\eta M_\eta^\dag -P_{\rm all} \\
    &\leq \sum_\eta q_\eta \left\{(1 - \Delta_{\rm loc,\eta})^2  \mathbbm{1} + \left[1 -\left(1 - \Delta_{\rm loc,\eta}\right)^2 \right] M_{\eta^{\rm H}} \right\} -P_{\rm all}\\
    &\leq (1 - \Delta_{\rm loc,\nu})^2 (\mathbbm{1}-P_{\rm all}) + \left[1-(1 - \Delta_{\rm loc,\nu})^2\right]  (M_{\nu^{\rm H}} - P_{\rm all}),
\end{split}
\end{equation}
where the third line follows from Lemma~\ref{lem:cs_like}, the fourth line follows from Lemma~\ref{lem:ltog_gap}, and the final line follows from the inequality $\Delta_{\rm loc,\eta} \geq \Delta_{\rm loc,\nu} \geq 0$, which can be seen from the definition Eq.~\eqref{seq:loc_spec_gap_nu}.

By taking the operator norm of the both sides of Eq.~(\ref{seq:slrep_them_prf1}), we can derive the following inequality:
\begin{equation}
\begin{split}
    (1-\Delta_\nu)^2 &\leq  (1 - \Delta_{\rm loc,\nu})^2  + \left[1-(1 - \Delta_{\rm loc,\nu})^2\right] (1-\Delta_{\nu^{\rm H}}) \\
    &= 1 - \left[1-(1- \Delta_{\rm loc,\nu})^2\right] \Delta_{\nu^{\rm H}} \\
    &\leq 1 -\Delta_{\rm loc,\nu}\Delta_{\nu^{\rm H}}.
\end{split}
\end{equation}
Since the left-hand side is lower bounded as $(1-\Delta_\nu)^2 \geq 1 -2 \Delta_\nu$, we can derive Eq.~(\ref{seq:slrep_thm}).
\end{proof}

By utilizing Proposition~\ref{prop:repetition_circuit} and Lemma~\ref{lem:basic}, we can upper bound the unitary design formation depth in a non-Haar random circuit whose $i$-th layer is denoted as $\nu_i$.
We here consider the situation where the Haar random counterpart of each layer $\nu_i^{\rm H}$ is the same for all $i$, and is simply denoted as $\nu^{\rm H}$. In this case, the probability distribution $q_\eta$ and the set of two-qudit pairs $c(\eta)$ associated with the protocol $\eta$ are the same for all layers. However, the two-qudit unitary ensembles can be different accross layers.

To upper bound the formation depth in such a non-Haar random circuit, the key quantities are the averaged local spectral gap $\Delta_{\mathrm{loc}}^{(t)}$ and the circuit depth at which the corresponding Haar random circuit forms an approximate unitary design, $L^{\rm H}$.
The averaged local spectral gap is defined as 
\begin{equation}
\label{seq:ave_loc_spec_single}
    \Delta_{\mathrm{loc}}^{(t)} \equiv \inf_{K\in \mathbb{N}} \frac{1}{K} \sum_{i=1}^K \Delta_{\mathrm{loc}, \nu_i}^{(t)},
\end{equation}
where $\Delta_{\mathrm{loc}, \nu_i}^{(t)}$ is the local spectral gap of $i$-th layer defined as Eq.~\eqref{seq:loc_spec_gap_nu}.
When the circuit is repetitive and the ensembles of all layers are the same, i.e., $\nu_i=\nu$ for all $i$, Eq.~\eqref{seq:ave_loc_spec_single} is simply reduced to $\Delta_{\rm loc, \nu}^{(t)}$. This is the situation discussed in the main text.

Another key quantity $L^{\rm H}$ is defined as 
\begin{equation}
\label{seq:depth_sin_Haar}
    L^{\rm H} \equiv \left( \Delta_{\nu^{\rm H}} \right)^{-1} \cdot (2Nt \log d - \log \varepsilon),
\end{equation}
where $\Delta_{\nu^{\rm H}}$ is the spectral gap for the corresponding Haar random circuit $\nu^{\rm H}$.
From Lemma~\ref{lem:basic}, we get that $L^{\rm H}$ is the formation depth of $\varepsilon$-approximate unitary $t$-design.
Given these quantities, we can now state the main theorem of this section.
\begin{theorem}[Theorem 1 in the main text] \label{thm:depth_sin_rep}
Consider a single-layer-connected circuit where each layer is represented as $\nu_i$. 
This circuit achieves an $\varepsilon$-approximate unitary $t$-design at a circuit depth of
\begin{equation}
\label{seq:depth_sing}
    L \geq 2\left(\Delta_{\mathrm{loc}}^{(t)}\right)^{-1} \cdot L^{\rm H},
\end{equation}
where the averaged local spectral gap $\Delta_{\mathrm{loc}}^{(t)}$ is defined as Eq.~(\ref{seq:ave_loc_spec_single}), and $L^{\rm H}$ is defined as Eq.~\eqref{seq:depth_sin_Haar}.
\end{theorem}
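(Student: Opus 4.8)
The plan is to combine Proposition~\ref{prop:repetition_circuit} with Lemma~\ref{lem:basic} in a straightforward telescoping argument. First I would recall that, by Proposition~\ref{prop:repetition_circuit} applied to the $i$-th layer, the spectral gap of each layer satisfies $\Delta_{\nu_i}^{(t)} \geq \frac{1}{2}\Delta_{\mathrm{loc},\nu_i}^{(t)}\Delta_{\nu^{\rm H}}^{(t)}$, where we have used that the Haar-random counterpart of every layer is the common ensemble $\nu^{\rm H}$. Summing this over $i=1,\dots,L$ gives $\sum_{i=1}^L \Delta_{\nu_i}^{(t)} \geq \frac{1}{2}\Delta_{\nu^{\rm H}}^{(t)} \sum_{i=1}^L \Delta_{\mathrm{loc},\nu_i}^{(t)} \geq \frac{L}{2}\Delta_{\nu^{\rm H}}^{(t)}\Delta_{\mathrm{loc}}^{(t)}$, where the last inequality uses the definition of the averaged local spectral gap in Eq.~\eqref{seq:ave_loc_spec_single} as an infimum over $K$ of the running averages, which in particular lower bounds $\frac{1}{L}\sum_{i=1}^L \Delta_{\mathrm{loc},\nu_i}^{(t)}$.

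Next I would feed this lower bound into Lemma~\ref{lem:basic}, which gives $\|M_{\chi_L}^{(t)} - M_{\rm Haar}^{(t)}\|_\infty \leq \exp\!\left(-\sum_{i=1}^L \Delta_{\nu_i}^{(t)}\right) \leq \exp\!\left(-\frac{L}{2}\Delta_{\mathrm{loc}}^{(t)}\Delta_{\nu^{\rm H}}^{(t)}\right)$. To obtain an $\varepsilon$-approximate unitary $t$-design in the sense of Definition~\ref{def:mom_approx_unitary_design}, it suffices that this quantity be at most $\varepsilon/q^{2t} = \varepsilon\, d^{-2Nt}$, since the global dimension is $q=d^N$. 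Taking logarithms, this is guaranteed when $\frac{L}{2}\Delta_{\mathrm{loc}}^{(t)}\Delta_{\nu^{\rm H}}^{(t)} \geq 2Nt\log d - \log\varepsilon$, i.e. when $L \geq 2\left(\Delta_{\mathrm{loc}}^{(t)}\right)^{-1}\left(\Delta_{\nu^{\rm H}}^{(t)}\right)^{-1}(2Nt\log d - \log\varepsilon)$. Recognizing the definition of $L^{\rm H}$ in Eq.~\eqref{seq:depth_sin_Haar} as $L^{\rm H} = \left(\Delta_{\nu^{\rm H}}^{(t)}\right)^{-1}(2Nt\log d - \log\varepsilon)$, this is exactly the claimed bound $L \geq 2\left(\Delta_{\mathrm{loc}}^{(t)}\right)^{-1} L^{\rm H}$.

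Since all the heavy lifting — the Cauchy–Schwarz-type estimate and the tensor-power bound on $M_\eta M_\eta^\dagger$ — has already been done in Lemmas~\ref{lem:cs_like} and \ref{lem:ltog_gap} and packaged into Proposition~\ref{prop:repetition_circuit}, there is no genuine obstacle here; the only point requiring minor care is the bookkeeping of the averaged local spectral gap. Specifically, one should note that the infimum over $K$ in Eq.~\eqref{seq:ave_loc_spec_single} is precisely what makes the inequality $\sum_{i=1}^L \Delta_{\mathrm{loc},\nu_i}^{(t)} \geq L\,\Delta_{\mathrm{loc}}^{(t)}$ valid for \emph{every} $L$, rather than only asymptotically; this is what allows the conclusion to hold at the stated finite depth. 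The repetitive case $\nu_i = \nu$ discussed in the main text is then the obvious specialization where $\Delta_{\mathrm{loc}}^{(t)} = \Delta_{\mathrm{loc},\nu}^{(t)}$, and the argument goes through verbatim.
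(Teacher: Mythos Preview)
Your proposal is correct and follows essentially the same argument as the paper's own proof: apply Proposition~\ref{prop:repetition_circuit} layer by layer, sum to get $\sum_{i=1}^L \Delta_{\nu_i}^{(t)} \geq \tfrac{1}{2}\Delta_{\nu^{\rm H}}^{(t)} L\,\Delta_{\mathrm{loc}}^{(t)}$ via the infimum in Eq.~\eqref{seq:ave_loc_spec_single}, and then invoke Lemma~\ref{lem:basic} together with the definition of $L^{\rm H}$. The only cosmetic difference is that the paper compresses the last two steps into a single chain of inequalities rather than spelling out the logarithm.
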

\begin{proof}
For $L$ defined as Eq.~\eqref{seq:depth_sing}, we can obtain
\begin{align}
    \sum_{i=1}^{L} \Delta_{\nu_i}^{(t)} &\geq  \frac{1}{2} \Delta_{\nu^{\rm H}}^{(t)}\sum_{i=1}^{L}\Delta_{\rm loc,\nu_i}^{(t)}\nonumber \\
    &\geq  \frac{1}{2} \Delta_{\nu^{\rm H}}^{(t)}\cdot L \ \Delta_{\rm loc}^{(t)} \nonumber\\
    &\geq 2 Nt \log d - \log \varepsilon,
\end{align}
where the first line follows from Proposition~\ref{prop:repetition_circuit}, and the second line follows from the definition of the averaged local spectral gap. With this inequality and Lemma~\ref{lem:basic}, we can complete the proof.
\end{proof}

\begin{rem}[Comparison with previous works \cite{oszmaniec2021epsilon,oszmaniec2022saturation}]
Proposition~\ref{prop:repetition_circuit} provides a tighter bound on the spectral gap $\Delta_{\nu}^{(t)}$ by a factor of $\poly N$ compared to previous works~\cite{oszmaniec2021epsilon,oszmaniec2022saturation}. 
The primary reason for this improvement lies in our use of Lemma~\ref{lem:cs_like} in the third line of Eq.~(\ref{seq:slrep_them_prf1}), where we bound the Gram matrix of the moment operator.
This improvement is reflected in the better upper bound of the unitary design formation depth in our work as shown in Table~\ref{tab:summary_supp}.
\end{rem}

\section{Multilayer-connected circuit} \label{Ss:multilayer_circuit}

\begin{figure*}[]
    \centering
    \includegraphics[width=1.0\textwidth]{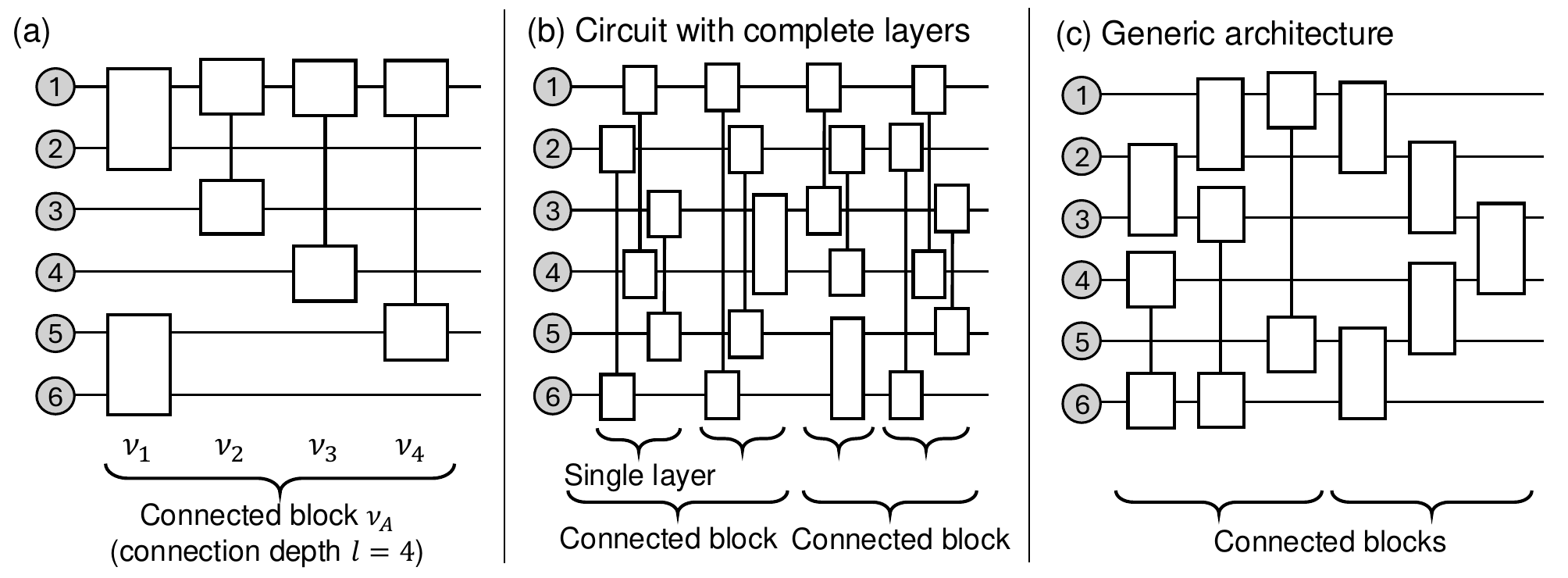}
    \caption{Schematics for circuits with generic fixed architectures. (a) An example of a single connected block of circuit. Here, 6-qudit system is connected in circuit depth $l=4$. A unitary ensemble of $i$-th layer is denoted as $\nu_i$. (b) An example of circuit with complete layers. In every layers, $3$ two-qudit unitary gates are applied in parallel. (c) An example of the circuit with incomplete layers. In this circuit, each layer involves less than $3$ unitary gates. }
    \label{fig:gen_arch}
\end{figure*}

In this section, we discuss the unitary design formation depths in multilayer-connected circuits with generic fixed architectures. Here, a circuit is said to have a fixed architecure when the order and position of unitary gate applications are predetermined. For example, it encompasses both one- and higher-dimensional brickwork circuits.
In Sec.~\ref{Ssss:gen_fixed_Haar}, we introduce the notation for describing the circuit with a generic fixed architecture, and briefly review the method to upper bound the $t$-design formation depth in such a circuit composed of Haar random two-qudit unitary gates. This method was introduced in Ref.~\cite{belkin2023approximate}.  
In Sec.~\ref{Ssss:gen_fixed_nonHaar}, we derive an upper bound on the unitary design formation depth in non-Haar random circuits with a generic fixed architecture.

\subsection{Description of circuits with generic fixed architectures}\label{Ssss:gen_fixed_Haar}
A convenient way to describe the circuit structure is to decompose it into connected blocks.
The connected block is a block of circuit where the gates involved form a connected graph over all sites, and the number of layers involved in each  block is called connection depth.
For example, the circuit shown in Fig.~\ref{fig:gen_arch}(a) consists of a single connected block with connection depth $l=4$, while the circuits of Figs.~\ref{fig:gen_arch}(b) and (c) are composed of two connected blocks.
In this work, each connected block is treated as a fundamental unit of the circuit, and evaluating its spectral gap serves as the primary focus of the analysis.

We here introduce the notation to describe the unitary ensemble formed by each connected blocks.
When the connected block is of depth $l$, its ensemble $\nu_A$ is described as $\nu_A = *_{1\leq j \leq l} \nu_j$ with the unitary ensemble corresponding to $j$-th layer $\nu_j$, as illustrated in Fig.~\ref{fig:gen_arch}(a).
The sets of sites connected by the unitary gates in $\nu_j$ is called the cluster of $\nu_j$ and is denoted as $c(\nu_j)$. For example, in Fig.~\ref{fig:gen_arch}(a), the clusters $c(\nu_1)$ and $c(\nu_2)$ are $c(\nu_1)= \{\{1,2\},\{5,6\}\}$ and $c(\nu_2)= \{\{1,3\}\}$, respectively. 
Corresponding to the cluster $c(\nu_j)$, we introduce the operator $P_{c(\nu_j)}^{(t)}$, which represents the projector onto the invariant subspace of the cluster $c(\nu_j)$. For example, in the circuit shown in Fig.~\ref{fig:gen_arch}(a), $P_{c(\nu_1)}^{(t)} = P_{1,2}^{(t)}\otimes P_{5,6}^{(t)}\otimes \mathbbm{1}_{3,4}$ and $P_{c(\nu_2)}^{(t)} = P_{1,3}^{(t)}\otimes\mathbbm{1}_{2,4,5,6}$, where $P_{1,2}^{(t)}$, $P_{5,6}^{(t)}$, and $P_{1,3}^{(t)}$ are the projectors onto the subspace $\mathrm{Inv}(\mathrm{U}(d^2)^{\otimes t, t})$ of the corresponding qudit pair.
As can be seen from here, for a single-layer ensemble $\nu_j$, the operator $P_{c(\nu_j)}^{(t)}$ coincides with the moment operator for corresponding Haar random circuit $\nu_j^{\rm H}$, i.e., $P_{c(\nu_j)}^{(t)} = M_{\nu_j^{\rm H}}^{(t)}$.

These notations of the cluster and the corresponding projector are also used for the multiple-layer convolution. For example, regarding the convolution of the first two layers in Fig.~\ref{fig:gen_arch}(a), the cluster is represented as $c(\nu_2*\nu_1)= \{\{1,2,3\},\{5,6\}\}$ and the corresponding projector is defined as $P_{c(\nu_2*\nu_1)}^{(t)} = P_{1,2,3}^{(t)}\otimes P_{5,6}^{(t)}\otimes \mathbbm{1}_{4}$. 
About the whole connected block $\nu_{A} \equiv *_{1\leq i \leq 4} \nu_i$, the cluster is represented as $c(\nu_A) =\{\{1,2,3,4,5,6\}\}$, and the projector becomes $P_{c(\nu_A)}^{(t)} = P_{\rm all}^{(t)}$.
Here, it should be noted that the operator $P_{c(\nu_j*\nu_i)}^{(t)}$ is generally different from the moment operator of the corresponding Haar random circuit $\nu_j^{\rm H}*\nu_i^{\rm H}$, represented as $M_{\nu_j^{\rm H}*\nu_i^{\rm H}}^{(t)} = M_{\nu_j^{\rm H}}^{(t)}M_{\nu_i^{\rm H}}^{(t)} =P_{c(\nu_j)}^{(t)}P_{c(\nu_i)}^{(t)}$. 

In Ref.~\cite{belkin2023approximate}, the method to evaluate the spectral gaps of such connected blocks were developed. Specifically, they focus on the situation where each two-qudit gate is drawn from the Haar measure, and derive the lower bound for the spectral gap, defined as
\begin{equation}
\label{seq:spectral_gap_fixed_Haar}
    \Delta_{\nu^{\rm H}_A}^{(t)} \equiv 1 - \|M_{\nu^{\rm H}_A}^{(t)} - P_{\rm all}^{(t)} \|_{\infty} = 1 - \| P_{c(\nu_l)}^{(t)} \cdots P_{c(\nu_2)}^{(t)} P_{c(\nu_1)}^{(t)} - P_{\rm all}^{(t)} \|_{\infty}.
\end{equation}
They divide the circuit architectures into two cases and derive different lower bounds for each: those for circuits with complete architectures, where $\lfloor\frac{N}{2}\rfloor$ two-qudit gates are applied in every layer as shown in Fig.~\ref{fig:gen_arch}(b), and those for circuits with incomplete architectures, where some layers consist of fewer than $\lfloor\frac{N}{2}\rfloor$ gates, as shown in Fig.~\ref{fig:gen_arch}(c).
\begin{proposition} \label{prop:Haar_fixed_arch}
Let \(\nu^{\rm H}_A\) be a unitary ensemble for an \(l\)-layer connected block of a Haar random circuit with a fixed architecture, denoted as \(\nu^{\rm H}_A \equiv *_{1\leq j\leq l} \nu_j^{\rm H}\). Then, its spectral gap defined in Eq.~\eqref{seq:spectral_gap_fixed_Haar} is lower bounded as
\begin{equation}
    \Delta_{\nu^{\rm H}_A}^{(t)} \geq \mathrm{LB}_{A}^{(t)}.
\end{equation}
Here, when the architecture $A$ is complete, $\mathrm{LB}_{A}^{(t)}$ is defined as 
\begin{equation}
    \mathrm{LB}_{A}^{(t)} \equiv f_{\rm c}(N,t,d), \quad f_{\rm c}(N,t,d) \equiv \min_{1\leq m\leq N} \Delta_{\nu_{\rm bw}^{\rm H}}^{(t)}(m,d), \label{eq:spec_LB_comp_Haar_random}
\end{equation}
where $\Delta_{\nu_{\rm bw}^{\rm H}}^{(t)}(m,d)$ is the spectral gap for the two-layer block of Haar random 1D brickwork circuit on $m$-qudit system with local dimension $d$. When the architecture $A$ is incomplete, $\mathrm{LB}_{A}^{(t)}$ is defined as 
\begin{equation}
    \mathrm{LB}_{A}^{(t)} \equiv f_{\rm i}(N,t,d), \quad f_{\rm i}(N,t,d) \equiv \left( \min_{1\leq m\leq N} \Delta_{\nu_{\rm bw}^{\rm H}}^{(t)}(m,d) \right)^h, \label{eq:spec_LB_incomp_Haar_random}
\end{equation}
where $h$ is defined as \(h \equiv 8 \lceil \log_2 \lfloor \log_2 (N+1) \rfloor \rceil + 1\).
\end{proposition}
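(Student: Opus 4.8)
\emph{Proof idea.} The plan is to reconstruct, via the detectability lemma, the spectral‑gap bounds for Haar‑random fixed‑architecture circuits of Ref.~\cite{belkin2023approximate}. The starting observation is that, applying Eq.~\eqref{seq:sin_rep_circuit_lHaar} layerwise, the block moment operator factorizes as $M_{\nu_A^{\rm H}}^{(t)} = P_{c(\nu_l)}^{(t)}\cdots P_{c(\nu_2)}^{(t)}P_{c(\nu_1)}^{(t)}$, with each $P_{c(\nu_j)}^{(t)}$ a tensor product of the $2$‑local invariant‑subspace projectors $P_{i,k}^{(t)}$ over the gates of layer $j$. Because $A$ is a connected block, the common image of all these $2$‑local projectors is exactly $\mathrm{Im}\,P_{\rm all}^{(t)}$; equivalently, the parent Hamiltonian $H_A \equiv \sum_{\{i,k\}}(\mathbbm{1}-P_{i,k}^{(t)})$, summed over all gate locations in $A$, is frustration‑free with ground‑space projector $P_{\rm all}^{(t)}$. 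Since $P_{c(\nu_j)}^{(t)}P_{\rm all}^{(t)}=P_{\rm all}^{(t)}$ for every $j$, one has $M_{\nu_A^{\rm H}}^{(t)}-P_{\rm all}^{(t)} = (\mathbbm{1}-P_{\rm all}^{(t)})M_{\nu_A^{\rm H}}^{(t)}(\mathbbm{1}-P_{\rm all}^{(t)})$, so controlling $\Delta_{\nu_A^{\rm H}}^{(t)}$ in Eq.~\eqref{seq:spectral_gap_fixed_Haar} amounts to bounding the norm of an ordered product of projectors, whose factors come in layerwise‑commuting groups (gates in one layer act on disjoint qudits), restricted to the orthogonal complement of a frustration‑free ground space --- precisely the output of the detectability lemma.

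First I would treat the complete case. The detectability lemma reduces the task to lower‑bounding the spectral gap $\gamma(H_A)$ of $H_A$ above $\mathrm{Im}\,P_{\rm all}^{(t)}$. For a complete architecture, where every layer carries $\lfloor N/2\rfloor$ gates, the union of the edge sets of any two consecutive layers is a disjoint union of paths and cycles on $\le N$ vertices, so the corresponding piece of $H_A$ is block‑diagonal over $1$D segments; iterating the detectability‑lemma / finite‑size comparison of Ref.~\cite{belkin2023approximate} then shows $\gamma(H_A)$, and with it $\Delta_{\nu_A^{\rm H}}^{(t)}$, is at least the gap of the corresponding two‑layer $1$D brickwork parent Hamiltonian on the worst‑case segment, namely $\min_{1\le m\le N}\Delta_{\nu_{\rm bw}^{\rm H}}^{(t)}(m,d)$. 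This gives Eq.~\eqref{eq:spec_LB_comp_Haar_random}. (Adding layers beyond what is needed to connect the block only decreases $\|M_{\nu_A^{\rm H}}^{(t)}-P_{\rm all}^{(t)}\|_\infty$, by the submultiplicativity already used in Lemma~\ref{lem:basic}, so the bound is insensitive to the exact value of $l$.)

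For the incomplete case, no fixed bounded window of consecutive layers need form a connected pattern, so the direct comparison of $\gamma(H_A)$ to the $1$D brickwork gap fails. Here I would follow the hierarchical scheme of Ref.~\cite{belkin2023approximate}: organize the block into $r\equiv\lceil\log_2\lfloor\log_2(N+1)\rfloor\rceil$ successive coarsening rounds, where round $k$ fuses the clusters of round $k-1$ into a pattern that is complete on the coarse‑grained site graph. Each round is then analyzed by the complete‑case bound and contributes a factor $(\min_{1\le m\le N}\Delta_{\nu_{\rm bw}^{\rm H}}^{(t)}(m,d))^{8}$, the constant $8$ counting the brickwork‑gap applications needed to realize one coarsening step; multiplying the $r$ rounds together with one seeding factor yields the exponent $h=8r+1$ and hence Eq.~\eqref{eq:spec_LB_incomp_Haar_random}. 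What legitimizes this multiplication is that frustration‑free gaps (equivalently, the detectability‑lemma bounds) compose submultiplicatively under nesting of the projector families used at successive rounds.

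I expect the incomplete case to be the main obstacle. One must (i) specify the coarsening clusterings so that after only $\lceil\log_2\lfloor\log_2(N+1)\rfloor\rceil$ rounds a single cluster --- i.e. $P_{\rm all}^{(t)}$ --- remains, while each intermediate pattern is genuinely complete on its cluster graph; (ii) prove the gap‑composition inequality that turns the nested product of projectors into a product of per‑round gap factors, with explicit control of the constant $8$ and the additive $1$; and (iii) verify that only $1$D brickwork gaps on $\le N$ qudits are ever invoked, so that the final bound is expressed through $\min_{1\le m\le N}\Delta_{\nu_{\rm bw}^{\rm H}}^{(t)}(m,d)$ alone. By contrast the complete case is essentially the detectability lemma plus the elementary remark that a union of matchings is a disjoint union of paths and cycles whose parent Hamiltonian is block‑diagonal, and should be routine.
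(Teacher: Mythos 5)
Your plan points at the right source (Ref.~\cite{belkin2023approximate}), but the reduction you actually argue is not the one that works, and the steps you defer to that reference are precisely the content of the proposition. The paper does not reduce the block to the spectral gap of a global parent Hamiltonian $H_A$ at all: following Ref.~\cite{belkin2023approximate}, it works directly with the ordered product of cluster projectors, first passing to $\bigl(1-\Delta_{\nu_A^{\rm H}}^{(t)}\bigr)^2 \leq 1-\prod_{j\geq 2}\gamma^{(t)}\bigl(\nu_j^{\rm H},\ast_{1\leq k<j}\nu_k^{\rm H}\bigr)$ (Lemma~\ref{lem:fixed_decomp_into_two}, i.e.\ Theorem~7 of \cite{belkin2023approximate}), where each $\gamma^{(t)}$ involves only the two projectors $P^{(t)}_{c(\ast_{k<j}\nu_k)}$ and $P^{(t)}_{c(\nu_j)}$ with the \emph{already-merged} cluster of the preceding layers, and then bounding each $\gamma^{(t)}$ by the cluster-merging method (Lemma~\ref{lem:fixed_clustermerge}, Theorem~8 and Appendix~D of \cite{belkin2023approximate}); this is the step where $\min_{1\leq m\leq N}\Delta_{\nu_{\rm bw}^{\rm H}}^{(t)}(m,d)$ and, for incomplete layers, the exponent $h$ actually enter, and a short algebraic manipulation (of the type in Lemma~\ref{lem:simple_alg}) finishes. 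The detectability lemma appears only inside the brickwork/cluster-merging ingredient, not as a block-level reduction.

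The concrete gap is your complete case, which you call routine. When the block needs $l>2$ layers to connect, the restriction of $H_A$ to any two consecutive layers has a ground space strictly larger than $\mathrm{Im}\,P_{\rm all}^{(t)}$ (its 1D segments neither cover nor connect all sites), so the observation that unions of two matchings are block-diagonal over paths and cycles gives no lower bound on the gap of $H_A$ above $P_{\rm all}^{(t)}$, and a fortiori none on $\Delta_{\nu_A^{\rm H}}^{(t)}$, which moreover depends on the ordering of the layers in the product. The mechanism that combines information across all $l$ layers---merging clusters one layer at a time and comparing each merging step to a coarse-grained brickwork---is exactly what Lemmas~\ref{lem:fixed_decomp_into_two} and \ref{lem:fixed_clustermerge} supply, and in your sketch it is replaced by the phrase ``iterating the detectability-lemma / finite-size comparison of Ref.~\cite{belkin2023approximate}.'' Likewise, in the incomplete case you reverse-engineer $h=8\lceil\log_2\lfloor\log_2(N+1)\rfloor\rceil+1$ and yourself list the three unproven ingredients; those are not side conditions but the theorem. (To be fair, the paper also does not reprove cluster merging---Proposition~\ref{prop:Haar_fixed_arch} is imported from Ref.~\cite{belkin2023approximate}---but its overview isolates the correct skeleton, whereas a parent-Hamiltonian reduction for the whole block is not, by itself, a workable substitute.)
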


In the case where the local dimension is \(d=2\), the spectral gap for a 1D brickwork circuit is lower bounded as \(\min_{m\leq N} \Delta_{\nu_{\rm bw}^{\rm H}}^{(t)}(m,d) \geq (\polylog t)^{-1}\), as shown in Ref.~\cite{chen2024incompressibility}. Therefore, Proposition~\ref{prop:Haar_fixed_arch} implies that for a circuit on an \(N\)-qubit system with complete layers, we have \(\Delta_{\nu^{\rm H}_A}^{(t)} \geq (\polylog t)^{-1}\). On the other hand, in the case of incomplete layers, we have \(\Delta_{\nu^{\rm H}_A}^{(t)} \geq \exp\Bigl(-\mathrm{Const}\cdot\log \log N \cdot \log \log t\Bigr)\). 
While the known bounds for the incomplete architectures are much looser than those for the complete architectures due to a subtle technical issue, it is conjectured in Ref.~\cite{belkin2023approximate} that a tighter lower bound, independent of the system size \(N\), exists even for incomplete architectures.

For later convenience, we here briefly overview the derivation of the Proposition~\ref{prop:Haar_fixed_arch}.
The derivation essentially consists of Lemmas~\ref{lem:fixed_decomp_into_two} and \ref{lem:fixed_clustermerge}, which are introduced below.
\begin{lemma}[Bound for the spectral gap with $\gamma^{(t)}$. Theorem 7 of Ref.~\cite{belkin2023approximate} using our terminology] \label{lem:fixed_decomp_into_two}
Let $\nu^{\rm H}_{A}$ be the unitary ensemble of an $l$-layer connected block of a Haar random circuit with a fixed architecture, defined as $\nu^{\rm H}_{A} \equiv *_{1\leq j\leq l} \nu_j^{\rm H}$, where $\nu_j^{\rm H}$ is the ensemble of the $j$-th layer. Then, its spectral gap $\Delta^{(t)}_{\nu^{\rm H}_{A}}$ satisfies
\begin{equation}
\label{seq:lem_fixed_thm7}
    \left(1- \Delta_{\nu^{\rm H}_{A}}^{(t)} \right)^2  \leq 1 - \left(\prod_{j=2}^{l} \gamma^{(t)}(\nu_j^{\rm H},\ast_{1\leq k < j} \nu_k^{\rm H}) \right),
\end{equation}
where $\gamma^{(t)} (\nu,\mu)$ is defined as 
\begin{equation}
\label{seq:gamma_def}
    \gamma^{(t)} (\nu,\mu) \equiv 1- \left\|P^{(t)}_{c(\mu)}P^{(t)}_{c(\nu)}-P^{(t)}_{c(\nu * \mu)} \right\|_{\infty}^2.
\end{equation}
\end{lemma}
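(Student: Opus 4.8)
The plan is to reduce the spectral-gap estimate for the whole $l$-layer block to a telescoping product of two-ensemble "merging" factors $\gamma^{(t)}$, by peeling off one layer at a time from the right. First I would set up the notation: write $Q_j \equiv P^{(t)}_{c(\nu_j^{\rm H})}$, so that $M^{(t)}_{\nu^{\rm H}_A} = Q_l Q_{l-1}\cdots Q_1$, and let $P_k \equiv P^{(t)}_{c(\ast_{1\le i\le k}\nu_i^{\rm H})}$ denote the projector onto the invariant subspace of the union of clusters of the first $k$ layers; note $P_1 = Q_1$ and $P_l = P^{(t)}_{\rm all}$. Since each $Q_j$ and each $P_k$ is an orthogonal projector, and since $P_k$ projects onto a subspace contained in the range of $Q_k$ (and of every $Q_i$ with $i\le k$), we have the absorption identities $Q_j P_k = P_k Q_j = P_k$ whenever $j \le k$, and in particular $P_k P_l = P_l P_k = P_l$. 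These are exactly the relations that make the telescoping work.

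The key step is a single-layer recursion. Consider $\|Q_l\cdots Q_1 - P_l\|_\infty$. Because $P_{l-1}$ is absorbed by each of $Q_1,\dots,Q_{l-1}$, one can insert $P_{l-1}$ and split $Q_l Q_{l-1}\cdots Q_1 - P_l = (Q_l P_{l-1} - P_l)(Q_{l-1}\cdots Q_1) + \text{(terms supported off the range of }P_{l-1})$; more carefully, I would follow the argument of Ref.~\cite{belkin2023approximate} and bound the squared operator norm $\|Q_l\cdots Q_1 - P_l\|_\infty^2$ by combining (i) the estimate $\|Q_l P_{l-1} - P_l\|_\infty^2 = 1 - \gamma^{(t)}(\nu_l^{\rm H}, \ast_{k<l}\nu_k^{\rm H})$, which is just the definition~\eqref{seq:gamma_def} of $\gamma^{(t)}$ rewritten for the pair $(\nu_l^{\rm H}, \ast_{1\le k<l}\nu_k^{\rm H})$, with (ii) the inductive bound on $\|Q_{l-1}\cdots Q_1 - P_{l-1}\|_\infty^2$. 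The clean way to phrase the one-step inequality is: if $A$ is a contraction and $P\sqsubseteq Q$ are projectors with $QP=P$, then $\|QA - P\|_\infty^2 \le 1 - \gamma\,\bigl(1 - \|A-P\|_\infty^2\bigr)$ where $\gamma = 1 - \|QP_\perp\|_\infty^2$ measures how much $Q$ fails to preserve the complement — here $A = Q_{l-1}\cdots Q_1$, $P = P_{l-1}$, $Q = Q_l$, and $P_\perp$-structure is encoded in $P_l$. Iterating this from $j=l$ down to $j=2$, with base case $\|Q_1 - P_1\|_\infty = \|Q_1 - Q_1\|_\infty = 0$, produces
\begin{equation*}
\bigl(1 - \Delta_{\nu^{\rm H}_A}^{(t)}\bigr)^2 = \|Q_l\cdots Q_1 - P_l\|_\infty^2 \le 1 - \prod_{j=2}^{l}\gamma^{(t)}\bigl(\nu_j^{\rm H}, \ast_{1\le k<j}\nu_k^{\rm H}\bigr),
\end{equation*}
which is the claimed bound.

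The main obstacle is the one-step inequality (i)+(ii): making precise why the "error" $Q_{l-1}\cdots Q_1 - P_{l-1}$ interacts with the outer projector $Q_l$ in a way that multiplies the defects rather than just adding them. This requires decomposing the Hilbert space according to the nested ranges $\mathrm{ran}\,P_l \subseteq \mathrm{ran}\,P_{l-1} \subseteq \mathrm{ran}\,Q_j$ and carefully tracking that the component of $A - P_{l-1}$ lying inside $\mathrm{ran}\,P_{l-1}$ is annihilated after subtracting $P_l$ (since $Q_l$ absorbs $P_{l-1}$ only partially), while the component orthogonal to $\mathrm{ran}\,P_{l-1}$ is damped by the factor $\gamma$ coming from $Q_l$. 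This is precisely the content of Theorem 7 of Ref.~\cite{belkin2023approximate}, so I would cite that derivation and only restate it in the cluster/projector terminology fixed above; the algebra is otherwise the routine projector manipulation already used implicitly in Eq.~\eqref{seq:conv_mom_op} and in the proof of Proposition~\ref{prop:repetition_circuit}.
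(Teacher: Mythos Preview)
Your recursive strategy---peel off one layer at a time and iterate a one-step inequality---is the right idea and matches the paper. But your abstract one-step lemma is misstated and, as written, vacuous. You posit a \emph{single} projector $P$ with $QP=P$ and set $\gamma=1-\|QP_\perp\|_\infty^2$; but if $Q$ is a projector with $\mathrm{ran}\,P\subsetneq\mathrm{ran}\,Q$, then $Q(\mathbbm{1}-P)$ has norm $1$ (take any unit vector in $\mathrm{ran}\,Q\cap(\mathrm{ran}\,P)^\perp$), so $\gamma=0$ and the inequality is trivial. Your instantiation also violates the hypothesis: with $P=P_{l-1}$ and $Q=Q_l$ one does \emph{not} have $Q_lP_{l-1}=P_{l-1}$, since your own absorption identity $Q_jP_k=P_k$ holds only for $j\le k$. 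The correct one-step inequality must carry \emph{two} projectors, $P_{l-1}$ on the right and $P_l$ on the left:
\[
\|Q_lA-P_l\|_\infty^2\ \le\ 1-\gamma_l\bigl(1-\|A-P_{l-1}\|_\infty^2\bigr),\qquad \gamma_l=1-\|Q_lP_{l-1}-P_l\|_\infty^2,
\]
with $A=Q_{l-1}\cdots Q_1$ satisfying $AP_{l-1}=P_{l-1}A=P_{l-1}$.

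The paper does not attempt this norm inequality directly. It instead passes to the Gram operator and encodes the recursion as an \emph{operator inequality} (Lemma~\ref{lem:for_fixed_circuits} with $\alpha_\nu=1$):
\[
Q_l\,P_{l-1}\,Q_l\ \le\ (1-\gamma_l)\,\mathbbm{1}+\gamma_l\,P_l .
\]
With this in hand there is no ``main obstacle'': write $(Q_l\cdots Q_1-P_l)(Q_l\cdots Q_1-P_l)^\dag=Q_l\cdots Q_2\,Q_1\,Q_2\cdots Q_l-P_l$ and apply the inequality from the inside out, starting from $Q_1=P_1$, as in Eq.~\eqref{seq:prf_thm7_nHaar1}. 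The non-orthogonality issues you flag never arise, because one never splits $Q_lA-P_l$ into pieces; one bounds $AA^\dag\le(1-\beta)P_{l-1}+\beta\,\mathbbm{1}$ and conjugates by $Q_l$. In the paper, Lemma~\ref{lem:fixed_decomp_into_two} is obtained precisely as the $\alpha_{\nu_j}=1$ specialization of the proof of Lemma~\ref{lem:fixed_decomp_into_two_nonHaar}.
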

The term $\gamma^{(t)}(\nu,\mu)$ is entirely determined by the configurations of the clusters $c(\mu)$ and $c(\nu)$ and satisfies the inequality $0\leq \gamma^{(t)} \leq 1$. 
While the spectral gap $\Delta_{\nu^{\rm H}_{A}}^{(t)}$ itself is defined using the $l$-fold product of projectors, $M_{\nu^{\rm H}_{A}} \equiv  P_{c(\nu_l)}\cdots P_{c(\nu_2)}P_{c(\nu_1)}$, and is difficult to evaluate directly, Lemma~\ref{lem:fixed_decomp_into_two} decomposes it into the terms $\{\gamma^{(t)}(\nu_j^{\rm H},\ast_{1\leq k < j} \nu_k^{\rm H})\}_{j=2}^{l}$, which is defined solely by a two-fold product of projectors and is thus much easier to handle.
The derivation of Lemma~\ref{lem:fixed_decomp_into_two} is contained in the proof of its generalized version, Lemma~\ref{lem:fixed_decomp_into_two_nonHaar}, which is discussed later.

In Ref.~\cite{belkin2023approximate}, the upper bounds of $\gamma^{(t)}$ were obtained by developing a proof technique called the cluster-merging method.
By utilizing this technique, we can derive the following lemma:
\begin{lemma}[Upper bound of $\gamma^{(t)}$. Theorem 8 and Appendix D of Ref.~\cite{belkin2023approximate} using our terminology] \label{lem:fixed_clustermerge}
Suppose $\{\nu_i\}_i$ are unitary ensembles for a single layer of a random circuit with a fixed architecture on an $N$-qudit system with local dimension $d$.
Then, if $\nu_j$ consists of $\lfloor\frac{N}{2}\rfloor$ two-qudit unitary gates, we have
\begin{equation}
    \gamma^{(t)}(\nu_j,\ast_{1\leq k < j} \nu_k)  \leq 1- \left(1- f_{\rm c}(N,t,d) \right)^2,
\end{equation}
where $f_{\rm c}$ is defined as Eq.~\eqref{eq:spec_LB_comp_Haar_random}.
If $\nu_j$ consists of fewer than $\lfloor\frac{N}{2}\rfloor$ two-qudit gates, we have
\begin{equation}
    \gamma^{(t)}(\nu_j,\ast_{1\leq k < j} \nu_k)  \leq 1- \left(1- f_{\rm i}(N,t,d) \right)^2,
\end{equation}
where $f_{\rm i}$ is defined as Eq.~\eqref{eq:spec_LB_incomp_Haar_random}.
\end{lemma}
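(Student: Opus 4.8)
The plan is to unfold the definition of $\gamma^{(t)}$, reduce to a single merged cluster, and then run the cluster-merging argument of Ref.~\cite{belkin2023approximate}. Since $\gamma^{(t)}(\nu,\mu)$ depends only on the cluster partitions, I may replace every two-qudit gate by a Haar gate and work throughout with projectors. Write $\mu \equiv \ast_{1\le k<j}\nu_k$ and let $c_{j-1}=c(\mu)$ and $c_j=c(\nu_j\ast\mu)$ be the cluster partitions after $j-1$ and $j$ layers, so that $c_j$ is the partition into connected components of $c_{j-1}$ together with the two-qudit pairs of $\nu_j$. After unfolding $\gamma^{(t)}$, the claim is equivalent to the operator-norm bound
\begin{equation}
\left\| P^{(t)}_{c_{j-1}} P^{(t)}_{c(\nu_j)} - P^{(t)}_{c_j} \right\|_{\infty} \le 1 - f_{\rm c}(N,t,d) \qquad \bigl(\text{resp.}\ \le 1 - f_{\rm i}(N,t,d)\bigr).
\end{equation}

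First I would reduce to a single new cluster. All three projectors respect the partition $c_j$, and on a component of $c_j$ that is merely an untouched cluster of $c_{j-1}$ the three restrictions coincide. Writing the operators as tensor products over the components of $c_j$ and using that, on each touched component $G$, the restriction $X_G \equiv P^{(t)}_{c_{j-1}}\big|_G\, P^{(t)}_{c(\nu_j)}\big|_G$ is a contraction while $Y_G \equiv P^{(t)}_{G}$ satisfies $X_GY_G=Y_GX_G=Y_G$ (because $\mathrm{Inv}(G)$ is contained in the range of each of the two projectors), a short calculation on such tensor products gives $\|P^{(t)}_{c_{j-1}}P^{(t)}_{c(\nu_j)}-P^{(t)}_{c_j}\|_{\infty} = \max_G \|X_G - Y_G\|_{\infty}$, the maximum over the components merged by $\nu_j$. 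So it suffices to prove the bound for a single such component $G$: that merging the sub-clusters $S_1,\dots,S_r$ of $c_{j-1}$ lying inside $G$ by the gates of $\nu_j$ inside $G$ closes the gap towards $\mathrm{Inv}(G)$ by at least $f_{\rm c}(N,t,d)$ (resp.\ $f_{\rm i}$).

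The heart of the proof is the cluster-merging step. By Schur--Weyl duality, $\mathrm{Inv}(S_a)$ is spanned by the permutation vectors on the block $S_a$ and $\mathrm{Inv}(G)$ by those on $G$, so $X_G$ lives inside a permutation-module space whose structure is governed only by the block sizes $m_a=|S_a|$ (with $\sum_a m_a = |G|\le N$) and by which sub-clusters the gates of $\nu_j\big|_G$ join. I would then invoke the cluster-merging method of Ref.~\cite{belkin2023approximate}, which compares this merging geometry with a two-layer brickwork block and shows that the gap of $X_G$ towards $Y_G$ is no smaller than that of a two-layer Haar brickwork block on $m$ qudits for an appropriate $m\le N$, hence at least $\Delta_{\nu_{\rm bw}^{\rm H}}^{(t)}(m,d)\ge f_{\rm c}(N,t,d)$. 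When $\nu_j$ is complete, its gates form a perfect matching of the original qudits, so $P^{(t)}_{c(\nu_j)}$ is a single disjoint layer and the comparison is direct, producing $f_{\rm c}$ at the first power and, squaring back, $\gamma^{(t)}(\nu_j,\mu)\ge 1-(1-f_{\rm c}(N,t,d))^2$. When $\nu_j$ is incomplete, $P^{(t)}_{c(\nu_j)}$ is still a single disjoint layer but no longer a matching, so a component $G$ may be traversed only through a long chain of sub-clusters and the one-shot comparison fails; instead one merges $G$ recursively, repeatedly splitting its sub-cluster set into two halves and applying the brickwork comparison to each sub-merge, then composing. Tracking the precise constants from Appendix~D of Ref.~\cite{belkin2023approximate}, this recursion terminates after $h = 8\lceil\log_2\lfloor\log_2(N+1)\rfloor\rceil+1$ levels, which yields $\gamma^{(t)}(\nu_j,\mu)\ge 1-(1-f_{\rm i}(N,t,d))^2$ with $f_{\rm i}=f_{\rm c}^{\,h}$.

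The main obstacle is this comparison step: the super-sites $S_a$ carry unequal dimensions $d^{m_a}$ and the merging multigraph is not literally a 1D brickwork circuit, so the real work is the Schur--Weyl reduction that nonetheless controls the merging gap by the \emph{brickwork} gap $\Delta_{\nu_{\rm bw}^{\rm H}}^{(t)}(m,d)$ at the right qudit count $m\le N$ (this is where $f_{\rm c}=\min_{m\le N}\Delta_{\nu_{\rm bw}^{\rm H}}^{(t)}(m,d)$ enters). A secondary difficulty, confined to the incomplete case, is the combinatorial bookkeeping of the recursive halving needed to extract the exact exponent $h$ rather than a looser $\poly(\log N)$ factor.
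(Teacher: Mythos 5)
The paper gives no proof of Lemma~\ref{lem:fixed_clustermerge}: it is imported wholesale, in the authors' terminology, from Theorem~8 and Appendix~D of Ref.~\cite{belkin2023approximate}, and the surrounding text only says that the cluster-merging method of that reference yields it. Your proposal is consistent with that treatment, and your preparatory reductions are correct: $\gamma^{(t)}$ depends only on the cluster configurations, so one may work entirely with projectors; all three projectors factor over the components of $c(\nu_j\ast\mu)$; and since $X_GY_G=Y_GX_G=Y_G$, so that $R_G\equiv X_G-Y_G$ obeys $Y_GR_G=R_GY_G=0$, the cross terms in the tensor expansion are mutually annihilating and the norm reduces to the maximum over merged components. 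However, the decisive step --- that the merging gap of a component built from super-sites of unequal dimension is controlled by the two-layer Haar brickwork gap $\min_{1\le m\le N}\Delta_{\nu_{\rm bw}^{\rm H}}^{(t)}(m,d)$, and, for incomplete layers, the recursive halving that produces the exponent $h=8\lceil\log_2\lfloor\log_2(N+1)\rfloor\rceil+1$ --- is precisely the content of the cited theorem, and you invoke it rather than establish it (you say so yourself). So as a self-contained proof the heart is missing; as a comparison with the paper, you and the authors defer the same core step to the same external result, and the framing you add around it is sound.

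One further point: what your sketch actually yields is the lower bound $\gamma^{(t)}(\nu_j,\ast_{1\le k<j}\nu_k)\ \ge\ 1-\bigl(1-f\bigr)^2$, equivalently $\|P^{(t)}_{c(\mu)}P^{(t)}_{c(\nu_j)}-P^{(t)}_{c(\nu_j\ast\mu)}\|_\infty\le 1-f$, whereas the lemma as printed asserts ``$\le$'' for $\gamma^{(t)}$. The direction you obtain is the one the paper actually needs and uses in the proof of Proposition~\ref{prop:gap_fixed_lnonHaar}, where a lower bound on $\prod_j\gamma^{(t)}$ is substituted, so the printed inequality sign (and the word ``upper'' in the lemma's title) is evidently a slip; your direction is the substantive statement, not an error on your part.
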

\noindent
By combining Lemmas~\ref{lem:fixed_decomp_into_two} and \ref{lem:fixed_clustermerge} and further performing some simple algebraic calculations, we can derive Proposition~\ref{prop:Haar_fixed_arch}.

\subsection{Unitary design formation in non-Haar random circuit} \label{Ssss:gen_fixed_nonHaar}

In this subsection, we derive the upper bound for the unitary design formation depth in non-Haar random circuit with generic fixed architectures.
The significant quantity here is the local spectral gap defined as 
\begin{equation}
\label{eq:Del_A_loc}
    \Delta_{\mathrm{loc},\nu_{A}}^{(t)} \equiv \min_{
   \substack{ \{j,k\}\in c(\nu_i),\\ 1\leq i \leq l} } \Delta^{(t)}_{j,k},
\end{equation}
where $\Delta^{(t)}_{j,k}$ is the spectral gap for the two-qudit unitary ensemble on the $j$-th and $k$-th qudits. The minimum in Eq.~\eqref{eq:Del_A_loc} is taken over all the two-qudit unitary ensembles involved in the $l$-layer connected block.
With this quantity we can derive the lower bound for the spectral gap as follows:
\begin{proposition}[Spectral gap of circuits with generic architectures] \label{prop:gap_fixed_lnonHaar}
Let $\nu_{A}$ be a unitary ensemble of connected block $A$ of non-Haar random circuit, and $l$ be its connection depth.
Then, 
\begin{equation}
    \Delta_{\nu_A}^{(t)} \geq \left(\Delta_{\mathrm{loc},\nu_{A}}^{(t)}\right)^{l} \cdot\mathrm{LB}_{A}^{(t)},
\end{equation}
where the local spectral gap  $\Delta_{\mathrm{loc},\nu_{A}}^{(t)}$ is defined as Eq.~\eqref{eq:Del_A_loc}, and $\mathrm{LB}_{A}^{(t)}$ is defined as Eqs.~\eqref{eq:spec_LB_comp_Haar_random} and \eqref{eq:spec_LB_incomp_Haar_random}.
\end{proposition}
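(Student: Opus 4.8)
The plan is to mirror the Haar analysis behind Proposition~\ref{prop:Haar_fixed_arch}, but to carry the non-Haarness of every layer through the argument and pay exactly one factor of the local spectral gap per layer. Concretely, I would establish the operator-norm estimate $\bigl(1-\Delta_{\nu_A}^{(t)}\bigr)^2 \le 1 - \bigl(\Delta_{\mathrm{loc},\nu_A}^{(t)}\bigr)^l\,\mathrm{LB}_A^{(t)}$ and then pass to the claimed spectral-gap bound exactly as at the end of the proof of Proposition~\ref{prop:repetition_circuit}, using $(1-x)^2 \ge 1-2x$ together with the slack already present in Lemmas~\ref{lem:ltog_gap} and \ref{lem:fixed_clustermerge} (each of whose bounds carries a spare factor $(2-x)\ge 1$) to absorb the resulting constant.

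First I would invoke the non-Haar extension of the detectability lemma, Lemma~\ref{lem:fixed_decomp_into_two_nonHaar} (Proposition~2 of Appendix~B), which upgrades the telescoping bound $\bigl(1-\Delta_{\nu_A^{\rm H}}^{(t)}\bigr)^2 \le 1 - \prod_{j=2}^{l}\gamma^{(t)}(\nu_j^{\rm H},\ast_{k<j}\nu_k^{\rm H})$ of Lemma~\ref{lem:fixed_decomp_into_two} to the case where each single-layer Haar projector $P_{c(\nu_j)}^{(t)}$ is replaced by the genuine single-layer moment operator $M_{\nu_j}^{(t)}$. Since $M_{\nu_j}^{(t)}$ is neither Hermitian nor a projector, the recursion there is run on the Gram matrix $M_{\nu_A}^{(t)}M_{\nu_A}^{(t)\dagger}$ rather than on $M_{\nu_A}^{(t)}$ itself: at the layer being peeled off one writes $M_{\nu_j}^{(t)}=P_{c(\nu_j)}^{(t)}+R_j$ with $R_j P_{c(\nu_j)}^{(t)}=0$ and, by Lemma~\ref{lem:ltog_gap}, $R_j R_j^\dagger \le \bigl(1-\Delta_{\mathrm{loc},\nu_j}^{(t)}\bigr)^2\bigl(\mathbbm{1}-P_{c(\nu_j)}^{(t)}\bigr)$, so the $R_j$-contribution collapses into a multiplicative factor governed by $\Delta_{\mathrm{loc},\nu_j}^{(t)}$ — precisely the Lemma~\ref{lem:ltog_gap}/Lemma~\ref{lem:cs_like} device already used for single-layer-connected circuits, now deployed inside the detectability-lemma recursion. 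Running this over all $l$ layers and bounding each $\Delta_{\mathrm{loc},\nu_j}^{(t)} \ge \Delta_{\mathrm{loc},\nu_A}^{(t)}$ via the definition in Eq.~\eqref{eq:Del_A_loc} produces the $l$-th power $\bigl(\Delta_{\mathrm{loc},\nu_A}^{(t)}\bigr)^l$, leaving behind precisely the pure-projector quantities $\gamma^{(t)}(\nu_j,\ast_{k<j}\nu_k)$, which depend only on the cluster configuration and are therefore identical to their Haar counterparts.

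At that point the residual expression coincides with the one in the Haar case, so I would close by applying Lemma~\ref{lem:fixed_clustermerge} to each $\gamma^{(t)}(\nu_j,\ast_{k<j}\nu_k)$ — with the bound $1-(1-f_{\rm c})^2$ for complete layers and $1-(1-f_{\rm i})^2$ for incomplete layers — and then running the same algebra that turns Lemmas~\ref{lem:fixed_decomp_into_two} and \ref{lem:fixed_clustermerge} into Proposition~\ref{prop:Haar_fixed_arch}; this reproduces exactly the factor $\mathrm{LB}_A^{(t)}$ of Eqs.~\eqref{eq:spec_LB_comp_Haar_random} and \eqref{eq:spec_LB_incomp_Haar_random}. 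Combined with the first step this yields the operator-norm estimate above, hence the proposition.

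The main obstacle is the first step: making the cluster-merging/detectability argument of Ref.~\cite{belkin2023approximate} survive the substitution of non-Hermitian moment operators for projectors without losing more than a single factor of the local gap per layer. A naive term-by-term expansion of $\prod_j\bigl(P_{c(\nu_j)}^{(t)}+R_j\bigr)$ incurs an exponential-in-$l$ number of cross terms, each only $O\bigl(1-\Delta_{\mathrm{loc}}^{(t)}\bigr)$-suppressed, which is hopeless; the fix is to work throughout with Gram matrices so that the $R_j$'s enter only through the positive semidefinite combination $R_j R_j^\dagger$ and can be absorbed one layer at a time. A secondary point is to check that the $N$-dependence of the incomplete-architecture bound is not amplified by the new factors, but since those factors are the $N$-independent local spectral gaps, all $N$-dependence remains confined to $\mathrm{LB}_A^{(t)}$, exactly as in the Haar analysis.
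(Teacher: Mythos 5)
Your proposal follows the paper's own route up to the last step: the paper likewise proves Proposition~\ref{prop:gap_fixed_lnonHaar} by (i) invoking Lemma~\ref{lem:fixed_decomp_into_two_nonHaar}, whose proof is exactly the Gram-matrix recursion you describe (decompose each layer's moment operator into the projector part plus a residual $R_{\nu_j}^{(t)}$, bound $R_{\nu_j}^{(t)}R_{\nu_j}^{(t)\dag}$ via Eq.~\eqref{eq:R_bound_fixed_arch}, and peel off one layer at a time through Lemma~\ref{lem:for_fixed_circuits}), (ii) lower bounding $\alpha_{\nu_j}^{(t)}\ge 1-(1-\Delta_{\mathrm{loc},\nu_A}^{(t)})^2$ through Lemma~\ref{lem:ltog_gap}, and (iii) applying Lemma~\ref{lem:fixed_clustermerge} to the purely geometric quantities $\gamma^{(t)}(\nu_j,\ast_{k<j}\nu_k)$. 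So your identification of the key lemmas and of the Gram-matrix device is exactly right.

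The genuine gap is in your closing algebra. After (i)--(iii) one has $(1-\Delta_{\nu_A}^{(t)})^2\le 1-\bigl[1-(1-\Delta_{\mathrm{loc},\nu_A}^{(t)})^2\bigr]^{l}\bigl[1-(1-f)^2\bigr]^{l-1}$, and you propose to finish with $(1-x)^2\ge 1-2x$, absorbing the resulting factor $\tfrac12$ into the ``spare'' factors $(2-\Delta_{\mathrm{loc}})$ and $(2-f)$. That absorption would require $\tfrac12(2-\Delta_{\mathrm{loc},\nu_A}^{(t)})^{l}(2-f)^{l-1}\ge 1$, which fails exactly in the regime where the slack factors are close to $1$ (for $l=1$ it fails for every positive local gap, and for $l\ge 2$ it fails whenever $\Delta_{\mathrm{loc}}$ and $f$ both exceed $2-\sqrt{2}$); in general your route only yields $\Delta_{\nu_A}^{(t)}\ge\tfrac12\bigl[\Delta_{\mathrm{loc}}(2-\Delta_{\mathrm{loc}})\bigr]^{l}\bigl[f(2-f)\bigr]^{l-1}$, i.e.\ the proposition up to a constant factor rather than as stated. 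The paper avoids this loss with a dedicated algebraic step, Lemma~\ref{lem:simple_alg}: the Jensen-type inequality $\bigl(1-(1-x)^l(1-y)^k\bigr)^2\ge 1-(1-x^2)^l(1-y^2)^k$, which converts the displayed bound directly into $(1-\Delta_{\nu_A}^{(t)})^2\le\bigl(1-(\Delta_{\mathrm{loc},\nu_A}^{(t)})^{l}f^{l-1}\bigr)^2$ and hence the claimed bound with no constant lost. Replacing your $(1-x)^2\ge 1-2x$ shortcut by this inequality (or an equivalent statement you prove) makes your argument coincide with the paper's proof.
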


The skeleton of the derivation of Proposition~\ref{prop:gap_fixed_lnonHaar} follows the same structure as that of Proposition~\ref{prop:Haar_fixed_arch}. It consists of two key steps, the first is to relate the spectral gap with $\gamma^{(t)}$ and the second is to upper bound $\gamma^{(t)}$. Especially in the second step, we can directly repurpose Lemma~\ref{lem:fixed_clustermerge}.
Therefore, the only new task here concerns the first step.
Specifically, we derive a generalized version of Lemma~\ref{lem:fixed_decomp_into_two}, which extends its applicability from Haar random circuits to general non-Haar random circuits, as follows:
\begin{lemma}[Generalization of Lemma~\ref{lem:fixed_decomp_into_two} to non-Haar random circuit] \label{lem:fixed_decomp_into_two_nonHaar}
Let $\nu_A \equiv *_{1\leq j\leq l} \nu_j$ be the $l$-layer connected block of a non-Haar random circuit with a fixed architecture, where $\nu_j$ represents the $j$-th layer. Then, its spectral gap $\Delta_{\nu_A}^{(t)}$ satisfies 
\begin{equation}
\label{seq:lem_fixed_thm7_nHaar}
    \left(1- \Delta_{\nu_A}^{(t)} \right)^2  \leq 1 - \left( \prod_{j=1}^l \alpha_{\nu_j}^{(t)}\right) \left(\prod_{j=2}^{l} \gamma^{(t)}(\nu_j,\ast_{1\leq k < j} \nu_k) \right)
\end{equation}
where $\gamma^{(t)}$ is defined as Eq.~(\ref{seq:gamma_def}) and $\alpha_{\nu_j}^{(t)}$ is defined as
\begin{equation}
\label{seq:alpha_clu_def}
    \alpha_{\nu_j}^{(t)} \equiv 1- \|M_{\nu_j}^{(t)} - P_{c(\nu_j)}^{(t)}\|_{\infty}^2.
\end{equation}
\end{lemma}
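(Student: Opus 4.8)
The plan is to adapt the Haar-case argument (Lemma~\ref{lem:fixed_decomp_into_two}) so that the non-Haar moment operators $M_{\nu_j}^{(t)}$ are peeled off one layer at a time, each time paying a factor $\alpha_{\nu_j}^{(t)}$ and reducing to a product of projectors on which the Haar-case telescoping via $\gamma^{(t)}$ applies. As in all the proofs above, since $M_{\nu_A}^{(t)}$ need not be Hermitian, I would work with the Gram operator $(M_{\nu_A}^{(t)} - P_{\rm all}^{(t)})(M_{\nu_A}^{(t)} - P_{\rm all}^{(t)})^\dagger = M_{\nu_A}^{(t)} M_{\nu_A}^{(t)\dagger} - P_{\rm all}^{(t)}$, using $M_{\nu_A}^{(t)}P_{\rm all}^{(t)} = P_{\rm all}^{(t)}M_{\nu_A}^{(t)} = P_{\rm all}^{(t)}$. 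Writing $M_{\nu_A}^{(t)} = M_{\nu_l}^{(t)}\cdots M_{\nu_1}^{(t)}$, the goal is to show $\|M_{\nu_A}^{(t)}M_{\nu_A}^{(t)\dagger} - P_{\rm all}^{(t)}\|_\infty \le 1 - (\prod_j \alpha_{\nu_j}^{(t)})(\prod_{j\ge 2}\gamma^{(t)}(\nu_j,\ast_{k<j}\nu_k))$, and then use $(1-\Delta_{\nu_A}^{(t)})^2 = \|M_{\nu_A}^{(t)} - P_{\rm all}^{(t)}\|_\infty^2 = \|M_{\nu_A}^{(t)}M_{\nu_A}^{(t)\dagger} - P_{\rm all}^{(t)}\|_\infty$.

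The key operator-level step is a replacement lemma for a single layer: for each $j$, since $M_{\nu_j}^{(t)}$ acts nontrivially only on the cluster $c(\nu_j)$ and $\|M_{\nu_j}^{(t)} - P_{c(\nu_j)}^{(t)}\|_\infty^2 = 1 - \alpha_{\nu_j}^{(t)}$ with $P_{c(\nu_j)}^{(t)}(M_{\nu_j}^{(t)} - P_{c(\nu_j)}^{(t)}) = (M_{\nu_j}^{(t)} - P_{c(\nu_j)}^{(t)})P_{c(\nu_j)}^{(t)} = 0$, the same block-diagonal argument used in Lemma~\ref{lem:ltog_gap} gives $M_{\nu_j}^{(t)}M_{\nu_j}^{(t)\dagger} \le (1-\alpha_{\nu_j}^{(t)})\mathbbm{1} + \alpha_{\nu_j}^{(t)}P_{c(\nu_j)}^{(t)}$. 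Inserting this inside the nested product — i.e. bounding $X M_{\nu_j}^{(t)}M_{\nu_j}^{(t)\dagger} X^\dagger \le (1-\alpha_{\nu_j}^{(t)})XX^\dagger + \alpha_{\nu_j}^{(t)} X P_{c(\nu_j)}^{(t)} X^\dagger$ for the appropriate partial product $X = M_{\nu_{j-1}}^{(t)}\cdots M_{\nu_1}^{(t)}$ — lets me induct from the innermost layer outward, at each stage replacing one $M_{\nu_j}^{(t)}$ by the projector $P_{c(\nu_j)}^{(t)}$ at the cost of the multiplicative factor $\alpha_{\nu_j}^{(t)}$ and a discarded positive remainder (which, being dominated by $\mathbbm{1}$ times the complementary weights, contributes to the "$1-(\cdots)$" slack rather than the product). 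After all $l$ layers are replaced, what remains is exactly the Haar-type quantity $\|P_{c(\nu_l)}^{(t)}\cdots P_{c(\nu_1)}^{(t)} - P_{\rm all}^{(t)}\|_\infty^2$ bounded by $1 - \prod_{j\ge 2}\gamma^{(t)}(\nu_j,\ast_{k<j}\nu_k)$ via the telescoping identity of Lemma~\ref{lem:fixed_decomp_into_two}; combining the accumulated $\prod_j \alpha_{\nu_j}^{(t)}$ factor with this yields Eq.~\eqref{seq:lem_fixed_thm7_nHaar}.

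The main obstacle I anticipate is bookkeeping the remainder terms so that the $\alpha$-factors genuinely multiply rather than merely appearing additively: each substitution produces a term $(1-\alpha_{\nu_j}^{(t)})XX^\dagger$, and one must check that after projecting onto the complement of $\mathrm{Inv}$ (where the $\gamma$-estimates live) these residuals all collapse into a single clean bound of the form $1 - (\prod\alpha)(\prod\gamma)$. The cleanest way to handle this is probably to carry the induction on the scalar quantity $\|(\,\cdot\,)\,(\mathbbm{1} - P_{\rm all}^{(t)})\|_\infty$ directly, exploiting that every $M_{\nu_j}^{(t)}$ and $P_{c(\nu_j)}^{(t)}$ commutes with $P_{\rm all}^{(t)}$ and fixes $\mathrm{Inv}(\mathrm{U}(d^N)^{\otimes t,t})$, so that all operators restrict to the orthogonal complement and the $(1-\alpha)\mathbbm{1}$ pieces there are uniformly bounded by the product of the slacks; the detectability-lemma-style argument of Ref.~\cite{belkin2023approximate} then applies verbatim to the projector string. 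A secondary point to verify is that $\alpha_{\nu_j}^{(t)} \in [0,1]$ and that the layers $\nu_j$ within a connected block indeed consist of non-overlapping two-qudit gates, so that the tensor-product structure underlying the Lemma~\ref{lem:ltog_gap}-type estimate is available layer by layer.
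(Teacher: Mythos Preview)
Your two-stage plan --- first replace every $M_{\nu_j}$ by $P_{c(\nu_j)}$ at cost $\alpha_{\nu_j}$, then invoke the Haar telescoping for the $\gamma$-factors --- has a real gap at the second step of the first stage. The single-layer bound $M_{\nu_j}M_{\nu_j}^{\dagger}\le(1-\alpha_{\nu_j})\mathbbm{1}+\alpha_{\nu_j}P_{c(\nu_j)}$ applies only when the operator sandwiched between $M_{\nu_j}$ and $M_{\nu_j}^{\dagger}$ is the identity. After you peel the innermost layer of $M_{\nu_A}M_{\nu_A}^{\dagger}=M_{\nu_l}\cdots M_{\nu_2}M_{\nu_1}M_{\nu_1}^{\dagger}M_{\nu_2}^{\dagger}\cdots M_{\nu_l}^{\dagger}$, the next object is $M_{\nu_2}P_{c(\nu_1)}M_{\nu_2}^{\dagger}$, not $M_{\nu_2}M_{\nu_2}^{\dagger}$, and the naive replacement $M_{\nu}PM_{\nu}^{\dagger}\le(1-\alpha_{\nu})\mathbbm{1}+\alpha_{\nu}P_{c(\nu)}PP_{c(\nu)}$ is simply false: writing $M_{\nu}=P_{c(\nu)}+R_{\nu}$, the cross terms $P_{c(\nu)}PR_{\nu}^{\dagger}+R_{\nu}PP_{c(\nu)}$ do not vanish for a generic projector $P$, and a two-dimensional check (take $P_{c(\nu)}=|0\rangle\langle0|$, $R_{\nu}=r|1\rangle\langle1|$, $P=|+\rangle\langle+|$) shows the inequality fails for any $0<r<1$. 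So you cannot arrive at the product $P_{c(\nu_l)}\cdots P_{c(\nu_1)}$ while paying only multiplicative $\alpha$-factors; the ``bookkeeping'' obstacle you flag is not a bookkeeping issue but a structural one.

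The paper circumvents this by \emph{interleaving} the $\alpha$- and $\gamma$-steps in a single sandwich lemma (Lemma~\ref{lem:for_fixed_circuits}). The point is that at stage $j$ the inner projector is not arbitrary: it is $P_{c(\mu)}$ with $\mu=\ast_{k<j}\nu_k$, and since $P_{c(\nu_j*\mu)}$ is a common sub-projection of both $P_{c(\nu_j)}$ and $P_{c(\mu)}$ one has the \emph{triple} orthogonal decomposition $M_{\nu_j}=P_{c(\nu_j*\mu)}+Q_{\nu_j\setminus\mu}+R_{\nu_j}$ (Eq.~\eqref{seq:mom_tri_decomp}). Using $P_{c(\nu_j*\mu)}Q_{\mu\setminus\nu_j}=0$ and $(Q_{\nu_j\setminus\mu}+R_{\nu_j})P_{c(\nu_j*\mu)}=0$ kills exactly the cross terms that break your approach, and one obtains directly
\[
M_{\nu_j}P_{c(\mu)}M_{\nu_j}^{\dagger}\le\bigl[1-\alpha_{\nu_j}\gamma(\nu_j,\mu)\bigr]\mathbbm{1}+\alpha_{\nu_j}\gamma(\nu_j,\mu)\,P_{c(\nu_j*\mu)},
\]
so that each recursive step advances the inner projector to the merged cluster and contributes the combined factor $\alpha_{\nu_j}\gamma(\nu_j,\mu)$; iterating $l$ times gives Eq.~\eqref{seq:lem_fixed_thm7_nHaar}. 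In short, the $\alpha$- and $\gamma$-contributions cannot be separated into two passes; they have to be extracted together, exploiting the specific cluster relationship between $\nu_j$ and the accumulated $\mu$.
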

\noindent
When $\nu$ is an ensemble for Haar random circuit, Eq.~(\ref{seq:lem_fixed_thm7_nHaar}) is reduced to Eq.~(\ref{seq:lem_fixed_thm7}) since $\alpha^{(t)}_{\nu_j} =1$ for all $j$. This means that Lemma~\ref{lem:fixed_decomp_into_two_nonHaar} is the generalization of Lemma~\ref{lem:fixed_decomp_into_two}.

For the proof of Lemma~\ref{lem:fixed_decomp_into_two_nonHaar}, the decomposition of the projector $P_{c(\nu)}^{(t)}$ into the orthogonal subspaces is essential.
It is known that the intersection of the subspaces where $P_{c(\nu)}^{(t)}$ and $P_{c(\mu)}^{(t)}$ act nontrivially coincides with the subspace that $P_{c(\nu*\mu)}^{(t)}$ acts nontrivially (see e.g., Lemma 17 of Ref.~\cite{brandao2016local}). 
Therefore, the projectors $P_{c(\nu)}^{(t)}$ and $P_{c(\mu)}^{(t)}$ can be decomposed into the projector on the common subspace $P_{c(\nu*\mu)}^{(t)}$ and those on the orthogonal complement, defined as
\begin{equation}
\label{seq:Q_clus_def}
    Q_{\nu\setminus \mu}^{(t)} \equiv P_{c(\nu)}^{(t)} - P_{c(\nu*\mu)}^{(t)}, \quad 
    Q_{\mu\setminus\nu}^{(t)} \equiv P_{c(\mu)}^{(t)}- P_{c(\nu*\mu)}^{(t)}.
\end{equation}
Here, the orthogonal relations $Q_{\nu\setminus \mu}^{(t)} P_{c(\nu*\mu)}^{(t)} = P_{c(\nu*\mu)}^{(t)}Q_{\nu\setminus \mu}^{(t)} =0$ and $Q_{\mu\setminus \nu}^{(t)} P_{c(\nu*\mu)}^{(t)} = P_{c(\nu*\mu)}^{(t)}Q_{\mu\setminus \nu}^{(t)} =0$ are satisfied. By further decomposing the moment operator as $M_{\nu}^{(t)} = P_{c(\nu)}^{(t)} +R_\nu^{(t)}$, where $P_{c(\nu)}^{(t)}R_\nu^{(t)} = R_\nu^{(t)}P_{c(\nu)}^{(t)}=0$, we can obtain the expression
\begin{equation}
\label{seq:mom_tri_decomp}
    M_{\nu}^{(t)} = P_{c(\nu*\mu)}^{(t)} + Q_{\nu\setminus \mu}^{(t)} + R_\nu^{(t)},
\end{equation}
where all pairwise products vanishing.
With $\alpha_{\nu}^{(t)}$ defined as Eq.~\eqref{seq:alpha_clu_def}, we can derive the inequality
\begin{equation}
\label{eq:R_bound_fixed_arch}
    R_\nu^{(t)} R_\nu^{(t)\dag} \leq (1-\alpha_{\nu}^{(t)}) (\mathbbm{1}-P^{(t)}_{c(\nu)})
\end{equation}

Using this decomposition, the following lemma is derived, which is recursively used in the proof of Lemma~\ref{lem:fixed_decomp_into_two_nonHaar}:
\begin{lemma}\label{lem:for_fixed_circuits}
Let $\nu$ and $\mu$ be unitary ensembles composed of local gates, and $\gamma^{(t)} (\nu,\mu)$ and $\alpha_\nu^{(t)}$ be defined as Eqs.~(\ref{seq:gamma_def}) and~(\ref{seq:alpha_clu_def}), respectively. Then, the following inequality is satisfied:
\begin{equation}
\label{seq:lem_fixed}
M_\nu^{(t)} P_{c(\mu)}^{(t)}  M_\nu^{(t) \dag} 
\leq \left[1 - \alpha_\nu^{(t)} \gamma^{(t)} (\nu,\mu)\right] \mathbbm{1} + \alpha_\nu^{(t)} \gamma^{(t)} (\nu,\mu)  P^{(t)}_{c(\nu * \mu)}.
\end{equation}
\end{lemma}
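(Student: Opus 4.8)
The plan is to reduce the asserted operator inequality, Eq.~\eqref{seq:lem_fixed}, to a single operator-norm estimate by peeling off the part of $P_{c(\mu)}^{(t)}$ that the moment operator $M_\nu^{(t)}$ sends back into the common invariant subspace. I abbreviate the superscript $(t)$ throughout and write $\gamma = \gamma^{(t)}(\nu,\mu)$, $\alpha = \alpha_\nu^{(t)}$.

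\emph{Step 1 (split along the common cluster).} Using Eq.~\eqref{seq:Q_clus_def} I would write $P_{c(\mu)} = P_{c(\nu*\mu)} + Q_{\mu\setminus\nu}$, so that $M_\nu P_{c(\mu)} M_\nu^\dag = M_\nu P_{c(\nu*\mu)} M_\nu^\dag + M_\nu Q_{\mu\setminus\nu} M_\nu^\dag$. From the decomposition $M_\nu = P_{c(\nu)} + R_\nu$ in Eq.~\eqref{seq:mom_tri_decomp}, together with $R_\nu P_{c(\nu)} = 0$ and $P_{c(\nu)} P_{c(\nu*\mu)} = P_{c(\nu*\mu)}$ (the latter because $Q_{\nu\setminus\mu}$ and $P_{c(\nu*\mu)}$ are orthogonal projectors summing to $P_{c(\nu)}$), one gets $M_\nu P_{c(\nu*\mu)} = P_{c(\nu*\mu)}$ and hence $M_\nu P_{c(\nu*\mu)} M_\nu^\dag = P_{c(\nu*\mu)}$. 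The same identity $P_{c(\nu*\mu)} M_\nu = P_{c(\nu*\mu)}$, combined with $P_{c(\nu*\mu)} Q_{\mu\setminus\nu} = 0$, gives $P_{c(\nu*\mu)} M_\nu Q_{\mu\setminus\nu} = 0$, so $M_\nu Q_{\mu\setminus\nu} M_\nu^\dag$ has support inside the range of $\mathbbm{1} - P_{c(\nu*\mu)}$. Therefore it is enough to prove the scalar bound $\| M_\nu Q_{\mu\setminus\nu}\|_{\infty}^2 \le 1 - \alpha\gamma$: this yields $M_\nu Q_{\mu\setminus\nu} M_\nu^\dag \le (1-\alpha\gamma)(\mathbbm{1} - P_{c(\nu*\mu)})$, and adding back $P_{c(\nu*\mu)}$ reproduces exactly Eq.~\eqref{seq:lem_fixed}.

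\emph{Step 2 (the norm estimate, on vectors).} I would fix a unit vector $\ket{\psi}$ and set $\ket{\xi} \equiv Q_{\mu\setminus\nu}\ket{\psi}$, so $\|\xi\| \le 1$. Writing $M_\nu\ket{\xi} = P_{c(\nu)}\ket{\xi} + R_\nu\ket{\xi}$ and using that $P_{c(\nu)}R_\nu = R_\nu P_{c(\nu)} = 0$ makes these two images orthogonal, so $\|M_\nu\ket{\xi}\|^2 = \|P_{c(\nu)}\ket{\xi}\|^2 + \|R_\nu\ket{\xi}\|^2$. For the first piece, $P_{c(\nu)}Q_{\mu\setminus\nu} = Q_{\nu\setminus\mu}Q_{\mu\setminus\nu}$, and since $P_{c(\mu)}P_{c(\nu)} - P_{c(\nu*\mu)} = Q_{\mu\setminus\nu}Q_{\nu\setminus\mu}$ the definition of $\gamma$ in Eq.~\eqref{seq:gamma_def} gives $\|Q_{\nu\setminus\mu}Q_{\mu\setminus\nu}\|_{\infty}^2 = 1-\gamma$, hence $\|P_{c(\nu)}\ket{\xi}\|^2 \le (1-\gamma)\|\xi\|^2$. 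For the second piece, $R_\nu$ annihilates the $P_{c(\nu)}$-component of $\ket{\xi}$, so $R_\nu\ket{\xi} = R_\nu(\mathbbm{1}-P_{c(\nu)})\ket{\xi}$, and Eq.~\eqref{eq:R_bound_fixed_arch} (which gives $\|R_\nu\|_{\infty}^2 \le 1-\alpha$) yields $\|R_\nu\ket{\xi}\|^2 \le (1-\alpha)(\|\xi\|^2 - \|P_{c(\nu)}\ket{\xi}\|^2)$. Adding the two,
\[ \|M_\nu Q_{\mu\setminus\nu}\ket{\psi}\|^2 \le (1-\alpha)\|\xi\|^2 + \alpha\,\|P_{c(\nu)}\ket{\xi}\|^2 \le \bigl[(1-\alpha) + \alpha(1-\gamma)\bigr]\|\xi\|^2 = (1-\alpha\gamma)\|\xi\|^2 \le 1-\alpha\gamma, \]
using $0 \le \alpha \le 1$; taking the supremum over $\ket{\psi}$ gives the required bound and completes the argument.

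\emph{Main obstacle.} The delicate point is the cross terms between the $P_{c(\nu)}$- and $R_\nu$-parts of $M_\nu$: a direct expansion of $M_\nu Q_{\mu\setminus\nu}M_\nu^\dag$ produces off-diagonal operators such as $P_{c(\nu)}Q_{\mu\setminus\nu}R_\nu^\dag$ that are hard to control as operators. Passing to the vector quantity $\|M_\nu Q_{\mu\setminus\nu}\ket{\psi}\|^2$ and exploiting the orthogonality $P_{c(\nu)}\ket{\xi} \perp R_\nu\ket{\xi}$ removes those cross terms; the remaining subtlety is to notice that $R_\nu$ kills $P_{c(\nu)}\ket{\xi}$, which is exactly what lets the ``$\alpha$-saving'' and the ``$\gamma$-saving'' multiply into the clean factor $1-\alpha\gamma$ rather than merely add.
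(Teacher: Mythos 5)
Your proof is correct and follows essentially the same route as the paper: the paper likewise splits $M_\nu P_{c(\mu)}M_\nu^{\dag}$ into $P_{c(\nu*\mu)}$ plus a positive remainder supported on the orthogonal complement, and reduces the claim to bounding $\|(Q_{\nu\setminus\mu}+R_\nu)Q_{\mu\setminus\nu}\|_{\infty}^2=\|M_\nu Q_{\mu\setminus\nu}\|_{\infty}^2$ by $1-\alpha_\nu^{(t)}\gamma^{(t)}(\nu,\mu)$, using exactly the two facts you use ($\|Q_{\mu\setminus\nu}Q_{\nu\setminus\mu}\|_{\infty}^2=1-\gamma^{(t)}$ and $\|R_\nu\|_{\infty}^2\le 1-\alpha_\nu^{(t)}$, with cross terms killed by $P_{c(\nu)}R_\nu=R_\nu P_{c(\nu)}=0$). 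The only difference is presentational: you carry out the final estimate on vectors via a Pythagorean decomposition, whereas the paper performs the equivalent manipulation at the level of operator inequalities.
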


\begin{proof}
Here, we omit the superscript $(t)$ for simplicity. By utilizing the decomposition in Eq.~(\ref{seq:mom_tri_decomp}), we can derive the following inequality:
\begin{equation}
\label{seq:lem_fixed_prf1}
\begin{split}
M_{\nu} P_{c(\mu)} M_{\nu}^{\dag} 
&=  (P_{c(\nu * \mu)} + Q_{\nu \setminus \mu} + R_\nu )( P_{c(\nu * \mu)} + Q_{\mu \setminus \nu})(P_{c(\nu * \mu)} + Q_{\nu \setminus \mu} + R_\nu^\dag) \\ 
&= P_{c(\nu * \mu)} + (Q_{\nu \setminus \mu} + R_\nu ) Q_{\mu \setminus \nu}(Q_{\nu \setminus \mu} + R_\nu^\dag)  \\
&\leq P_{c(\nu * \mu)} +  \| (Q_{\nu \setminus \mu} + R_\nu ) Q_{\mu \setminus \nu}\|_{\infty}^2 (\mathbbm{1} - P_{c(\nu * \mu)} ) \\
&= P_{c(\nu * \mu)} +  \| Q_{\mu \setminus \nu} (Q_{\nu \setminus \mu} + R_\nu^\dag R_\nu)Q_{\mu \setminus \nu}\|_{\infty} (\mathbbm{1} - P_{c(\nu * \mu)} ) .
\end{split}
\end{equation}
In the second line, we use $P_{c(\nu * \mu)}(Q_{\nu \setminus \mu}+R_\nu)=0$ and $P_{c(\nu * \mu)} Q_{\mu \setminus \nu} = 0$.
In the third line, we used the fact that the second term in the second line is Hermite positive semidefinite with a largest eigenvalue given by $\| (Q_{\nu \setminus \mu} + R_\nu ) Q_{\mu \setminus \nu}\|_{\infty}^2$, and it acts on the subspace orthogonal to $\mathrm{supp}[P_{c(\nu * \mu)}]$.

We further upper bound the operator norm appeared in the final line as follows:
\begin{equation}
\label{seq:lem_fixed_prf2}
\begin{split}
\|Q_{\mu \setminus \nu} (Q_{\nu \setminus \mu} + R_\nu^\dag R_\nu)Q_{\mu \setminus \nu} \|_{\infty} 
&\leq \left\|Q_{c(\mu \setminus \nu)} \left(Q_{\nu \setminus \mu} + (1 -\alpha_\nu)(\mathbbm{1}-P_{c(\nu * \mu)} - Q_{\nu \setminus \mu}) \right)Q_{c(\mu \setminus \nu)} \right\|_{\infty}  \\
&=\left\|(1-\alpha_\nu) Q_{\mu \setminus \nu} + \alpha_\nu Q_{\mu \setminus \nu} Q_{\nu \setminus \mu} Q_{\mu \setminus \nu} \right\|_{\infty} \\
&\leq (1-\alpha_\nu) +\alpha_\nu \left\| Q_{\mu \setminus \nu} Q_{\nu \setminus \mu} Q_{\mu \setminus \nu} \right\|_{\infty} \\
&= (1-\alpha_\nu) +\alpha_\nu \left\| Q_{\mu \setminus \nu} Q_{\nu \setminus \mu}\right\|_{\infty}^2 \\ 
&= (1-\alpha_\nu) +\alpha_\nu \left\| P_{c(\nu)} P_{c(\mu)} - P_{c(\nu * \mu)} \right\|_{\infty}^2 \\ 
&= (1-\alpha_\nu) +\alpha_\nu (1-\gamma (\nu,\mu)) \\
&= 1-\alpha_\nu \gamma (\nu,\mu),
\end{split}
\end{equation}
where in the first line, we used Eq.~\eqref{eq:R_bound_fixed_arch} and $\left\| A B A^\dag \right\|_{\infty} \leq \left\| A C A^\dag \right\|_{\infty}$ for all linear operators $A$, $B$, and $C$ satisfying $0 \leq B \leq C$. 
From Eqs.~(\ref{seq:lem_fixed_prf1}) and (\ref{seq:lem_fixed_prf2}), we can derive Eq.~(\ref{seq:lem_fixed}) as follows:
\begin{equation}
\begin{split}
    M_\nu P_{c(\mu)} M_\nu^{\dag} 
    &\leq  P_{c(\nu * \mu)} +  (1-\alpha_\nu \gamma (\nu,\mu)) (\mathbbm{1} - P_{c(\nu * \mu)} ) \\
    &= (1- \alpha_\nu \gamma (\nu,\mu)) \mathbbm{1} +  \alpha_\nu \gamma (\nu,\mu) P_{c(\nu * \mu)}.
\end{split}
\end{equation}
\end{proof}

\begin{proof}[Proof of Lemma~\ref{lem:fixed_decomp_into_two_nonHaar}]
We omit the superscript $(t)$ for simplicity. The moment operator satisifies the following inequality:
\begin{equation}
\label{seq:prf_thm7_nHaar1}
\begin{split}
    (M_{\nu} -P_{\rm all}) (M_{\nu} -P_{\rm all})^\dag  &= M_{\nu_l} \cdots M_{\nu_2} M_{\nu_1}M_{\nu_1}^\dag M_{\nu_2}^\dag  \cdots M_{\nu_l}^\dag - P_{\rm all} \\
    &\leq (1-\alpha_{\nu_1}) \mathbbm{1} + \alpha_{\nu_1} M_{\nu_l} \cdots M_{\nu_2} P_{c(\nu_1)}M_{\nu_2}^\dag  \cdots M_{\nu_l}^\dag - P_{\rm all} \\
    &\leq \left[ 1- \left( \prod_{j=1}^l \alpha_{\nu_j}\right) \left(\prod_{j=2}^{l} \gamma(\nu_j,\ast_{1\leq k < j} \nu_k) \right) \right] (\mathbbm{1} - P_{\rm all}),
\end{split}
\end{equation}
where we used Lemma \ref{lem:for_fixed_circuits} recursively in the third line. We can derive Eq.~(\ref{seq:lem_fixed_thm7_nHaar}) by taking the operator norm of both sides of Eq.~(\ref{seq:prf_thm7_nHaar1}).
\end{proof}

By utilizing Lemmas~\ref{lem:fixed_clustermerge} and \ref{lem:fixed_decomp_into_two_nonHaar}, we can derive Proposition~\ref{prop:gap_fixed_lnonHaar} as follows:
\begin{proof}[Proof of Proposition~\ref{prop:gap_fixed_lnonHaar}]
We here omit the superscript $(t)$ for simplicity.
By utilizing Lemma~\ref{lem:ltog_gap}, we have 
\[
\alpha_{\nu_j} = 1 - \|M_{\nu_j} -  P_{c(\nu_j)}\|_{\infty}^2 \geq 1 - (1-\Delta_{\mathrm{loc},\nu_j})^2 \geq 1 - (1-\Delta_{\mathrm{loc},\nu_A})^2,
\]
where the final inequality follows from the definition of $\Delta_{\mathrm{loc},\nu_A}$. 
Therefore, we can derive the following inequality:
\begin{equation}
\label{seq:con_block_gap1}
\begin{split}
\left( 1- \Delta_{\nu_A} \right)^2 
&\leq 1 - \left[1- (1-\Delta_{\mathrm{loc},\nu_A})^2 \right]^{l} \left(\prod_{j=2}^{l} \gamma (\nu_j,\ast_{1\leq k < j} \nu_k) \right) \\
&\leq  1 - \left[1- (1-\Delta_{\mathrm{loc},\nu_A})^2 \right]^{l} \left[1 - (1-f)^2\right]^{l-1} \\
&\leq \left[1- \left(\Delta_{\mathrm{loc},\nu_A}\right)^{l} f^{l-1} \right]^2 ,
\end{split}
\end{equation}
where we used Lemma~\ref{lem:fixed_decomp_into_two_nonHaar} in the first line, Lemma~\ref{lem:fixed_clustermerge} in the second line, and  Lemma~\ref{lem:simple_alg} (discussed below) in the third line. 
Here, the function $f$ is $f_{\rm c}$ when the architecture $A$ is complete, and $f$ is $f_{\rm i}$ otherwise.
From Eq.~(\ref{seq:con_block_gap1}), we can obtain the inequality
\begin{equation*}
    \Delta_{\nu_A} \geq \left(\Delta_{\mathrm{loc},\nu_A} \right)^{l}  \mathrm{LB}_{A}^{(t)},
\end{equation*}
which complete the proof.
\end{proof}

We here introduce Lemma~\ref{lem:simple_alg}, which is used in the proof of Proposition~\ref{prop:gap_fixed_lnonHaar}.
\begin{lemma} \label{lem:simple_alg}
Let $k, l\in\mathbb{Z}$ and $x, y\in[0, 1]$. 
Then,
\begin{equation}
\label{seq:alg_simple}
    \left(1- (1-x)^l(1-y)^k\right)^2 \geq 1 - (1-x^2)^l (1-y^2)^k.
\end{equation}
\end{lemma}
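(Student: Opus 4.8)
The plan is to prove the inequality $\left(1-(1-x)^l(1-y)^k\right)^2 \geq 1-(1-x^2)^l(1-y^2)^k$ for $x,y\in[0,1]$ and $k,l\in\mathbb{Z}_{\geq 0}$ by reducing it to a statement about a single product. First I would rewrite the right-hand side using the factorization $1-x^2=(1-x)(1+x)$ and $1-y^2=(1-y)(1+y)$, so that $(1-x^2)^l(1-y^2)^k = (1-x)^l(1-y)^k \cdot (1+x)^l(1+y)^k$. Setting $a \equiv (1-x)^l(1-y)^k \in [0,1]$ and $b \equiv (1+x)^l(1+y)^k \geq 1$, the claim becomes $(1-a)^2 \geq 1 - ab$, i.e. $1 - 2a + a^2 \geq 1 - ab$, i.e. $a^2 - 2a + ab \geq 0$, i.e. (dividing by $a\geq 0$, the case $a=0$ being trivial) $a + b \geq 2$.

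So the whole lemma collapses to showing $(1-x)^l(1-y)^k + (1+x)^l(1+y)^k \geq 2$ for $x,y\in[0,1]$ and nonnegative integers $k,l$. The natural tool here is the AM–GM inequality: $\frac{1}{2}\left[(1-x)^l(1-y)^k + (1+x)^l(1+y)^k\right] \geq \sqrt{(1-x)^l(1+x)^l(1-y)^k(1+y)^k} = \sqrt{(1-x^2)^l(1-y^2)^k}$. Since $x,y\in[0,1]$ we have $1-x^2\in[0,1]$ and $1-y^2\in[0,1]$, but that gives $\sqrt{(1-x^2)^l(1-y^2)^k}\leq 1$, which is the wrong direction. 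Hence AM–GM alone is insufficient and I would instead argue directly: expand $(1+x)^l = \sum_i \binom{l}{i} x^i$ and $(1-x)^l = \sum_i \binom{l}{i}(-x)^i$, so their sum is $2\sum_{i \text{ even}}\binom{l}{i}x^i \geq 2$ (the $i=0$ term alone gives $2$, all others are nonnegative). The same holds for the $y$-factors. More cleanly: write $P \equiv (1+x)^l(1+y)^k$ and $M \equiv (1-x)^l(1-y)^k$; both are products of terms of the form $(1\pm x)$ or $(1\pm y)$, and one checks $P + M \geq 2$ by noting $P \geq 1 + (\text{terms that are }\geq \text{the negative parts of } M-1)$. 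The cleanest route is a two-step comparison: first fix $y$ and show $(1-x)^l A + (1+x)^l B \geq A + B$ whenever $0\leq A\leq B$ and... actually the genuinely simplest argument is convexity — the function $g(x) = (1+x)^l(1+y)^k$ is convex on $[-1,1]$ (being a product/composition of nonnegative convex increasing pieces, or simply because $x\mapsto (1+x)^l$ is convex for $l\geq 1$ and the $y$-factor is a nonnegative constant), so $\frac{g(x)+g(-x)}{2} \geq g(0) = (1+y)^k$; then apply the same convexity argument in $y$ to $(1+y)^k$ to get $\geq 1$, and assemble.

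Concretely the steps, in order, are: (1) handle the trivial edge cases $k=0$, $l=0$, $x=1$, $y=1$, or $a=0$ directly; (2) factor $1-x^2=(1-x)(1+x)$, $1-y^2=(1-y)(1+y)$ and reduce the target inequality to $a+b\geq 2$ with $a=(1-x)^l(1-y)^k$, $b=(1+x)^l(1+y)^k$; (3) prove $(1-x)^l+(1+x)^l\geq 2$ for each real $x$ (binomial expansion, even terms nonnegative), and similarly in $y$; (4) combine these one variable at a time — e.g. $b = (1+x)^l(1+y)^k$ and $a=(1-x)^l(1-y)^k$, use $(1+x)^l\geq 1$, $(1+y)^k\geq 1$ to bound $b\geq 1$, and bound $a+b$ below by splitting off the constant terms. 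I expect step (4) — assembling the two one-dimensional facts into $a+b\geq 2$ while the cross terms have mixed signs — to be the only point requiring a little care; the clean fix is to argue $b - 1 \geq 1 - a$, i.e. $(1+x)^l(1+y)^k - 1 \geq 1 - (1-x)^l(1-y)^k$, which follows because $(1+x)^l(1+y)^k - 1 \geq (1+x)^l - 1 \geq x l \geq 1 - (1-x)^l \geq 1 - (1-x)^l(1-y)^k$ when $xl \geq $ the relevant bound — but since even this chain needs $l\geq 1$ and more bookkeeping, the most robust presentation is the convexity argument of step (3)–(4): $x\mapsto (1+x)^l(1+y)^k$ is convex and even-symmetrizable around $0$, giving $\tfrac12(a'+b)\geq (1+y)^k$ where $a'=(1-x)^l(1+y)^k\geq (1-x)^l(1-y)^k = a$, hence $a+b\geq a'+b \geq 2(1+y)^k \geq 2$.
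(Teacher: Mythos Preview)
Your reduction is exactly the paper's: with $a=(1-x)^l(1-y)^k$ and $b=(1+x)^l(1+y)^k$, the two sides differ by $a(a+b-2)$, so everything comes down to $a+b\geq 2$; the single-variable fact $(1-x)^l+(1+x)^l\geq 2$ is correct as well.

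The gap is in step~(4), where both of your concrete attempts reverse an inequality. In the convexity route you obtain $a'+b\geq 2(1+y)^k$ with $a'=(1-x)^l(1+y)^k$, and then write ``hence $a+b\geq a'+b$''; but you have just noted $a'\geq a$, so in fact $a+b\leq a'+b$ and nothing follows. In the alternative chain, the last step $1-(1-x)^l\geq 1-(1-x)^l(1-y)^k$ would require $(1-y)^k\geq 1$, which fails for $y\in(0,1]$. One-variable-at-a-time bounding cannot close this because replacing a single factor always moves one of the two summands the wrong way. The fix---and this is precisely the paper's argument---is to keep the cross term: with $u=(1+x)^l\geq v=(1-x)^l$ and $p=(1+y)^k\geq q=(1-y)^k$,
\[
a+b=vq+up=\frac{(u+v)(p+q)}{2}+\frac{(u-v)(p-q)}{2}\geq \frac{(u+v)(p+q)}{2}\geq 2,
\]
the last step using your single-variable inequality (Jensen for $t\mapsto t^l$, $t\mapsto t^k$). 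Equivalently, apply convexity once along the diagonal rather than coordinate-wise: $\phi(s)=(1+sx)^l(1+sy)^k$ is convex on $[-1,1]$, so $\tfrac12(\phi(1)+\phi(-1))\geq\phi(0)=1$.
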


\begin{proof}
The difference between the left- and right-hand sides of Eq.~(\ref{seq:alg_simple}) can be evaluated as follows:
\begin{align}
    &\left[1- (1-x)^l(1-y)^k\right]^2-\left[1 - (1-x^2)^l (1-y^2)^k \right]  \nonumber\\
    &= 2(1-x)^l(1-y)^k \left[\frac{(1+x)^l(1+y)^k + (1-x)^l(1-y)^k}{2} -1\right] \nonumber\\
    &= 2(1-x)^l(1-y)^k \left[\frac{(1+x)^l + (1-x)^l}{2}\cdot\frac{(1+y)^k + (1-y)^k}{2} + \frac{(1+x)^l - (1-x)^l}{2}\cdot\frac{(1+y)^k - (1-y)^k}{2} -1\right] \nonumber\\
    &\geq 2(1-x)^l(1-y)^k \left\{\left[\frac{(1+x) + (1-x)}{2}\right]^l\cdot\left[\frac{(1+y) + (1-y)}{2}\right]^k+0-1\right\}  \nonumber\\
    &= 0, 
\end{align}
where we used Jensen's inequality in the fourth line.
\end{proof}

Finally, we provide an upper bound for the unitary design formation depth in a non-Haar random circuit with generic fixed architecture. 
We here consider the circuit where the $l$-layer connected blocks are convoluted, and denote the unitary ensemble for $i$-th block as $\nu_{A,i}$.
There are two key quantities in upper bounding the formation depth. The first is the averaged local spectral gap defined as 
\begin{equation}
\label{seq:ave_loc_spec_arb}
    \Delta_{\mathrm{loc}}^{(t)} \equiv \inf_{K\in \mathbb{N}} \frac{1}{K} \sum_{i=1}^K \Delta_{\mathrm{loc}, \nu_{A,i}}^{(t)},
\end{equation}
where $\Delta_{\mathrm{loc}, \nu_{A,i}}^{(t)}$ is the local spectral gap, defined as Eq.~\eqref{eq:Del_A_loc}. When all blocks have the same unitary ensemble, i.e., $\nu_{A,i} = \nu_{A}$ for all $i$, the averaged local spectral gap is reduced to $\Delta_{\mathrm{loc}}^{(t)} = \Delta_{\mathrm{loc}, \nu_{A}}^{(t)}$. This is the situation discussed in the main text. 

Another key quantity is the circuit depth $L_{A}^{\rm H}$ defined as
\begin{equation}
\label{seq:depth_arb_Haar}
    L_{A}^{\rm H} \equiv [f(N,t,d)]^{-l+1}\cdot l \left(2Nt \log d - \log \varepsilon \right),
\end{equation}
where $f=f_{\rm c}$ if the architecture is complete and $f=f_{\rm i}$ if the architecture is incomplete.
This is the upper bound for the circuit depth required for the Haar random circuit composed of the blocks $\{\nu_{A,i}^{\rm H}\}$ to form an $\varepsilon$-approximate unitary $t$-design, derived from Proposition~\ref{prop:Haar_fixed_arch} and Lemma~\ref{lem:basic} in Ref.~\cite{belkin2023approximate}.
Note that a single block $\nu_{A,i}^{\rm H}$ is composed of $l$ layers and thus $L_{A}^{\rm H}$ has an additional factor $l$.
With these quantities, we present the main theorem of this subsection as follows:
\begin{theorem}[Theorem 2 in the main text] \label{thm:depth_generic}
Consider a non-Haar random circuit with generic fixed architecture consisting of $l$-layer connected blocks. 
Then, this circuit forms an $\varepsilon$-approximate unitary $t$-design at a circuit depth of 
\begin{equation}
\label{seq:depth_gen}
    L \geq (\Delta_{\mathrm{loc}}^{(t)})^{-l} L^{\rm H}_{A}, 
\end{equation}
where the averaged local spectral gap $\Delta_{\mathrm{loc}}^{(t)}$ is defined as Eq.~\eqref{seq:ave_loc_spec_arb}, and $L^{\rm H}_{A}$ is defined as Eq.~\eqref{seq:depth_arb_Haar}.
\end{theorem}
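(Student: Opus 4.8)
The plan is to follow exactly the template already established for the single-layer-connected case in Theorem~\ref{thm:depth_sin_rep}, now feeding in the block-level spectral-gap bound of Proposition~\ref{prop:gap_fixed_lnonHaar} in place of Proposition~\ref{prop:repetition_circuit}. The circuit is viewed as a convolution of $L/l$ connected blocks $\nu_{A,i}$ (so the total layer count $L$ corresponds to $L/l$ blocks), and we apply Lemma~\ref{lem:basic} at the level of blocks, using $\chi = *_{i} \nu_{A,i}$.

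The key steps, in order, are as follows. First I would invoke Proposition~\ref{prop:gap_fixed_lnonHaar} for each block to get $\Delta_{\nu_{A,i}}^{(t)} \geq (\Delta_{\mathrm{loc},\nu_{A,i}}^{(t)})^{l}\,\mathrm{LB}_{A}^{(t)}$, where $\mathrm{LB}_{A}^{(t)} = [f(N,t,d)]$ in the complete case and $[f(N,t,d)]^{h}$ in the incomplete case (matching the exponent in Eq.~\eqref{eq:spec_LB_incomp_Haar_random}). Second, I would sum over the first $L/l$ blocks: $\sum_{i=1}^{L/l} \Delta_{\nu_{A,i}}^{(t)} \geq \mathrm{LB}_{A}^{(t)} \sum_{i=1}^{L/l} (\Delta_{\mathrm{loc},\nu_{A,i}}^{(t)})^{l}$. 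Third, I would lower-bound $(\Delta_{\mathrm{loc},\nu_{A,i}}^{(t)})^{l}$ using the averaged local spectral gap; the cleanest route is the inequality $\frac{1}{K}\sum_i x_i^{l} \geq \big(\frac{1}{K}\sum_i x_i\big)^{l}$ for $x_i \geq 0$, which is just convexity of $x \mapsto x^{l}$ (Jensen), giving $\sum_{i=1}^{L/l}(\Delta_{\mathrm{loc},\nu_{A,i}}^{(t)})^{l} \geq (L/l)\,(\Delta_{\mathrm{loc}}^{(t)})^{l}$ via the definition Eq.~\eqref{seq:ave_loc_spec_arb}. Fourth, I would substitute the stated depth $L \geq (\Delta_{\mathrm{loc}}^{(t)})^{-l} L_{A}^{\rm H}$ together with the definition Eq.~\eqref{seq:depth_arb_Haar} of $L_{A}^{\rm H}$, recalling that in the Haar case $\Delta_{\nu_{A}^{\rm H}}^{(t)} \geq \mathrm{LB}_{A}^{(t)}$ so that $(L/l)\,\mathrm{LB}_{A}^{(t)} (\Delta_{\mathrm{loc}}^{(t)})^{l} \cdot (\Delta_{\mathrm{loc}}^{(t)})^{-l} \geq$ the exponent $2Nt\log d - \log\varepsilon$, after the factor-$l$ bookkeeping cancels against the block/layer conversion. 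Finally, chaining through Lemma~\ref{lem:basic} at the block level gives $\|M_{\chi}^{(t)} - M_{\rm Haar}^{(t)}\|_{\infty} \leq \exp(-(2Nt\log d - \log\varepsilon)) = \varepsilon/d^{2Nt}$, which is the design condition of Definition~\ref{def:mom_approx_unitary_design} with $q = d^{N}$.

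The main obstacle I anticipate is the bookkeeping in converting between a depth $L$ measured in layers and a count $L/l$ of blocks, and making sure the various factors of $l$ (the explicit $l$ in $L_{A}^{\rm H}$ of Eq.~\eqref{seq:depth_arb_Haar}, the exponent $l$ in the local-gap bound, and the $l-1$ versus $h$ exponent on $f$ inside $\mathrm{LB}_{A}^{(t)}$) line up consistently — in particular checking that $L_{A}^{\rm H}$ as defined already absorbs the Haar-circuit gap $(\Delta_{\nu_{A}^{\rm H}}^{(t)})^{-1} \leq [f]^{-l+1}$ so that no double counting occurs. A secondary point to get right is the direction of the Jensen step: since $l \geq 1$ and the local gaps lie in $[0,1]$, $x \mapsto x^{l}$ is convex, so the inequality goes the correct way, but for $l = 1$ it degenerates and one should note consistency with Theorem~\ref{thm:depth_sin_rep}.

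\begin{proof}
We abbreviate the superscript $(t)$ throughout. Regard the circuit as a convolution of $L/l$ connected blocks $\chi \equiv *_{1\leq i\leq L/l}\,\nu_{A,i}$. By Proposition~\ref{prop:gap_fixed_lnonHaar}, each block satisfies $\Delta_{\nu_{A,i}} \geq (\Delta_{\mathrm{loc},\nu_{A,i}})^{l}\,\mathrm{LB}_{A}^{(t)}$. Summing over blocks and using the convexity of $x\mapsto x^{l}$ on $[0,1]$ (Jensen's inequality) together with the definition Eq.~\eqref{seq:ave_loc_spec_arb},
\begin{align}
    \sum_{i=1}^{L/l}\Delta_{\nu_{A,i}}
    &\geq \mathrm{LB}_{A}^{(t)}\sum_{i=1}^{L/l}\left(\Delta_{\mathrm{loc},\nu_{A,i}}\right)^{l} \nonumber\\
    &\geq \mathrm{LB}_{A}^{(t)}\cdot\frac{L}{l}\left(\frac{l}{L}\sum_{i=1}^{L/l}\Delta_{\mathrm{loc},\nu_{A,i}}\right)^{l} \nonumber\\
    &\geq \mathrm{LB}_{A}^{(t)}\cdot\frac{L}{l}\left(\Delta_{\mathrm{loc}}^{(t)}\right)^{l}.
\end{align}
Since $\mathrm{LB}_{A}^{(t)} = [f(N,t,d)]$ in the complete case and $[f(N,t,d)]^{h}$ in the incomplete case, while $\Delta_{\nu_{A}^{\rm H}} \geq \mathrm{LB}_{A}^{(t)}$ by Proposition~\ref{prop:Haar_fixed_arch}, the quantity $L_{A}^{\rm H}$ of Eq.~\eqref{seq:depth_arb_Haar} obeys $\mathrm{LB}_{A}^{(t)}\,L_{A}^{\rm H}/l \geq [f(N,t,d)]^{-l+1}\cdot\mathrm{LB}_{A}^{(t)}\cdot(2Nt\log d - \log\varepsilon) \geq 2Nt\log d - \log\varepsilon$. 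Hence, for $L \geq (\Delta_{\mathrm{loc}}^{(t)})^{-l}L_{A}^{\rm H}$,
\begin{equation}
    \sum_{i=1}^{L/l}\Delta_{\nu_{A,i}} \geq \mathrm{LB}_{A}^{(t)}\cdot\frac{L_{A}^{\rm H}}{l} \geq 2Nt\log d - \log\varepsilon.
\end{equation}
Applying Lemma~\ref{lem:basic} at the level of blocks gives $\|M_{\chi} - M_{\rm Haar}\|_{\infty} \leq \exp\left(-\sum_{i=1}^{L/l}\Delta_{\nu_{A,i}}\right) \leq \varepsilon\, d^{-2Nt}$, so $\chi$ is an $\varepsilon$-approximate unitary $t$-design on the $d^{N}$-dimensional space by Definition~\ref{def:mom_approx_unitary_design}.
\end{proof}
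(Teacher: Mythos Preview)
Your proof is correct and follows essentially the same route as the paper's own argument: apply Proposition~\ref{prop:gap_fixed_lnonHaar} block by block, use Jensen for $x\mapsto x^l$ together with the definition of $\Delta_{\rm loc}^{(t)}$, then invoke Lemma~\ref{lem:basic}. Two small remarks: (i) the paper in fact feeds in the sharper bound $f^{l-1}$ obtained inside the proof of Proposition~\ref{prop:gap_fixed_lnonHaar} rather than $\mathrm{LB}_A^{(t)}=f$ from its statement, so your chain $\mathrm{LB}_A^{(t)}\cdot f^{-l+1}\geq 1$ relies on $f\geq f^{l-1}$, which holds because $f\in(0,1]$ and $l\geq 2$ in the multilayer setting---you should state this explicitly; (ii) you correctly identify the Jensen step as convexity of $x\mapsto x^l$, whereas the paper's text says ``concavity,'' a slip on their part.
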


\begin{proof}
We here abbreviate the superscript $(t)$ and assume that $L/l$ is a natural number for the simplicity of notation.
Then, we can derive the following inequality:
\begin{equation}
\label{seq:fixed_depth}
\begin{split}
    \sum_{i=1}^{L/l} \Delta_{\nu_{A,i}}
    &\geq \sum_{i=1}^{L/l} (\Delta_{\mathrm{loc},\nu_{A,i}})^{l} f^{l-1} \\
    &\geq f^{l-1} \cdot \left(\Delta_{\mathrm{loc}}\right)^l \frac{L}{l}\\ 
    &= 2Nt \log d - \log \varepsilon.
\end{split}
\end{equation}
Here, we used Proposition~\ref{prop:gap_fixed_lnonHaar} in the first line, and the concavity of $x^l$ with respect to $x$ in the second line.
Therefore, from Lemma~\ref{lem:basic}, we can show that this circuit forms an $\varepsilon$-approximate unitary $t$-design after circuit depth $L$.
\end{proof}

\begin{rem}
When we take $A$ as the 1D brickwork architecture, Proposition~\ref{prop:gap_fixed_lnonHaar} becomes
\begin{equation*}
    \Delta_{\nu_{\rm bw}}^{(t)} \geq \left(\Delta_{\mathrm{loc},\nu_{\rm bw}}^{(t)}\right)^{2} \cdot \min_{1\leq m\leq N} \Delta_{\nu_{\rm bw}^{\rm H}}(m,d),
\end{equation*}
which is almost identical to Proposition~2 derived in Appendix B, except for the minimization over the number of qudits \(m\). Furthermore, the proof techniques employed in Appendix B and in Proposition~\ref{prop:gap_fixed_lnonHaar} are analogous, as evidenced by the correspondence between Eqs.~(B5) and (B7) in Appendix B and Eqs.~\eqref{seq:lem_fixed_prf1} and \eqref{seq:lem_fixed_prf2}, respectively.
\end{rem}

\begin{rem}[Extension of the applicability of the detectability lemma]
In Ref.~\cite{belkin2023approximate}, a lower bound for the spectral gap of generic fixed-architecuture circuit $\Delta_{\nu_{A}^{\rm H}}^{(t)}$ is provided with $\mathrm{LB}_{A}^{(t)}$ (Proposition~\ref{prop:Haar_fixed_arch} in the Supplemental Material). This term $\mathrm{LB}_{A}^{(t)}$ is evaluated with the detectability lemma, since it is defined with the spectral gap for the 1D brickwork circuit as Eqs.~\eqref{eq:spec_LB_comp_Haar_random} and \eqref{eq:spec_LB_incomp_Haar_random}.
In this section, we lower bound the spectral gap for non-Haar random circuit $\Delta_{\nu_{A}}^{(t)}$ with $\mathrm{LB}_{A}^{(t)}$. 
This result further extends the known applicability of the detectability lemma to non-Haar random circuits.
\end{rem}

\section{Patchwork circuit} \label{Ss:patchwork}
\begin{figure*}[]
    \centering
    \includegraphics[width=0.65\textwidth]{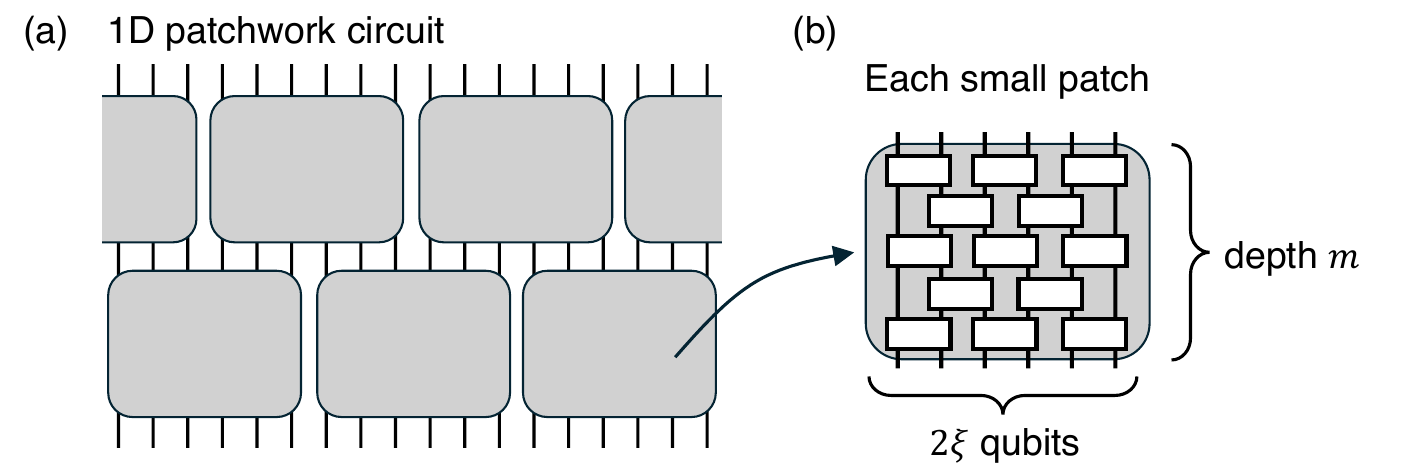}
    \caption{Schematics for 1D patchwork circuit introduced in Ref.~\cite{schuster2024random}. (a) Construction of patchwork circuit with small patches. Small patch of circuits on $2\xi$-qubit system with depth $m$ are glued together in a two-layer brickwork form. (b) The construction of each small patch. Each small patch is composed of normal random circuit, e.g., 1D brickwork circuit.}
    \label{fig:patch}
\end{figure*}

In this section, we discuss the unitary design formation in patchwork circuit structure proposed in Refs.~\cite{schuster2024random,laracuente2024approximate}.
Specifically, we demonstrate that $O(\log N)$-depth formation with local Haar ensemble in this structure, shown in Refs.~\cite{schuster2024random,laracuente2024approximate}, can be directly extended to the situation with local non-Haar ensemble, by utilizing the results of Secs.~\ref{Ss:single_layer_repetition} and \ref{Ss:multilayer_circuit}.

The essential idea of their shallow-depth formation is to glue together small patches of normal random circuits in a two-layer brickwork form, as shown in Fig.~\ref{fig:patch}(a).  
They prove the following theorem which states that circuits with this special structure form approximate unitary designs efficiently:
\begin{theorem}[Theorem 1 of Ref.~\cite{schuster2024random}]\label{thm:schuster}
Given any approximation error $\varepsilon \leq 1$, suppose each small random unitary in the two-layer brickwork ensemble is drawn from a relative error $\frac{\varepsilon}{N}$-approximate unitary $t$-design on $2\xi$ qubits with circuit depth $m$.  
Then, this circuit forms a relative error $\varepsilon$-approximate unitary $t$-design on $N$ qubits with depth $2m$, whenever the local patch size is at least $\xi \geq \log_2(Nt^2/\varepsilon)$.
\end{theorem}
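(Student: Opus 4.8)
Here is how I would attack Theorem~\ref{thm:schuster}. The plan is to compare the two-layer patchwork twirl channel $\Phi_{\rm pw}^{(t)}$ to the \emph{idealized} channel $\Phi_{\rm ideal}^{(t)}$ obtained by replacing each patch twirl with the exact Haar twirl on that patch, and to bound (a) the relative-error distance from $\Phi_{\rm pw}^{(t)}$ to $\Phi_{\rm ideal}^{(t)}$, and (b) the relative-error distance from $\Phi_{\rm ideal}^{(t)}$ to the global Haar twirl $\Phi_{\rm Haar}^{(t)}$ on $N$ qubits. Write $\Phi_{\rm pw}^{(t)} = \Phi_{L_2}^{(t)}\circ\Phi_{L_1}^{(t)}$, where $\Phi_{L_j}^{(t)} = \bigotimes_{P\in L_j}\Phi_{P}^{(t)}$ is a tensor product of $t$-fold twirls over the disjoint $2\xi$-qubit patches $P$ of brickwork layer $j$. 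For step (a) I would use that the relation $\mathcal E\preccurlyeq\mathcal F$ (i.e.\ $\mathcal F-\mathcal E$ completely positive) is preserved under pre- and post-composition with any CP map and under tensoring with a CP map, so that relative errors are subadditive under composition and under tensoring disjoint blocks. Since each of the $O(N/\xi)$ patches is a relative-error $\varepsilon/N$ design, this yields directly that $\Phi_{\rm pw}^{(t)}$ is relative-error $O(\varepsilon/\xi)$ close to $\Phi_{\rm ideal}^{(t)}$.

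The crux is step (b), for which I would isolate a two-patch \emph{merge lemma}: if $\mathcal A$ is the exact Haar twirl on a qubit set $S_1$ and $\mathcal B$ the exact Haar twirl on $S_2$ with $|S_1\cap S_2| = w$, then $\mathcal B\circ\mathcal A$ is relative-error $O(t^2\,2^{-w})$ close to the exact Haar twirl on $S_1\cup S_2$. The exact Haar twirls are orthogonal projectors onto the corresponding commutants $\mathrm{span}\{\ket\sigma\}_{\sigma\in S_t}$ (Schur--Weyl), so $\mathcal B\circ\mathcal A$ is a product of two projectors whose ranges intersect in the joint commutant; the only obstruction to $\mathcal B\circ\mathcal A$ equalling the projector onto that intersection is that the permutation states $\ket\sigma$ fail to be exactly orthonormal on the shared register of dimension $2^w$, their Gram matrix there being $\mathbbm{1}$ plus a correction of size $O(t^2\,2^{-w})$. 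Perturbing around this Gram matrix gives the stated deviation; the delicate point is to obtain the \emph{polynomial-in-$t$} scaling rather than the lossy $O(t!\,2^{-w})$ bound that a crude estimate on the full Gram matrix would produce, and to check that the bound survives at the level of the CP ordering required by the relative-error definition, not merely for the operator norm of the moment operator.

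I would then iterate the merge lemma along the $1$D chain of patches. Using that the patch twirls within each brickwork layer commute (disjoint supports), the composition defining $\Phi_{\rm ideal}^{(t)}$ can be reorganised so that one alternately peels off one patch from $L_2$ and one from $L_1$, each overlapping the current merged region in exactly $\xi$ qubits (a register of dimension $2^\xi$); after $O(N/\xi)$ merges the merged region is the whole system. By subadditivity of relative error under composition this accumulates to a distance $\delta_{\rm ideal} = O\!\big((N/\xi)\,t^2\,2^{-\xi}\big)$ from $\Phi_{\rm Haar}^{(t)}$. Chaining with step (a), $\Phi_{\rm pw}^{(t)}$ is relative-error $O\!\big(\varepsilon/\xi + (N/\xi)\,t^2\,2^{-\xi}\big)$ close to $\Phi_{\rm Haar}^{(t)}$; the hypothesis $\xi\ge\log_2(Nt^2/\varepsilon)$ makes $2^{-\xi}\le \varepsilon/(Nt^2)$, so both terms are $O(\varepsilon/\xi)$ and hence at most $\varepsilon$ once the constant hidden in the patch accuracy $\varepsilon/N$ is fixed. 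The circuit depth is $2m$, one $m$-deep patch layer per brickwork layer. The main obstacle is exactly the quantitative merge lemma of the previous paragraph — controlling, in the completely-positive sense, the distance between a product of two overlapping commutant projectors and the joint commutant projector, with the right polynomial dependence on $t$ — since everything else is bookkeeping with the monotonicity of $\preccurlyeq$.
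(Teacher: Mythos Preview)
The paper does not prove Theorem~\ref{thm:schuster}: it is quoted verbatim as Theorem~1 of Ref.~\cite{schuster2024random} and used as a black box. The surrounding text in Sec.~\ref{Ss:patchwork} only explains how to \emph{apply} the theorem once the small patches are instantiated with non-Haar brickwork circuits (via the bounds of Secs.~\ref{Ss:single_layer_repetition}--\ref{Ss:multilayer_circuit}); there is nothing to compare your argument to within this manuscript.

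For what it is worth, your sketch is the right shape and matches the strategy of Ref.~\cite{schuster2024random}: separate the analysis into (a) replacing approximate patch twirls by exact Haar twirls, with relative errors adding across the $O(N/\xi)$ patches, and (b) a two-patch merge lemma controlling $\mathcal B\circ\mathcal A$ versus the Haar twirl on $S_1\cup S_2$ with error governed by the overlap dimension, then iterated along the chain. You have also correctly flagged the genuine technical point, namely getting a \emph{polynomial} $t$-dependence (the $t^2\,2^{-w}$ scaling) in the CP ordering rather than the naive $t!\,2^{-w}$ from a crude Gram-matrix bound; this is exactly where the work in Ref.~\cite{schuster2024random} lies, and your proposal would need to supply that estimate to be complete.
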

\noindent
This theorem holds regardless of how the random unitary is formed within each small patch of $2\xi$ qubits.  
The only requirements are:  
(i) the size of each small subsystem must satisfy $\xi \geq \log_2(Nt^2/\varepsilon)$, and  
(ii) each small patch on $2\xi$ qubits must create a relative error $\frac{\varepsilon}{N}$-approximate unitary $t$-design.  

A simple way to satisfy these requirements is to first fix the subsystem size as $\xi = \lceil\log_2(Nt^2/\varepsilon) \rceil$ to meet condition (i), and then construct each small patch with Haar random 1D brickwork circuit with sufficient depth $m^{\rm H}$ to meet condition (ii).
From Ref.~\cite{chen2024incompressibility}, $m^{\rm H}$ can be taken as
\begin{equation}
\label{seq:l_patch_Haar}
    m^{\rm H} = O\left(\left(\xi t + \log (N/\varepsilon)\right) \polylog\ t \right) = O\left(\log (N/\varepsilon) \cdot t \ \polylog \ t \right),
\end{equation}
which implies $O(\log N)$-depth formation of an approximate $t$-design.

Since Theorem~\ref{thm:schuster} does not impose any restriction to each small patch, we can demonstrate the unitary design formation with non-Haar random circuits in essentially the same way.  
We again choose the size of each small subsystem as $\xi = \lceil\log_2(Nt^2/\varepsilon)\rceil$, satisfying condition (i), and construct each small patch with a non-Haar random 1D brickwork circuit. 
From Appendix B, condition (ii) is satisfied by taking the circuit depth $m$ as 
\begin{equation}
\label{eq:patch_1d_brick_nonHaar}
    m = \left(\Delta_{\rm loc}^{(t)}\right)^{-2} \cdot m^{\rm H} = O\left(\log (N/\varepsilon) \cdot t \ \polylog \ t \right),
\end{equation}
where $m^{\rm H}$ is the circuit depth for corresponding Haar random circuit. 
We note that approximate unitary design used in Theorem~\ref{thm:schuster} is the definition with the relative error, not Definition~\ref{def:mom_approx_unitary_design} employed in the discussion of Secs.~\ref{Ss:single_layer_repetition} and \ref{Ss:multilayer_circuit}. However, we can show the formation with the circuit depth $m$ (eq.~\eqref{eq:patch_1d_brick_nonHaar}), since Definition~\ref{def:mom_approx_unitary_design} implies a relative error $\varepsilon$-approximate $t$-design.

Finally, we comment on the extensibility of non-Haar random circuits with this patchwork structure.
While we have focused on the 1D circuit structure shown in Fig.\ref{fig:patch}, patchwork circuits in higher spatial dimensions have also been studied in Refs.\cite{schuster2024random,laracuente2024approximate}.
Using the same reasoning as in this section, we can show that non-Haar random circuits with such higher-dimensional patchwork structures achieve unitary designs at the same circuit depth order as their Haar random counterparts.
Moreover, although we have considered each small patch as a 1D brickwork circuit, other architectures can also be used.
Even in such cases, it can be shown that the required circuit depth scales in the same order as that of the corresponding Haar random circuit, by utilizing Theorems~\ref{thm:depth_sin_rep} and \ref{thm:depth_generic}.

\section{Spectral gap of local unitary ensemble} \label{Ss:local_unitary}
In this section, we investigate how the spectral gap $\Delta_{\nu}^{(t)}$ of a unitary ensemble $\nu$ on $\mathrm{U}(q)$ depends on the moment order $t$. This analysis is crucial for understanding the $t$-dependence of $\Delta_{\rm loc}^{(t)}$, with which the prefactor of the depth upper bounds are defined, as shown in Table~\ref{tab:summary_supp}. 
The key tool in this discussion is Lemma~\ref{lem:spec_gap_univ}, which provides the necessary and sufficient condition for the spectral gap to satisfy $ \Delta_{\nu}^{(t)} \geq \Omega((\log t)^{-2}) $. In Sec.~\ref{Sss:univ_spec}, we introduce Lemma~\ref{lem:spec_gap_univ} and discuss its implications, illustrating them with some comprehensive examples. In Sec.~\ref{Sss:prf_univ_spec}, we present the proof of Lemma~\ref{lem:spec_gap_univ}. 

\subsection{Condition for nonzero spectral gap} \label{Sss:univ_spec}

In this section, we assume that the ensemble $\nu$ consists of a finite gate set and is described as $\nu \equiv \{(p_i, U_i)\}_i$, where unitary operation $U_i$ is applied with the probability $p_i$.
However, we expect that a similar discussion applies more generally, e.g., when $\nu$ is supported on a continuous gate set. 

When discussing the condition for nonzero spectral gap, the concept of universal gate set is crucial.
\begin{dfn}[Universal gate set]
A discrete gate set $\{U_i\}_i$ is said to form a universal gate set on $\mathrm{U}(q)$ if and only if the following condition is satisfied: For any $V \in \mathrm{U}(q)$ and $\varepsilon > 0$, there exists $ V_{\rm approx} \equiv U_{i_m} U_{i_{m-1}} \cdots U_{i_1} $, where each $U_{i_j}$ is drawn from that gate set, such that
\begin{equation}
    D(V,V_{\rm approx}) \equiv \min_{\varphi \in [0,2\pi)} \| V - e^{i\varphi} V_{\rm approx}\|_{\infty} \le \varepsilon.
\end{equation}
\end{dfn}

We further introduce the notion of adjoint measure of a certain ensemble $\nu \equiv \{(p_i,U_i)\}_i$, which is defined as $\nu^\dag \equiv \{(p_i,U_i^\dag)\}_i$. 
Based on these concepts, the following lemma is formulated:
\begin{lemma}\label{lem:spec_gap_univ}
Let $\nu\equiv \{(p_i,U_i)\}_i$ be a unitary ensemble, and $\nu^\dag$ be its adjoint ensemble.
Then, if $ \nu^\dag * \nu = \{(p_i p_j, U_i^\dag U_j)\}_{i,j}$ has support on a universal gate set, there exists a positive constant $B>0$ such that for all $t$, 
\begin{equation}
\label{seq:LB_spec_gap_univ}
    \Delta_{\nu}^{(t)} > B(\log t)^{-2} .
\end{equation}
Conversely, if $ \nu^\dag * \nu $ does not have support on a universal gate set, then $\Delta_{\nu}^{(t)} = 0$ for sufficiently large $t$.
\end{lemma}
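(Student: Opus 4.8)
The plan is to treat the two directions of Lemma~\ref{lem:spec_gap_univ} separately, since they rely on different circles of ideas, and in both cases to reduce statements about $\Delta_\nu^{(t)}$ to statements about the ensemble $\nu^\dag*\nu$. The starting observation is that $M_\nu^{(t)\dag} M_\nu^{(t)} = M_{\nu^\dag*\nu}^{(t)}$, and since $M_{\nu^\dag*\nu}^{(t)}$ is Hermitian positive semidefinite with $P^{(t)}$ as an eigenprojector of eigenvalue $1$, one has $\|M_\nu^{(t)}-P^{(t)}\|_\infty^2 = \|M_{\nu^\dag*\nu}^{(t)}-P^{(t)}\|_\infty = \lambda_2(M_{\nu^\dag*\nu}^{(t)})$, the second-largest eigenvalue. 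Hence $\Delta_\nu^{(t)}>0$ if and only if $\Delta_{\nu^\dag*\nu}^{(t)}>0$, and quantitatively $\Delta_\nu^{(t)} = 1-\sqrt{1-\Delta_{\nu^\dag*\nu}^{(t)}} \geq \tfrac12\Delta_{\nu^\dag*\nu}^{(t)}$. So it suffices to prove the two-sided characterization for the \emph{self-adjoint} ensemble $\mu \equiv \nu^\dag*\nu$, whose support is the set in the hypothesis.

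For the forward (lower bound) direction, I would invoke the quantitative version of the Bourgain–Gamburd / spectral-gap machinery for unitary ensembles whose support generates a group dense in $\mathrm{U}(q)$: when the support is a universal gate set, there is a spectral gap at every order $t$, and the known quantitative refinements (as used in the random-circuit literature, e.g. the bounds feeding into Refs.~\cite{brandao2016local,haferkamp2022random,chen2024incompressibility}) give $\Delta_\mu^{(t)} = \Omega((\log t)^{-2})$ — the $(\log t)^{-2}$ scaling being exactly the content of the refined estimate. The main content of this half is therefore to carefully state the hypotheses under which that estimate applies (universality of the support, and that $\mu$ is symmetric, i.e. $\mu=\mu^\dag$, which holds here by construction since $(\nu^\dag*\nu)^\dag = \nu^\dag*\nu$) and to import the bound, then to translate it back to $\nu$ via $\Delta_\nu^{(t)}\geq\tfrac12\Delta_\mu^{(t)}$, yielding Eq.~\eqref{seq:LB_spec_gap_univ} with $B$ half the relevant constant.

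For the converse, suppose the support of $\mu$ is not universal. Then the group $G$ generated by $\mathrm{supp}(\mu)$ is a proper closed subgroup of $\mathrm{U}(q)$ (after quotienting by phases, a proper closed subgroup of $\mathrm{PU}(q)$). A proper closed subgroup $G\subsetneq \mathrm{U}(q)$ has the property that its $(t,t)$-invariant subspace $\mathrm{Inv}(G^{\otimes t,t})$ strictly contains $\mathrm{Inv}(\mathrm{U}(q)^{\otimes t,t})$ for all sufficiently large $t$ — this is the standard fact that a proper subgroup fails to be a $t$-design for large $t$ (its commutant grows). On any vector in $\mathrm{Inv}(G^{\otimes t,t})\setminus\mathrm{Inv}(\mathrm{U}(q)^{\otimes t,t})$ the operator $M_\mu^{(t)}$ acts as the identity while $P^{(t)}$ annihilates it, so $\|M_\mu^{(t)}-P^{(t)}\|_\infty = 1$ and $\Delta_\mu^{(t)}=0$; pulling back through the identity $\Delta_\nu^{(t)} = 1-\sqrt{1-\Delta_\mu^{(t)}}$ gives $\Delta_\nu^{(t)}=0$ for all such $t$.

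The main obstacle is the forward direction: importing the $\Omega((\log t)^{-2})$ estimate cleanly. The delicate points are (a) that the Bourgain–Gamburd-type gap is usually stated for a \emph{symmetric} generating set, which is why the reduction to $\mu=\nu^\dag*\nu$ (rather than working with $\nu$ directly) is the right move, and (b) that universality must be upgraded to a quantitative Diophantine / girth-type hypothesis — here one uses that a finite universal gate set on $\mathrm{U}(q)$ can be taken to consist of algebraic entries (or else one appeals to the form of the gap bound that is uniform over universal gate sets), so that the relevant "spectral gap exists and scales as $(\log t)^{-2}$" theorem applies verbatim. Once that estimate is in hand the rest is bookkeeping, and the converse is essentially a soft representation-theoretic argument with no quantitative content.
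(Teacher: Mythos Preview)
Your reduction to the self-adjoint ensemble $\mu=\nu^\dag*\nu$ via $(1-\Delta_\nu^{(t)})^2=\|M_\mu^{(t)}-P^{(t)}\|_\infty$ is exactly what the paper does (their Eq.~\eqref{seq:spec_gap_univ_prf1}), so the skeleton is right.

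Your converse is correct and in fact cleaner than the paper's. The paper proves the corresponding direction of its Lemma~\ref{lem:univ_eig} by an explicit analytic construction: it builds a test function $g_{s,V}(U)=(\varepsilon'+|\tr[V^\dag U]|^2/q^2)^s$, compares its averages over $\nu^{*K}$ and over Haar, and extracts a contradiction from a volume lower bound on small balls in $\mathrm{U}(q)$. Your invariant-subspace argument via Peter--Weyl/Frobenius reciprocity (a proper closed subgroup of $\mathrm{PU}(q)$ has extra invariants in some $(t,t)$-representation, which then persist for all larger $t$) is the natural soft route and works without modification. It buys brevity; the paper's argument buys a more hands-on, quantitative feel but is not needed for the bare statement.

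The forward direction, however, has a genuine gap precisely where you flagged it. Neither of your proposed fixes for point~(b) works: you cannot ``take'' the gate set to have algebraic entries, since $\nu$ is given to you; and there is no spectral-gap bound uniform over all universal gate sets (one can make the gap at any fixed $t$ arbitrarily small by choosing generators close to a non-universal set). The original Bourgain--Gamburd theorems genuinely require a Diophantine hypothesis, so invoking them directly does not prove the lemma as stated. What the paper uses instead is Varj\'u's \emph{transfer} result (Lemma~\ref{lem:slow_decay_prev}, from Ref.~\cite{varju2013random} via \cite{oszmaniec2021epsilon}): for \emph{any} ensemble on $\mathrm{U}(q)$ there is a fixed $t_0=t_0(q)$ and $c>0$ with $\Delta^{(t)}\geq c\,\Delta^{(t_0)}/(\log t)^2$ for all $t>t_0$. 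This requires no Diophantine input whatsoever---all it needs is $\Delta^{(t_0)}>0$, and \emph{that} is exactly what follows from universality of $\mathrm{supp}(\mu)$ via the easy ``if'' direction of Lemma~\ref{lem:univ_eig}. So the two-step structure you want is: (i) universality of $\mu$ $\Rightarrow$ $\Delta_\nu^{(t)}>0$ for every $t$, in particular for $t_0$; (ii) Varj\'u $\Rightarrow$ the $(\log t)^{-2}$ scaling. Replace ``Bourgain--Gamburd with Diophantine hypothesis'' by ``Varj\'u's transfer lemma'' and your outline is complete.
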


A subtle but important point here is that even if $\nu$ itself has support on a universal gate set, its spectral gap can still vanish: we need a slightly more strong requirement that the measure $\nu^\dag * \nu$ does.
The illustrative example is the following:
\begin{ex}[$\nu$ has support on a universal gate set, but its spectral gap becomes zero]\label{ex:two_element}
Consider an ensemble $\nu\equiv \{(p,U_1),(1-p,U_2)\}$, where $U_1$ and $U_2$ form a universal gate set on $\mathrm{U}(q)$. Then the ensemble $ \nu^\dag * \nu $ includes only the unitaries  $\{\mathbbm{1}, U_1^\dag U_2, U_2^\dag U_1\}$. These gates do not form a universal gate set because $ U_1^\dag U_2 = e^{iA} $ and $ U_2^\dag U_1 = e^{-iA} $ for some Hermitian operator $A$, making it impossible to approximate any unitary $V$ that does not commute with $A$. By Lemma~\ref{lem:spec_gap_univ}, the spectral gap of such an ensemble thus becomes $ \Delta_{\nu}^{(t)} = 0$ for sufficiently large $t$, even though $\nu$ has support on a universal gate set.
\end{ex} 
\noindent
Example~\ref{ex:two_element} includes the set of the $T$ gate and the Hadamard gate in a qubit system ($q=2$). 
For general $q\geq 2$, Ref.~\cite{lloyd1995almost} shows that two random unitaries in $\mathrm{U}(q)$ almost always form a universal gate set. Hence, Example~\ref{ex:two_element} certainly exists for general $q$.

However, introducing a small amount of redundancy can immediately ensure a nonzero spectral gap.
For example, if an ensemble $\nu \equiv \{(p_i,U_i)\}_{i=0}^m$ has support on the identity $U_0 \equiv \mathbbm{1}$ in addition to a universal gate set $\{U_i\}_{i=1}^m$, $\nu^\dag * \nu$ has support on a universal gate set, and therefore satisfies $\Delta_{\nu}^{(t)} > \Omega((\log t)^{-2})$ for all $t$ from Lemma~\ref{lem:spec_gap_univ}.
In practical situations, such inclusion of the identity is easy since it simply means to do nothing.
Hence, we adopt the following terminology, which is also empolyed in the main text:
\begin{dfn} \label{def:univ_ens}
A unitary ensemble $\nu$ is said to contain a universal gate set, if it has support on a universal gate set \textit{along with the identity operator}.
\end{dfn}

\subsection{Proof of Lemma~\ref{lem:spec_gap_univ}} \label{Sss:prf_univ_spec}

In this subsection, we provide the proof of Lemma~\ref{lem:spec_gap_univ}. 
We first introduce Lemmas~\ref{lem:slow_decay_prev} and \ref{lem:univ_eig}, which play important roles in the proof of Lemma~\ref{lem:spec_gap_univ}.
\begin{lemma} [Theorem 5 in Ref.~\cite{oszmaniec2021epsilon}] \label{lem:slow_decay_prev}
Let $\nu$ be an arbitrary unitary ensemble on $\mathrm{U}(q)$, and $\Delta_{\nu}^{(t)} \equiv 1 - \|M_{\nu}^{(t)} - M_{\rm Haar}^{(t)} \|_{\infty}$ be its spectral gap for $t$-th moment operator.
Then, there exists a natural number $t_0$ and a constant $c>0$, which are dependent only on $q$, such that for arbitrary $t>t_0$ the following inequality is satisfied:
\begin{equation}
\label{seq:slow_decay_Dloc_prev}
    \Delta_{\nu}^{(t)} \geq \frac{c\cdot\Delta_{\nu}^{(t_0)}}{(\log t)^2}.
\end{equation}
\end{lemma}
\noindent
The proof of this lemma is provided in Refs.~\cite{oszmaniec2021epsilon,varju2013random}, and we here abbreviate it.

In Lemma~\ref{lem:univ_eig}, we use the spectral radius of the operator $A$, denoted by $\rho(A)$, which is defined as the largest absolute value among the eigenvalues of $A$.
\begin{lemma} \label{lem:univ_eig}
Let $\nu \equiv \{(p_i, U_i)\}_i$ be a unitary ensemble, and $M_{\nu}^{(t)}$ be its $t$-th moment operator.  
Then, $\rho(M_{\nu}^{(t)} - M_{\rm Haar}^{(t)}) < 1$ for all $t$ if and only if $\nu$ has support on a universal gate set.
\end{lemma}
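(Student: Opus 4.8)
The plan is to convert the condition $\rho(M_\nu^{(t)} - M_{\rm Haar}^{(t)}) < 1$ into a statement about common eigenvectors of the single-gate operators $U_i^{\otimes t,t}$, and then to read that statement off from the representation theory of $\mathrm{U}(q)$.

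First I would set up the reduction. Writing $M_\nu^{(t)} = P^{(t)} + R_\nu^{(t)}$ with $R_\nu^{(t)} \equiv M_\nu^{(t)} - M_{\rm Haar}^{(t)}$, note that each $U_i^{\otimes t,t}$ (hence $M_\nu^{(t)}$) preserves the orthogonal splitting $\mathrm{Inv}(\mathrm{U}(q)^{\otimes t,t}) \oplus \mathrm{Inv}^{\perp}$, that $R_\nu^{(t)}$ vanishes on $\mathrm{Inv}(\mathrm{U}(q)^{\otimes t,t})$, and that it equals $M_\nu^{(t)}$ on the complement; hence $\rho(R_\nu^{(t)}) = 1$ iff $M_\nu^{(t)}$ has a unit eigenvector $v \in \mathrm{Inv}^{\perp}$ with eigenvalue $\lambda$ of modulus $1$. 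Since $M_\nu^{(t)} = \sum_i p_i U_i^{\otimes t,t}$ is a convex combination of unitaries and the Hilbert-space norm is strictly convex, equality in $1 = \|M_\nu^{(t)} v\| \le \sum_i p_i \|U_i^{\otimes t,t} v\| = 1$ forces $U_i^{\otimes t,t} v = \lambda v$ for every $i$ in the support. Thus $\rho(R_\nu^{(t)}) = 1$ is equivalent to the existence of a line $\mathbb{C} v \subseteq \mathrm{Inv}^{\perp}$ that is invariant under all $U_i^{\otimes t,t}$, equivalently (by continuity and compactness of $\mathrm{U}(q)$) invariant under the closed group $\pi_t(\overline{G})$, where $\overline{G} \equiv \overline{\langle U_i \rangle}$ and $\pi_t(V) \equiv V^{\otimes t,t}$. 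Note that $\pi_t$ is trivial on central phases, so it factors through $\mathrm{PU}(q)$, and that $\mathrm{Inv}(\mathrm{U}(q)^{\otimes t,t})$ is precisely the $\mathrm{PU}(q)$-trivial isotypic component of $\pi_t$.

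For the implication ``universal $\Rightarrow \rho(R_\nu^{(t)}) < 1$ for all $t$'': universality gives $\overline{G} \cdot \mathrm{U}(1)\mathbbm{1} = \mathrm{U}(q)$, so $\pi_t(\overline{G}) = \pi_t(\mathrm{U}(q))$, and the invariant line above would then define a continuous character $\chi : \mathrm{U}(q) \to \mathrm{U}(1)$. Any such $\chi$ is a power of $\det$, and evaluating at $e^{i\varphi}\mathbbm{1}$, where $\pi_t$ acts trivially, forces $\chi \equiv 1$. Then $v \in \mathrm{Inv}(\mathrm{U}(q)^{\otimes t,t})$, contradicting $0 \neq v \in \mathrm{Inv}^{\perp}$, so no such $v$ exists.

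For the converse ``not universal $\Rightarrow \rho(R_\nu^{(t)}) = 1$ for some $t$'', let $\overline{H} \subsetneq \mathrm{PU}(q)$ be the proper closed image of $\overline{G}$; the aim is to exhibit a nonzero $\overline{H}$-fixed vector lying inside a nontrivial $\mathrm{PU}(q)$-irrep of some $\pi_t$. I would invoke two standard compact-group facts: (i) every irrep of $\mathrm{PU}(q)$ appears in some $\pi_t$, because $\pi_t \cong (\mathbb{C}^q \otimes (\mathbb{C}^q)^{\ast})^{\otimes t}$ contains the $t$-th tensor power of the adjoint representation, which is faithful and self-dual on $\mathrm{PU}(q)$, and tensor powers of a faithful self-dual representation of a compact group exhaust its irreducibles; and (ii) since $\overline{H}$ is proper, $\mathrm{PU}(q)/\overline{H}$ has more than one point, so $L^2(\mathrm{PU}(q)/\overline{H}) = \bigoplus_\sigma \sigma^{\oplus \dim \sigma^{\overline{H}}}$ strictly contains the trivial representation, hence by Frobenius reciprocity some nontrivial irrep $\sigma$ has $\dim \sigma^{\overline{H}} \ge 1$. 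Picking such a $\sigma$ and a $t$ with $\sigma \subseteq \pi_t$, any nonzero $\overline{H}$-fixed vector $v$ in the $\sigma$-isotypic subspace is fixed by all $U_i^{\otimes t,t}$ and lies in $\mathrm{Inv}^{\perp}$, so $M_\nu^{(t)} v = v$, $R_\nu^{(t)} v = v$, and $\rho(R_\nu^{(t)}) = 1$.

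I expect the converse to be the main obstacle: one must carefully justify both that all irreps of $\mathrm{PU}(q)$ occur among the $\pi_t$ and that properness of $\overline{H}$ forces a nontrivial irrep with $\overline{H}$-invariants, and one must use throughout the identification of $\mathrm{Inv}(\mathrm{U}(q)^{\otimes t,t})$ with the $\mathrm{PU}(q)$-trivial part of $\pi_t$. The reduction step and the character computation in the other direction are essentially routine.
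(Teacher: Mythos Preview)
Your argument is correct and takes a genuinely different route from the paper. Both proofs begin with the same reduction: $\rho(R_\nu^{(t)})=1$ forces a unit vector $v\in\mathrm{Inv}^\perp$ with $U_i^{\otimes t,t}v=\lambda v$ for all $i$ in the support. From there, however, the arguments diverge. In the ``if'' direction the paper approximates an arbitrary $W$ and the identity by words in the $U_i$ and derives a metric contradiction with $\|W^{\otimes t,t}v-v\|>\varepsilon$; you instead observe that universality makes $\pi_t(\overline{G})=\pi_t(\mathrm{U}(q))$, so the invariant line yields a character of $\mathrm{U}(q)$ trivial on the center, hence trivial, forcing $v\in\mathrm{Inv}$. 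In the ``only if'' direction the difference is sharper: the paper builds a polynomial test function $g_{s,V}(U)=(\varepsilon'+|\mathrm{tr}[V^\dagger U]|^2/q^2)^s$, bounds its $\nu^{*K}$-average above and its Haar average below by a volume estimate, and shows that $\rho(R_\nu^{(t)})<1$ for all $t$ would make these averages collide; you instead invoke compact-group representation theory, using Frobenius reciprocity on $L^2(\mathrm{PU}(q)/\overline{H})$ to locate a nontrivial irrep $\sigma$ with $\overline{H}$-invariants and the faithful self-dual representation $\mathbb{C}^q\otimes(\mathbb{C}^q)^\ast$ to embed $\sigma$ in some $\pi_t$. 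Your approach is structurally cleaner and explains exactly why the obstruction lives in a specific irrep, at the cost of importing Peter--Weyl and the tensor-power exhaustion theorem; the paper's approach is more self-contained and quantitative, staying entirely within elementary analysis on $\mathrm{U}(q)$, which fits better with the explicit-constant spirit of the surrounding results.
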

\noindent
While Lemma~\ref{lem:spec_gap_univ} characterizes the spectral gap of the moment operator, Lemma~\ref{lem:univ_eig} is about its spectral radius. These two quantities satisfy the relation $1 -\Delta_{\nu}^{(t)} \geq \rho(M_{\nu}^{(t)}-M_{\rm Haar}^{(t)})$ for general $\nu$. Therefore, Lemma~\ref{lem:univ_eig} and the discussion in the previous subsection (such as Example~\ref{ex:two_element}) do not contradict: when an ensemble $\nu$ has support on a universal gate set, its spectral radius is always strictly less than $1$, while its sepctral gap may vanish in some cases.

While a part of the proof of Lemma~\ref{lem:univ_eig} has already appeared in the previous works \cite{harrow2009random,brown2010convergence}, we here provide a full proof to make the manuscript self-contained.
\begin{proof}[Proof of Lemma~\ref{lem:univ_eig}]
We first prove the ``if'' part. 
Assume that $\rho (M_{\nu}^{(t)} - M_{\rm Haar}^{(t)}) = 1$ for some $t$, and suppose $\nu$ has support on a universal gate set. We will derive a contradiction under these conditions.

From $\rho (M_{\nu}^{(t)} - M_{\rm Haar}^{(t)}) = 1$, we obtain a normalized vector $\ket{\psi}$ which acts on the orthogonal complement of the subspace $\mathrm{Inv}(\mathrm{U}(q)^{\otimes t, t})$ and satisfies $M_{\nu}^{(t)} \ket{\psi} = e^{i \theta} \ket{\psi}$ for some $\theta \in [0,2\pi)$.  
Therefore, for any natural number $K$, we have
\begin{equation}
\label{seq:mom_eig_1}
1 = \bigl|\!\bra{\psi} M_{\nu^{*K}}^{(t)} \ket{\psi}\!\bigr|
  = \Bigl|\ExnuK\!\bigl[\bra{\psi} U^{\otimes t, t} \ket{\psi}\bigr]\Bigr|
  \;\leq\; \ExnuK\!\Bigl[\bigl|\bra{\psi}U^{\otimes t, t}\ket{\psi}\bigr|\Bigr].
\end{equation}
Since $\bigl|\bra{\psi}U^{\otimes t, t}\ket{\psi}\bigr| \leq 1$ for any unitary $U$, it follows from Eq.~\eqref{seq:mom_eig_1} that 
$\left| \bra{\psi} U^{\otimes t, t}\ket{\psi}\right| =1$ for all $U \sim \nu^{*K}$. 
Furthermore, by the equality condition in Eq.~\eqref{seq:mom_eig_1}, there exists a fixed $\theta_K \in [0,2\pi)$ such that, for all $U \sim \nu^{*K}$,
\begin{equation}
\label{seq:mom_eig_contra}
\bra{\psi} U^{\otimes t, t} \ket{\psi} \;=\; e^{i \theta_K}.
\end{equation}

On the other hand, since $\ket{\psi}$ acts on the orthogonal complement of $\mathrm{Inv}(\mathrm{U}(q)^{\otimes t, t})$, there must be a unitary $W \in \mathrm{U}(q)$ and a constant $\varepsilon > 0$ such that
\begin{equation}
\label{seq:schur_weyl}
\Bigl\|W^{\otimes t,t}\ket{\psi} \;-\; \ket{\psi}\Bigr\| \;>\; \varepsilon,
\end{equation}
where $\|\ket{v}\| \equiv \sqrt{\braket{v|v}}$.  
From Eq.~\eqref{seq:schur_weyl}, we immediately obtain
\begin{equation}
\label{seq:mom_eig_2}
\begin{split}
\Bigl|1 - \bra{\psi} W^{\otimes t,t} \ket{\psi}\Bigr|
&\;\ge\; 1 - \mathrm{Re}\bigl[\bra{\psi} W^{\otimes t,t}\ket{\psi}\bigr]\\
&\;=\; \tfrac12 \Bigl\|W^{\otimes t,t}\ket{\psi} - \ket{\psi}\Bigr\|^2\\
&\;>\; \tfrac{\varepsilon^2}{2}.
\end{split}
\end{equation}

Next, we use the fact that $\nu$ has support on a universal gate set. By this property, there exists a unitary $W_{\rm approx} \sim \nu^{*K}$ such that $D(W,W_{\rm approx}) < \frac{\varepsilon^2}{8t}$, for sufficiently large $K$. This implies
\begin{equation}
\label{seq:mom_eig_3}
\begin{split}
\Bigl|\bra{\psi}W^{\otimes t,t}\ket{\psi} \;-\; \bra{\psi}W_{\rm approx}^{\otimes t, t}\ket{\psi}\Bigr|
&\;=\;\Bigl|\mathrm{tr}\Bigl[\ket{\psi}\bra{\psi}\bigl(W^{\otimes t,t} - W_{\rm approx}^{\otimes t, t}\bigr)\Bigr]\Bigr|\\
&\;\le\;\|\ket{\psi}\bra{\psi}\|_1 \;\bigl\|\,W^{\otimes t,t} - W_{\rm approx}^{\otimes t, t}\bigr\|_{\infty}\\
&\leq 2t D(W,W_{\rm approx})\;<\;\tfrac{\varepsilon^2}{4}.
\end{split}
\end{equation}
Similarly, we can choose $J \sim \nu^{*K}$ such that $D(\mathbbm{1}, J) < \frac{\varepsilon^2}{8t}$, yielding
\begin{equation}
\label{seq:mom_eig_4}
\;\Bigl|\bra{\psi} \mathbbm{1}^{\otimes t,t}\ket{\psi} - \bra{\psi} J^{\otimes t,t}\ket{\psi}\Bigr|
\;<\;\tfrac{\varepsilon^2}{4}.
\end{equation}
Since we have $\bra{\psi} W_{\rm approx}^{\otimes t,t}\ket{\psi} = \bra{\psi} J^{\otimes t,t}\ket{\psi}= e^{i\theta_K}$ from Eq.~\eqref{seq:mom_eig_contra}, Eqs.~\eqref{seq:mom_eig_3} and \eqref{seq:mom_eig_4} can be reformulated as 
\begin{equation*}
    \Bigl|\bra{\psi}W^{\otimes t, t}\ket{\psi} - e^{i\theta_K} \Bigr| < \tfrac{\varepsilon^2}{4}, 
    \quad \Bigl| 1- e^{i\theta_K}\Bigr| < \tfrac{\varepsilon^2}{4}.
\end{equation*}
These two inequalities contradict with Eq.~\eqref{seq:mom_eig_2} from the triangle inequality, which completes the proof of the ``if'' part.

We next provide the proof of the ``only if'' part.
For that purpose, we assume $\rho (M_{\nu}^{(t)} - M_{\rm Haar}^{(t)}) < 1$ for all $t$, and suppose $\nu$ does not have support on a universal gate set. We will derive a contradiction under these conditions.

Since $\nu$ does not have support on a universal gate set, there exists a unitary $V \in \mathrm{U}(q)$ and a constant $\varepsilon > 0$ such that the following statement is satisfied:
\begin{equation}
    {}^\forall U \in G(\nu), \quad D(V,U) > \varepsilon,
\end{equation}
where $G(\nu) \equiv \{U| {}^\exists K \in \mathbb{N} \ \mathrm{s.t.} \ U \sim \nu^{*K}\}$ is the set of unitaries generated by $\nu$. 
Therefore, for any unitary operator $U \in G(\nu)$, the value of the function $f_V(U) \equiv \frac{|\tr[V^\dag U]|}{q}$ can be upper bounded as follows:
\begin{equation}
\label{seq:mom_eig_onlyif_1}
\begin{split}
    f_V(U) &= \frac{1}{q} \max_{\theta \in [0,2\pi) } \mathrm{Re} \left[ e^{i\theta} \tr[V^\dag U]\right] \\
    &= \frac{1}{q} \left( q - \frac{1}{2} \min_{\theta \in [0,2\pi)}\|V - e^{i\theta} U\|_{2}^2 \right) \\
    &\leq \frac{1}{q} \left( q - \frac{1}{2q} \min_{\theta \in [0,2\pi)}\|V - e^{i\theta} U\|_{\infty}^2 \right) \\
    &\leq 1 - \frac{1}{2q^2} D(V,U)^2 < 1 - \frac{\varepsilon^2}{2q^2}.
\end{split}
\end{equation}
By defining $\varepsilon^\prime \equiv \frac{\varepsilon^2}{4q^2}$ and 
\begin{math}
    g_{s,V}(U) \equiv \left(\varepsilon^\prime + f_V(U)^2 \right)^s,
\end{math}
we can derive the following inequality from Eq.~(\ref{seq:mom_eig_onlyif_1}):
\begin{equation}
\label{seq:mom_eig_onlyif_contra}
    \ExnuK [g_{s,V}(U)] < (1-\varepsilon^\prime)^s.
\end{equation}

On the other hand, by introducing the notion of $\alpha$-ball around $V\in \mathrm{U}(q)$ as
\begin{math}
    B(V,\alpha) \equiv \{U | D(U,V) \leq \alpha \},
\end{math}
we can derive the lower bound for the average of $g_{s,V}(U)$ over Haar measure as follows:
\begin{equation}
\label{seq:mom_eig_g_LB}
\begin{split}
    \ExHaar [g_{s,V}(U)] &= \int_{\mathrm{U}(q)} g_{s,V}(U) d\mu_{\rm Haar} \\
    &\geq \int_{B(V,\sqrt{\varepsilon^\prime})} g_{s,V}(U) d\mu_{\rm Haar} \\
    &\geq \int_{B(V,\sqrt{\varepsilon^\prime})} d\mu_{\rm Haar} \\
    &\geq \left(\frac{\sqrt{\varepsilon^\prime}}{C}\right)^{q^2-1},
\end{split}
\end{equation}
where the inequality in the second line follows from the positivity of $g_{s,V}$, and the inequality in the final line follows from the fact that
\begin{math}
    \mathrm{Vol}(B(V,\alpha)) \equiv \int_{B(V,\alpha)} d\mu_{\rm Haar} \geq \left(\frac{\alpha}{C}\right)^{q^2-1},
\end{math}
with $C=\frac{1}{9\pi}$ (see Ref.~\cite{oszmaniec2021epsilon} for the derivation).
The inequality in the third line follows from the fact that $g_{s,V}(U)>1$ is satisfied for any $U\in B(V,\sqrt{\varepsilon^\prime})$, which is derived from the following inequality:
\begin{equation*}
\begin{split}
    f_V(U) &= \frac{1}{q} \left( q - \frac{1}{2} \min_{\theta \in [0,2\pi)}\|V - e^{i\theta} U\|_{2}^2 \right) \\
    &\geq \frac{1}{q} \left( q - \frac{q}{2} \min_{\theta \in [0,2\pi)}\|V - e^{i\theta} U\|_{\infty}^2 \right) \\
    &= 1 - \frac{D(U,V)^2}{2}.
\end{split}
\end{equation*}

Furthermore, since we now assume $\rho ( M_{\nu}^{(s)} -  M_{\rm Haar}^{(s)}) < 1$ for any $s$, we can take natural numbers $t$ and $K$ so that the following inequalities are satisfied:
\begin{gather}
    (1-\varepsilon^\prime)^{t} < \frac{1}{2} \left(\frac{\sqrt{\varepsilon^\prime}}{C}\right)^{q^2-1}, \label{seq:mom_eig_onlyif_t_fix}\\
    \delta_{t,K} \equiv \max_{1\leq s \leq t} \left\|M_{\nu^{*K}}^{(s)} -M_{\rm Haar}^{(s)} \right\|_{\infty} < \frac{1}{2(1+\varepsilon^\prime)^t} \left(\frac{\sqrt{\varepsilon^\prime}}{C}\right)^{q^2-1}. \label{seq:mom_eig_onlyif_K_fix}
\end{gather}
Here, we first fix $t$ so that Eq.~(\ref{seq:mom_eig_onlyif_t_fix}) holds, and then take sufficiently large $K$, which satisfies Eq.~(\ref{seq:mom_eig_onlyif_K_fix}).
For such $t$ and $K$, the following inequality can be derived:
\begin{equation}
\label{seq:mom_eig_K_Haar_gap}
\begin{split}
    \left|\ExnuK \left[ g_{t,V} (U) \right] - \ExHaar \left[ g_{t,V}(U) \right]  \right|
    &= \left|  \sum_{s=0}^t \binom{t}{s} (\varepsilon^\prime )^{t-s} \left( \ExnuK \left[ f_V(U)^{2s} \right] - \ExHaar \left[ f_V(U)^{2s} \right] \right)\right| \\
    &\leq \sum_{s=0}^t \binom{t}{s} (\varepsilon^\prime )^{t-s} \left| \ExnuK \left[ f_V(U)^{2s} \right] - \ExHaar \left[ f_V(U)^{2s} \right] \right| \\
    &\leq \left( \sum_{s=0}^t \binom{t}{s} (\varepsilon^\prime )^{t-s} \right) \cdot \delta_{t,K}\\
    &= (1+\varepsilon^\prime)^t \delta_{t,K} \\
    &< \frac{1}{2} \left(\frac{\sqrt{\varepsilon^\prime}}{C}\right)^{q^2-1},
\end{split}
\end{equation}
where the inequality in the third line is derived from the fact that the following inequality is satisfied for any $1\leq s\leq t$:
\begin{equation}
\begin{split}
    \left|\ExnuK \left[ f_V(U)^{2s} \right] - \ExHaar \left[ f_V(U)^{2s} \right]  \right| 
    &= \frac{1}{q^{2s}} \left| \ExnuK \left[ \tr\left[\left(UV^\dag\right)^{\otimes s, s}\right]\right] - \ExHaar \left[  \tr\left[\left(UV^\dag\right)^{\otimes s, s}\right] \right]  \right| \\
    &= \frac{1}{q^{2s}} \left| \tr\left[\left(M_{\nu}^{(s)} -M_{\rm Haar}^{(s)}\right)  V^{\dag \otimes s,s}\right] \right| \\
    &\leq \frac{1}{q^{2s}} \left\|M_{\nu}^{(s)} -M_{\rm Haar}^{(s)} \right\|_{\infty} \left\| V^{\dag \otimes s,s}\right\|_1 \\
    &\leq \delta_{t,K} .
\end{split}
\end{equation}

From Eqs.~(\ref{seq:mom_eig_g_LB}) and (\ref{seq:mom_eig_K_Haar_gap}), we can derive the inequality
\begin{equation}
\ExnuK \left[ g_{t,V}(U) \right] \geq \ExHaar [g_{t,V}(U)] - \left|\ExnuK \left[ g_{t,V}(U) \right] - \ExHaar \left[ g_{t,V}(U) \right]  \right|  > \frac{1}{2} \left(\frac{\sqrt{\varepsilon^\prime}}{C}\right)^{q^2-1},
\end{equation}
which contradicts the following inequality derived from Eqs.~(\ref{seq:mom_eig_onlyif_contra}) and (\ref{seq:mom_eig_onlyif_t_fix}):
\begin{equation}
    \ExnuK \left[ g_{t,V}(U) \right] <  (1-\varepsilon^\prime)^t < \frac{1}{2} \left(\frac{\sqrt{\varepsilon^\prime}}{C}\right)^{q^2-1}. 
\end{equation}
This completes the proof of the ``only if'' part.
\end{proof}

Finally, by utilizing Lemmas~\ref{lem:slow_decay_prev} and \ref{lem:univ_eig}, we prove Lemma~\ref{lem:spec_gap_univ}.

\begin{proof}[Proof of Lemma~\ref{lem:spec_gap_univ}]
We begin by noting a key relation that holds for any unitary ensemble $\nu$ and moment order $t$:
\begin{equation}
\label{seq:spec_gap_univ_prf1}
    \rho ( M_{\nu^\dag * \nu}^{(t)} - M_{\rm Haar}^{(t)} ) = (1 - \Delta_{\nu}^{(t)})^2.
\end{equation}
This follows from the fact that $M_{\nu^\dag * \nu}^{(t)}$ is Hermitian since $\nu^\dag * \nu$ is an inverse-closed ensemble, and that $M_{\nu^\dag * \nu}^{(t)} - M_{\rm Haar}^{(t)} = (M_{\nu}^{(t)} - M_{\rm Haar}^{(t)})(M_{\nu}^{(t)} - M_{\rm Haar}^{(t)})^\dag.$

We now prove the latter half of Lemma~\ref{lem:spec_gap_univ}. Suppose that $\nu^\dag * \nu$ does not have support on a universal gate set. Then, by Lemma~\ref{lem:univ_eig}, we have $\rho ( M_{\nu^\dag * \nu}^{(t)} - M_{\rm Haar}^{(t)} ) = 1$ for sufficiently large $t$. Substituting into Eq.~\eqref{seq:spec_gap_univ_prf1}, we obtain $\Delta_{\nu}^{(t)} = 0$ for such $t$.

Next, we turn to the first half of the lemma. Suppose that $\nu^\dag * \nu$ has support on a universal gate set. Then Lemma~\ref{lem:univ_eig} implies that $\rho ( M_{\nu^\dag * \nu}^{(t)} - M_{\rm Haar}^{(t)} ) < 1$ for all $t$, and therefore Eq.~\eqref{seq:spec_gap_univ_prf1} yields $\Delta_{\nu}^{(t)} > 0$ for all $t$. Finally, applying Lemma~\ref{lem:slow_decay_prev}, we conclude that $\Delta_{\nu}^{(t)} > B (\log t)^{-2}$ for some constant $B > 0$, since the spectral gap $\Delta_{\nu}^{(t_0)}$ appearing in Eq.~\eqref{seq:slow_decay_Dloc_prev} must be positive.
\end{proof}

\bibliography{bibliography}

\begin{thebibliography}{51}%
\makeatletter
\providecommand \@ifxundefined [1]{%
 \@ifx{#1\undefined}
}%
\providecommand \@ifnum [1]{%
 \ifnum #1\expandafter \@firstoftwo
 \else \expandafter \@secondoftwo
 \fi
}%
\providecommand \@ifx [1]{%
 \ifx #1\expandafter \@firstoftwo
 \else \expandafter \@secondoftwo
 \fi
}%
\providecommand \natexlab [1]{#1}%
\providecommand \enquote  [1]{``#1''}%
\providecommand \bibnamefont  [1]{#1}%
\providecommand \bibfnamefont [1]{#1}%
\providecommand \citenamefont [1]{#1}%
\providecommand \href@noop [0]{\@secondoftwo}%
\providecommand \href [0]{\begingroup \@sanitize@url \@href}%
\providecommand \@href[1]{\@@startlink{#1}\@@href}%
\providecommand \@@href[1]{\endgroup#1\@@endlink}%
\providecommand \@sanitize@url [0]{\catcode `\\12\catcode `\$12\catcode
  `\&12\catcode `\#12\catcode `\^12\catcode `\_12\catcode `\%12\relax}%
\providecommand \@@startlink[1]{}%
\providecommand \@@endlink[0]{}%
\providecommand \url  [0]{\begingroup\@sanitize@url \@url }%
\providecommand \@url [1]{\endgroup\@href {#1}{\urlprefix }}%
\providecommand \urlprefix  [0]{URL }%
\providecommand \Eprint [0]{\href }%
\providecommand \doibase [0]{https://doi.org/}%
\providecommand \selectlanguage [0]{\@gobble}%
\providecommand \bibinfo  [0]{\@secondoftwo}%
\providecommand \bibfield  [0]{\@secondoftwo}%
\providecommand \translation [1]{[#1]}%
\providecommand \BibitemOpen [0]{}%
\providecommand \bibitemStop [0]{}%
\providecommand \bibitemNoStop [0]{.\EOS\space}%
\providecommand \EOS [0]{\spacefactor3000\relax}%
\providecommand \BibitemShut  [1]{\csname bibitem#1\endcsname}%
\let\auto@bib@innerbib\@empty
\bibitem [{\citenamefont {Helsen}\ \emph {et~al.}(2022)\citenamefont {Helsen},
  \citenamefont {Roth}, \citenamefont {Onorati}, \citenamefont {Werner},\ and\
  \citenamefont {Eisert}}]{helsen2022general}%
  \BibitemOpen
  \bibfield  {author} {\bibinfo {author} {\bibfnamefont {J.}~\bibnamefont
  {Helsen}}, \bibinfo {author} {\bibfnamefont {I.}~\bibnamefont {Roth}},
  \bibinfo {author} {\bibfnamefont {E.}~\bibnamefont {Onorati}}, \bibinfo
  {author} {\bibfnamefont {A.~H.}\ \bibnamefont {Werner}},\ and\ \bibinfo
  {author} {\bibfnamefont {J.}~\bibnamefont {Eisert}},\ }\bibfield  {title}
  {\bibinfo {title} {General framework for randomized benchmarking},\
  }\href@noop {} {\bibfield  {journal} {\bibinfo  {journal} {PRX Quantum}\
  }\textbf {\bibinfo {volume} {3}},\ \bibinfo {pages} {020357} (\bibinfo {year}
  {2022})}\BibitemShut {NoStop}%
\bibitem [{\citenamefont {Elben}\ \emph {et~al.}(2023)\citenamefont {Elben},
  \citenamefont {Flammia}, \citenamefont {Huang}, \citenamefont {Kueng},
  \citenamefont {Preskill}, \citenamefont {Vermersch},\ and\ \citenamefont
  {Zoller}}]{elben2023randomized}%
  \BibitemOpen
  \bibfield  {author} {\bibinfo {author} {\bibfnamefont {A.}~\bibnamefont
  {Elben}}, \bibinfo {author} {\bibfnamefont {S.~T.}\ \bibnamefont {Flammia}},
  \bibinfo {author} {\bibfnamefont {H.-Y.}\ \bibnamefont {Huang}}, \bibinfo
  {author} {\bibfnamefont {R.}~\bibnamefont {Kueng}}, \bibinfo {author}
  {\bibfnamefont {J.}~\bibnamefont {Preskill}}, \bibinfo {author}
  {\bibfnamefont {B.}~\bibnamefont {Vermersch}},\ and\ \bibinfo {author}
  {\bibfnamefont {P.}~\bibnamefont {Zoller}},\ }\bibfield  {title} {\bibinfo
  {title} {The randomized measurement toolbox},\ }\href@noop {} {\bibfield
  {journal} {\bibinfo  {journal} {Nature Reviews Physics}\ }\textbf {\bibinfo
  {volume} {5}},\ \bibinfo {pages} {9} (\bibinfo {year} {2023})}\BibitemShut
  {NoStop}%
\bibitem [{\citenamefont {Huang}\ \emph {et~al.}(2020)\citenamefont {Huang},
  \citenamefont {Kueng},\ and\ \citenamefont {Preskill}}]{huang2020predicting}%
  \BibitemOpen
  \bibfield  {author} {\bibinfo {author} {\bibfnamefont {H.-Y.}\ \bibnamefont
  {Huang}}, \bibinfo {author} {\bibfnamefont {R.}~\bibnamefont {Kueng}},\ and\
  \bibinfo {author} {\bibfnamefont {J.}~\bibnamefont {Preskill}},\ }\bibfield
  {title} {\bibinfo {title} {Predicting many properties of a quantum system
  from very few measurements},\ }\href@noop {} {\bibfield  {journal} {\bibinfo
  {journal} {Nature Physics}\ }\textbf {\bibinfo {volume} {16}},\ \bibinfo
  {pages} {1050} (\bibinfo {year} {2020})}\BibitemShut {NoStop}%
\bibitem [{\citenamefont {Fisher}\ \emph {et~al.}(2023)\citenamefont {Fisher},
  \citenamefont {Khemani}, \citenamefont {Nahum},\ and\ \citenamefont
  {Vijay}}]{fisher2023random}%
  \BibitemOpen
  \bibfield  {author} {\bibinfo {author} {\bibfnamefont {M.~P.}\ \bibnamefont
  {Fisher}}, \bibinfo {author} {\bibfnamefont {V.}~\bibnamefont {Khemani}},
  \bibinfo {author} {\bibfnamefont {A.}~\bibnamefont {Nahum}},\ and\ \bibinfo
  {author} {\bibfnamefont {S.}~\bibnamefont {Vijay}},\ }\bibfield  {title}
  {\bibinfo {title} {Random quantum circuits},\ }\href@noop {} {\bibfield
  {journal} {\bibinfo  {journal} {Annu. Rev. Condens. Matter Phys.}\ }\textbf
  {\bibinfo {volume} {14}},\ \bibinfo {pages} {335} (\bibinfo {year}
  {2023})}\BibitemShut {NoStop}%
\bibitem [{\citenamefont {Nahum}\ \emph {et~al.}(2017)\citenamefont {Nahum},
  \citenamefont {Ruhman}, \citenamefont {Vijay},\ and\ \citenamefont
  {Haah}}]{nahum2017quantum}%
  \BibitemOpen
  \bibfield  {author} {\bibinfo {author} {\bibfnamefont {A.}~\bibnamefont
  {Nahum}}, \bibinfo {author} {\bibfnamefont {J.}~\bibnamefont {Ruhman}},
  \bibinfo {author} {\bibfnamefont {S.}~\bibnamefont {Vijay}},\ and\ \bibinfo
  {author} {\bibfnamefont {J.}~\bibnamefont {Haah}},\ }\bibfield  {title}
  {\bibinfo {title} {Quantum entanglement growth under random unitary
  dynamics},\ }\href@noop {} {\bibfield  {journal} {\bibinfo  {journal} {Phys.
  Rev. X}\ }\textbf {\bibinfo {volume} {7}},\ \bibinfo {pages} {031016}
  (\bibinfo {year} {2017})}\BibitemShut {NoStop}%
\bibitem [{\citenamefont {Harrow}\ and\ \citenamefont
  {Low}(2009)}]{harrow2009random}%
  \BibitemOpen
  \bibfield  {author} {\bibinfo {author} {\bibfnamefont {A.~W.}\ \bibnamefont
  {Harrow}}\ and\ \bibinfo {author} {\bibfnamefont {R.~A.}\ \bibnamefont
  {Low}},\ }\bibfield  {title} {\bibinfo {title} {Random quantum circuits are
  approximate 2-designs},\ }\href@noop {} {\bibfield  {journal} {\bibinfo
  {journal} {Commun. Math. Phys.}\ }\textbf {\bibinfo {volume} {291}},\
  \bibinfo {pages} {257} (\bibinfo {year} {2009})}\BibitemShut {NoStop}%
\bibitem [{\citenamefont {Brown}\ and\ \citenamefont
  {Viola}(2010)}]{brown2010convergence}%
  \BibitemOpen
  \bibfield  {author} {\bibinfo {author} {\bibfnamefont {W.~G.}\ \bibnamefont
  {Brown}}\ and\ \bibinfo {author} {\bibfnamefont {L.}~\bibnamefont {Viola}},\
  }\bibfield  {title} {\bibinfo {title} {Convergence rates for arbitrary
  statistical moments of random quantum circuits},\ }\href@noop {} {\bibfield
  {journal} {\bibinfo  {journal} {Phys. Rev. Lett.}\ }\textbf {\bibinfo
  {volume} {104}},\ \bibinfo {pages} {250501} (\bibinfo {year}
  {2010})}\BibitemShut {NoStop}%
\bibitem [{\citenamefont {Brand{\~a}o}\ \emph {et~al.}(2016)\citenamefont
  {Brand{\~a}o}, \citenamefont {Harrow},\ and\ \citenamefont
  {Horodecki}}]{brandao2016local}%
  \BibitemOpen
  \bibfield  {author} {\bibinfo {author} {\bibfnamefont {F.~G.}\ \bibnamefont
  {Brand{\~a}o}}, \bibinfo {author} {\bibfnamefont {A.~W.}\ \bibnamefont
  {Harrow}},\ and\ \bibinfo {author} {\bibfnamefont {M.}~\bibnamefont
  {Horodecki}},\ }\bibfield  {title} {\bibinfo {title} {Local random quantum
  circuits are approximate polynomial-designs},\ }\href@noop {} {\bibfield
  {journal} {\bibinfo  {journal} {Commun. Math. Phys.}\ }\textbf {\bibinfo
  {volume} {346}},\ \bibinfo {pages} {397} (\bibinfo {year}
  {2016})}\BibitemShut {NoStop}%
\bibitem [{\citenamefont {Haferkamp}(2022)}]{haferkamp2022random}%
  \BibitemOpen
  \bibfield  {author} {\bibinfo {author} {\bibfnamefont {J.}~\bibnamefont
  {Haferkamp}},\ }\bibfield  {title} {\bibinfo {title} {Random quantum circuits
  are approximate unitary {$t$}-designs in depth
  {$O\left(nt^{5+o(1)}\right)$}},\ }\href@noop {} {\bibfield  {journal}
  {\bibinfo  {journal} {{Quantum}}\ }\textbf {\bibinfo {volume} {6}},\ \bibinfo
  {pages} {795} (\bibinfo {year} {2022})}\BibitemShut {NoStop}%
\bibitem [{\citenamefont {Chen}\ \emph {et~al.}(2024)\citenamefont {Chen},
  \citenamefont {Haah}, \citenamefont {Haferkamp}, \citenamefont {Liu},
  \citenamefont {Metger},\ and\ \citenamefont
  {Tan}}]{chen2024incompressibility}%
  \BibitemOpen
  \bibfield  {author} {\bibinfo {author} {\bibfnamefont {C.-F.}\ \bibnamefont
  {Chen}}, \bibinfo {author} {\bibfnamefont {J.}~\bibnamefont {Haah}}, \bibinfo
  {author} {\bibfnamefont {J.}~\bibnamefont {Haferkamp}}, \bibinfo {author}
  {\bibfnamefont {Y.}~\bibnamefont {Liu}}, \bibinfo {author} {\bibfnamefont
  {T.}~\bibnamefont {Metger}},\ and\ \bibinfo {author} {\bibfnamefont
  {X.}~\bibnamefont {Tan}},\ }\bibfield  {title} {\bibinfo {title}
  {Incompressibility and spectral gaps of random circuits},\ }\href@noop {}
  {\bibfield  {journal} {\bibinfo  {journal} {arXiv preprint arXiv:2406.07478}\
  } (\bibinfo {year} {2024})}\BibitemShut {NoStop}%
\bibitem [{\citenamefont {Haferkamp}\ and\ \citenamefont
  {Hunter-Jones}(2021)}]{haferkamp2021improved}%
  \BibitemOpen
  \bibfield  {author} {\bibinfo {author} {\bibfnamefont {J.}~\bibnamefont
  {Haferkamp}}\ and\ \bibinfo {author} {\bibfnamefont {N.}~\bibnamefont
  {Hunter-Jones}},\ }\bibfield  {title} {\bibinfo {title} {Improved spectral
  gaps for random quantum circuits: Large local dimensions and all-to-all
  interactions},\ }\href@noop {} {\bibfield  {journal} {\bibinfo  {journal}
  {Phys. Rev. A}\ }\textbf {\bibinfo {volume} {104}},\ \bibinfo {pages}
  {022417} (\bibinfo {year} {2021})}\BibitemShut {NoStop}%
\bibitem [{\citenamefont {Hunter-Jones}(2019)}]{hunter2019unitary}%
  \BibitemOpen
  \bibfield  {author} {\bibinfo {author} {\bibfnamefont {N.}~\bibnamefont
  {Hunter-Jones}},\ }\bibfield  {title} {\bibinfo {title} {Unitary designs from
  statistical mechanics in random quantum circuits},\ }\href@noop {} {\bibfield
   {journal} {\bibinfo  {journal} {arXiv preprint arXiv:1905.12053}\ }
  (\bibinfo {year} {2019})}\BibitemShut {NoStop}%
\bibitem [{\citenamefont {Harrow}\ and\ \citenamefont
  {Mehraban}(2023)}]{harrow2023approximate}%
  \BibitemOpen
  \bibfield  {author} {\bibinfo {author} {\bibfnamefont {A.~W.}\ \bibnamefont
  {Harrow}}\ and\ \bibinfo {author} {\bibfnamefont {S.}~\bibnamefont
  {Mehraban}},\ }\bibfield  {title} {\bibinfo {title} {Approximate unitary
  t-designs by short random quantum circuits using nearest-neighbor and
  long-range gates},\ }\href@noop {} {\bibfield  {journal} {\bibinfo  {journal}
  {Commun. Math. Phys.}\ }\textbf {\bibinfo {volume} {401}},\ \bibinfo {pages}
  {1531} (\bibinfo {year} {2023})}\BibitemShut {NoStop}%
\bibitem [{\citenamefont {Mittal}\ and\ \citenamefont
  {Hunter-Jones}(2023)}]{mittal2023local}%
  \BibitemOpen
  \bibfield  {author} {\bibinfo {author} {\bibfnamefont {S.}~\bibnamefont
  {Mittal}}\ and\ \bibinfo {author} {\bibfnamefont {N.}~\bibnamefont
  {Hunter-Jones}},\ }\bibfield  {title} {\bibinfo {title} {Local random quantum
  circuits form approximate designs on arbitrary architectures},\ }\href@noop
  {} {\bibfield  {journal} {\bibinfo  {journal} {arXiv preprint
  arXiv:2310.19355}\ } (\bibinfo {year} {2023})}\BibitemShut {NoStop}%
\bibitem [{\citenamefont {Belkin}\ \emph {et~al.}(2024)\citenamefont {Belkin},
  \citenamefont {Allen}, \citenamefont {Ghosh}, \citenamefont {Kang},
  \citenamefont {Lin}, \citenamefont {Sud}, \citenamefont {Chong},
  \citenamefont {Fefferman},\ and\ \citenamefont
  {Clark}}]{belkin2023approximate}%
  \BibitemOpen
  \bibfield  {author} {\bibinfo {author} {\bibfnamefont {D.}~\bibnamefont
  {Belkin}}, \bibinfo {author} {\bibfnamefont {J.}~\bibnamefont {Allen}},
  \bibinfo {author} {\bibfnamefont {S.}~\bibnamefont {Ghosh}}, \bibinfo
  {author} {\bibfnamefont {C.}~\bibnamefont {Kang}}, \bibinfo {author}
  {\bibfnamefont {S.}~\bibnamefont {Lin}}, \bibinfo {author} {\bibfnamefont
  {J.}~\bibnamefont {Sud}}, \bibinfo {author} {\bibfnamefont {F.~T.}\
  \bibnamefont {Chong}}, \bibinfo {author} {\bibfnamefont {B.}~\bibnamefont
  {Fefferman}},\ and\ \bibinfo {author} {\bibfnamefont {B.~K.}\ \bibnamefont
  {Clark}},\ }\bibfield  {title} {\bibinfo {title} {Approximate t-designs in
  generic circuit architectures},\ }\href@noop {} {\bibfield  {journal}
  {\bibinfo  {journal} {PRX Quantum}\ }\textbf {\bibinfo {volume} {5}},\
  \bibinfo {pages} {040344} (\bibinfo {year} {2024})}\BibitemShut {NoStop}%
\bibitem [{\citenamefont {Schuster}\ \emph {et~al.}(2024)\citenamefont
  {Schuster}, \citenamefont {Haferkamp},\ and\ \citenamefont
  {Huang}}]{schuster2024random}%
  \BibitemOpen
  \bibfield  {author} {\bibinfo {author} {\bibfnamefont {T.}~\bibnamefont
  {Schuster}}, \bibinfo {author} {\bibfnamefont {J.}~\bibnamefont
  {Haferkamp}},\ and\ \bibinfo {author} {\bibfnamefont {H.-Y.}\ \bibnamefont
  {Huang}},\ }\bibfield  {title} {\bibinfo {title} {Random unitaries in
  extremely low depth},\ }\href@noop {} {\bibfield  {journal} {\bibinfo
  {journal} {arXiv preprint arXiv:2407.07754}\ } (\bibinfo {year}
  {2024})}\BibitemShut {NoStop}%
\bibitem [{\citenamefont {LaRacuente}\ and\ \citenamefont
  {Leditzky}(2024)}]{laracuente2024approximate}%
  \BibitemOpen
  \bibfield  {author} {\bibinfo {author} {\bibfnamefont {N.}~\bibnamefont
  {LaRacuente}}\ and\ \bibinfo {author} {\bibfnamefont {F.}~\bibnamefont
  {Leditzky}},\ }\bibfield  {title} {\bibinfo {title} {Approximate unitary $ k
  $-designs from shallow, low-communication circuits},\ }\href@noop {}
  {\bibfield  {journal} {\bibinfo  {journal} {arXiv preprint arXiv:2407.07876}\
  } (\bibinfo {year} {2024})}\BibitemShut {NoStop}%
\bibitem [{\citenamefont {Webb}(2015)}]{webb2015clifford}%
  \BibitemOpen
  \bibfield  {author} {\bibinfo {author} {\bibfnamefont {Z.}~\bibnamefont
  {Webb}},\ }\bibfield  {title} {\bibinfo {title} {The clifford group forms a
  unitary 3-design},\ }\href@noop {} {\bibfield  {journal} {\bibinfo  {journal}
  {arXiv preprint arXiv:1510.02769}\ } (\bibinfo {year} {2015})}\BibitemShut
  {NoStop}%
\bibitem [{\citenamefont {Zhu}(2017)}]{zhu2017multiqubit}%
  \BibitemOpen
  \bibfield  {author} {\bibinfo {author} {\bibfnamefont {H.}~\bibnamefont
  {Zhu}},\ }\bibfield  {title} {\bibinfo {title} {Multiqubit clifford groups
  are unitary 3-designs},\ }\href@noop {} {\bibfield  {journal} {\bibinfo
  {journal} {Phys. Rev. A}\ }\textbf {\bibinfo {volume} {96}},\ \bibinfo
  {pages} {062336} (\bibinfo {year} {2017})}\BibitemShut {NoStop}%
\bibitem [{\citenamefont {Boixo}\ \emph {et~al.}(2018)\citenamefont {Boixo},
  \citenamefont {Isakov}, \citenamefont {Smelyanskiy}, \citenamefont {Babbush},
  \citenamefont {Ding}, \citenamefont {Jiang}, \citenamefont {Bremner},
  \citenamefont {Martinis},\ and\ \citenamefont
  {Neven}}]{boixo2018characterizing}%
  \BibitemOpen
  \bibfield  {author} {\bibinfo {author} {\bibfnamefont {S.}~\bibnamefont
  {Boixo}}, \bibinfo {author} {\bibfnamefont {S.~V.}\ \bibnamefont {Isakov}},
  \bibinfo {author} {\bibfnamefont {V.~N.}\ \bibnamefont {Smelyanskiy}},
  \bibinfo {author} {\bibfnamefont {R.}~\bibnamefont {Babbush}}, \bibinfo
  {author} {\bibfnamefont {N.}~\bibnamefont {Ding}}, \bibinfo {author}
  {\bibfnamefont {Z.}~\bibnamefont {Jiang}}, \bibinfo {author} {\bibfnamefont
  {M.~J.}\ \bibnamefont {Bremner}}, \bibinfo {author} {\bibfnamefont {J.~M.}\
  \bibnamefont {Martinis}},\ and\ \bibinfo {author} {\bibfnamefont
  {H.}~\bibnamefont {Neven}},\ }\bibfield  {title} {\bibinfo {title}
  {Characterizing quantum supremacy in near-term devices},\ }\href@noop {}
  {\bibfield  {journal} {\bibinfo  {journal} {Nature Physics}\ }\textbf
  {\bibinfo {volume} {14}},\ \bibinfo {pages} {595} (\bibinfo {year}
  {2018})}\BibitemShut {NoStop}%
\bibitem [{\citenamefont {Arute}\ \emph {et~al.}(2019)\citenamefont {Arute},
  \citenamefont {Arya}, \citenamefont {Babbush}, \citenamefont {Bacon},
  \citenamefont {Bardin}, \citenamefont {Barends}, \citenamefont {Biswas},
  \citenamefont {Boixo}, \citenamefont {Brand{\~a}o}, \citenamefont {Buell}
  \emph {et~al.}}]{arute2019quantum}%
  \BibitemOpen
  \bibfield  {author} {\bibinfo {author} {\bibfnamefont {F.}~\bibnamefont
  {Arute}}, \bibinfo {author} {\bibfnamefont {K.}~\bibnamefont {Arya}},
  \bibinfo {author} {\bibfnamefont {R.}~\bibnamefont {Babbush}}, \bibinfo
  {author} {\bibfnamefont {D.}~\bibnamefont {Bacon}}, \bibinfo {author}
  {\bibfnamefont {J.~C.}\ \bibnamefont {Bardin}}, \bibinfo {author}
  {\bibfnamefont {R.}~\bibnamefont {Barends}}, \bibinfo {author} {\bibfnamefont
  {R.}~\bibnamefont {Biswas}}, \bibinfo {author} {\bibfnamefont
  {S.}~\bibnamefont {Boixo}}, \bibinfo {author} {\bibfnamefont {F.~G. S.~L.}\
  \bibnamefont {Brand{\~a}o}}, \bibinfo {author} {\bibfnamefont {D.~A.}\
  \bibnamefont {Buell}}, \emph {et~al.},\ }\bibfield  {title} {\bibinfo {title}
  {Quantum supremacy using a programmable superconducting processor},\
  }\href@noop {} {\bibfield  {journal} {\bibinfo  {journal} {Nature}\ }\textbf
  {\bibinfo {volume} {574}},\ \bibinfo {pages} {505} (\bibinfo {year}
  {2019})}\BibitemShut {NoStop}%
\bibitem [{\citenamefont {Morvan}\ \emph {et~al.}(2024)\citenamefont {Morvan},
  \citenamefont {Villalonga}, \citenamefont {Mi}, \citenamefont {Mandrà},
  \citenamefont {Bengtsson}, \citenamefont {Klimov}, \citenamefont {Chen},
  \citenamefont {Hong}, \citenamefont {Erickson}, \citenamefont {Drozdov} \emph
  {et~al.}}]{morvan2024phase}%
  \BibitemOpen
  \bibfield  {author} {\bibinfo {author} {\bibfnamefont {A.}~\bibnamefont
  {Morvan}}, \bibinfo {author} {\bibfnamefont {B.}~\bibnamefont {Villalonga}},
  \bibinfo {author} {\bibfnamefont {X.}~\bibnamefont {Mi}}, \bibinfo {author}
  {\bibfnamefont {S.}~\bibnamefont {Mandrà}}, \bibinfo {author} {\bibfnamefont
  {A.}~\bibnamefont {Bengtsson}}, \bibinfo {author} {\bibfnamefont {P.~V.}\
  \bibnamefont {Klimov}}, \bibinfo {author} {\bibfnamefont {Z.}~\bibnamefont
  {Chen}}, \bibinfo {author} {\bibfnamefont {S.}~\bibnamefont {Hong}}, \bibinfo
  {author} {\bibfnamefont {C.}~\bibnamefont {Erickson}}, \bibinfo {author}
  {\bibfnamefont {I.~K.}\ \bibnamefont {Drozdov}}, \emph {et~al.},\ }\bibfield
  {title} {\bibinfo {title} {Phase transitions in random circuit sampling},\
  }\href@noop {} {\bibfield  {journal} {\bibinfo  {journal} {Nature}\ }\textbf
  {\bibinfo {volume} {634}},\ \bibinfo {pages} {328} (\bibinfo {year}
  {2024})}\BibitemShut {NoStop}%
\bibitem [{\citenamefont {Zhu}\ \emph {et~al.}(2022)\citenamefont {Zhu},
  \citenamefont {Cao}, \citenamefont {Chen}, \citenamefont {Chen},
  \citenamefont {Chen}, \citenamefont {Chung}, \citenamefont {Deng},
  \citenamefont {Du}, \citenamefont {Fan}, \citenamefont {Gong} \emph
  {et~al.}}]{zhu2022quantum}%
  \BibitemOpen
  \bibfield  {author} {\bibinfo {author} {\bibfnamefont {Q.}~\bibnamefont
  {Zhu}}, \bibinfo {author} {\bibfnamefont {S.}~\bibnamefont {Cao}}, \bibinfo
  {author} {\bibfnamefont {F.}~\bibnamefont {Chen}}, \bibinfo {author}
  {\bibfnamefont {M.-C.}\ \bibnamefont {Chen}}, \bibinfo {author}
  {\bibfnamefont {X.}~\bibnamefont {Chen}}, \bibinfo {author} {\bibfnamefont
  {T.-H.}\ \bibnamefont {Chung}}, \bibinfo {author} {\bibfnamefont
  {H.}~\bibnamefont {Deng}}, \bibinfo {author} {\bibfnamefont {Y.}~\bibnamefont
  {Du}}, \bibinfo {author} {\bibfnamefont {D.}~\bibnamefont {Fan}}, \bibinfo
  {author} {\bibfnamefont {M.}~\bibnamefont {Gong}}, \emph {et~al.},\
  }\bibfield  {title} {\bibinfo {title} {Quantum computational advantage via
  60-qubit 24-cycle random circuit sampling},\ }\href@noop {} {\bibfield
  {journal} {\bibinfo  {journal} {Science bulletin}\ }\textbf {\bibinfo
  {volume} {67}},\ \bibinfo {pages} {240} (\bibinfo {year} {2022})}\BibitemShut
  {NoStop}%
\bibitem [{\citenamefont {Gao}\ \emph {et~al.}(2025)\citenamefont {Gao},
  \citenamefont {Fan}, \citenamefont {Zha}, \citenamefont {Bei}, \citenamefont
  {Cai}, \citenamefont {Cai}, \citenamefont {Cao}, \citenamefont {Chen},
  \citenamefont {Chen}, \citenamefont {Chen} \emph
  {et~al.}}]{gao2025establishing}%
  \BibitemOpen
  \bibfield  {author} {\bibinfo {author} {\bibfnamefont {D.}~\bibnamefont
  {Gao}}, \bibinfo {author} {\bibfnamefont {D.}~\bibnamefont {Fan}}, \bibinfo
  {author} {\bibfnamefont {C.}~\bibnamefont {Zha}}, \bibinfo {author}
  {\bibfnamefont {J.}~\bibnamefont {Bei}}, \bibinfo {author} {\bibfnamefont
  {G.}~\bibnamefont {Cai}}, \bibinfo {author} {\bibfnamefont {J.}~\bibnamefont
  {Cai}}, \bibinfo {author} {\bibfnamefont {S.}~\bibnamefont {Cao}}, \bibinfo
  {author} {\bibfnamefont {F.}~\bibnamefont {Chen}}, \bibinfo {author}
  {\bibfnamefont {J.}~\bibnamefont {Chen}}, \bibinfo {author} {\bibfnamefont
  {K.}~\bibnamefont {Chen}}, \emph {et~al.},\ }\bibfield  {title} {\bibinfo
  {title} {Establishing a new benchmark in quantum computational advantage with
  105-qubit zuchongzhi 3.0 processor},\ }\href@noop {} {\bibfield  {journal}
  {\bibinfo  {journal} {Phys. Rev. Lett.}\ }\textbf {\bibinfo {volume} {134}},\
  \bibinfo {pages} {090601} (\bibinfo {year} {2025})}\BibitemShut {NoStop}%
\bibitem [{\citenamefont {Oszmaniec}\ \emph {et~al.}(2021)\citenamefont
  {Oszmaniec}, \citenamefont {Sawicki},\ and\ \citenamefont
  {Horodecki}}]{oszmaniec2021epsilon}%
  \BibitemOpen
  \bibfield  {author} {\bibinfo {author} {\bibfnamefont {M.}~\bibnamefont
  {Oszmaniec}}, \bibinfo {author} {\bibfnamefont {A.}~\bibnamefont {Sawicki}},\
  and\ \bibinfo {author} {\bibfnamefont {M.}~\bibnamefont {Horodecki}},\
  }\bibfield  {title} {\bibinfo {title} {Epsilon-nets, unitary designs, and
  random quantum circuits},\ }\href@noop {} {\bibfield  {journal} {\bibinfo
  {journal} {IEEE Trans. Inf. Theory.}\ }\textbf {\bibinfo {volume} {68}},\
  \bibinfo {pages} {989} (\bibinfo {year} {2021})}\BibitemShut {NoStop}%
\bibitem [{\citenamefont {Oszmaniec}\ \emph {et~al.}(2024)\citenamefont
  {Oszmaniec}, \citenamefont {Kotowski}, \citenamefont {Horodecki},\ and\
  \citenamefont {Hunter-Jones}}]{oszmaniec2022saturation}%
  \BibitemOpen
  \bibfield  {author} {\bibinfo {author} {\bibfnamefont {M.}~\bibnamefont
  {Oszmaniec}}, \bibinfo {author} {\bibfnamefont {M.}~\bibnamefont {Kotowski}},
  \bibinfo {author} {\bibfnamefont {M.}~\bibnamefont {Horodecki}},\ and\
  \bibinfo {author} {\bibfnamefont {N.}~\bibnamefont {Hunter-Jones}},\
  }\bibfield  {title} {\bibinfo {title} {Saturation and recurrence of quantum
  complexity in random local quantum dynamics},\ }\href@noop {} {\bibfield
  {journal} {\bibinfo  {journal} {Phys. Rev. X}\ }\textbf {\bibinfo {volume}
  {14}},\ \bibinfo {pages} {041068} (\bibinfo {year} {2024})}\BibitemShut
  {NoStop}%
\bibitem [{\citenamefont {Roberts}\ and\ \citenamefont
  {Yoshida}(2017)}]{roberts2017chaos}%
  \BibitemOpen
  \bibfield  {author} {\bibinfo {author} {\bibfnamefont {D.~A.}\ \bibnamefont
  {Roberts}}\ and\ \bibinfo {author} {\bibfnamefont {B.}~\bibnamefont
  {Yoshida}},\ }\bibfield  {title} {\bibinfo {title} {Chaos and complexity by
  design},\ }\href@noop {} {\bibfield  {journal} {\bibinfo  {journal} {J. High
  Energ. Phys.}\ }\textbf {\bibinfo {volume} {2017}}\bibinfo  {number} {
  (04)},\ \bibinfo {pages} {121}}\BibitemShut {NoStop}%
\bibitem [{\citenamefont {Nahum}\ \emph {et~al.}(2018)\citenamefont {Nahum},
  \citenamefont {Vijay},\ and\ \citenamefont {Haah}}]{nahum2018operator}%
  \BibitemOpen
\bibfield  {number} {  }\bibfield  {author} {\bibinfo {author} {\bibfnamefont
  {A.}~\bibnamefont {Nahum}}, \bibinfo {author} {\bibfnamefont
  {S.}~\bibnamefont {Vijay}},\ and\ \bibinfo {author} {\bibfnamefont
  {J.}~\bibnamefont {Haah}},\ }\bibfield  {title} {\bibinfo {title} {Operator
  spreading in random unitary circuits},\ }\href@noop {} {\bibfield  {journal}
  {\bibinfo  {journal} {Phys. Rev. X}\ }\textbf {\bibinfo {volume} {8}},\
  \bibinfo {pages} {021014} (\bibinfo {year} {2018})}\BibitemShut {NoStop}%
\bibitem [{\citenamefont {von Keyserlingk}\ \emph {et~al.}(2018)\citenamefont
  {von Keyserlingk}, \citenamefont {Rakovszky}, \citenamefont {Pollmann},\ and\
  \citenamefont {Sondhi}}]{von2018operator}%
  \BibitemOpen
  \bibfield  {author} {\bibinfo {author} {\bibfnamefont {C.~W.}\ \bibnamefont
  {von Keyserlingk}}, \bibinfo {author} {\bibfnamefont {T.}~\bibnamefont
  {Rakovszky}}, \bibinfo {author} {\bibfnamefont {F.}~\bibnamefont
  {Pollmann}},\ and\ \bibinfo {author} {\bibfnamefont {S.~L.}\ \bibnamefont
  {Sondhi}},\ }\bibfield  {title} {\bibinfo {title} {Operator hydrodynamics,
  otocs, and entanglement growth in systems without conservation laws},\
  }\href@noop {} {\bibfield  {journal} {\bibinfo  {journal} {Phys. Rev. X}\
  }\textbf {\bibinfo {volume} {8}},\ \bibinfo {pages} {021013} (\bibinfo {year}
  {2018})}\BibitemShut {NoStop}%
\bibitem [{\citenamefont {Brand{\~a}o}\ \emph {et~al.}(2021)\citenamefont
  {Brand{\~a}o}, \citenamefont {Chemissany}, \citenamefont {Hunter-Jones},
  \citenamefont {Kueng},\ and\ \citenamefont {Preskill}}]{brandao2021models}%
  \BibitemOpen
  \bibfield  {author} {\bibinfo {author} {\bibfnamefont {F.~G.}\ \bibnamefont
  {Brand{\~a}o}}, \bibinfo {author} {\bibfnamefont {W.}~\bibnamefont
  {Chemissany}}, \bibinfo {author} {\bibfnamefont {N.}~\bibnamefont
  {Hunter-Jones}}, \bibinfo {author} {\bibfnamefont {R.}~\bibnamefont
  {Kueng}},\ and\ \bibinfo {author} {\bibfnamefont {J.}~\bibnamefont
  {Preskill}},\ }\bibfield  {title} {\bibinfo {title} {Models of quantum
  complexity growth},\ }\href@noop {} {\bibfield  {journal} {\bibinfo
  {journal} {PRX Quantum}\ }\textbf {\bibinfo {volume} {2}},\ \bibinfo {pages}
  {030316} (\bibinfo {year} {2021})}\BibitemShut {NoStop}%
\bibitem [{\citenamefont {Haferkamp}\ \emph {et~al.}(2022)\citenamefont
  {Haferkamp}, \citenamefont {Faist}, \citenamefont {Kothakonda}, \citenamefont
  {Eisert},\ and\ \citenamefont {Yunger~Halpern}}]{haferkamp2022linear}%
  \BibitemOpen
  \bibfield  {author} {\bibinfo {author} {\bibfnamefont {J.}~\bibnamefont
  {Haferkamp}}, \bibinfo {author} {\bibfnamefont {P.}~\bibnamefont {Faist}},
  \bibinfo {author} {\bibfnamefont {N.~B.}\ \bibnamefont {Kothakonda}},
  \bibinfo {author} {\bibfnamefont {J.}~\bibnamefont {Eisert}},\ and\ \bibinfo
  {author} {\bibfnamefont {N.}~\bibnamefont {Yunger~Halpern}},\ }\bibfield
  {title} {\bibinfo {title} {Linear growth of quantum circuit complexity},\
  }\href@noop {} {\bibfield  {journal} {\bibinfo  {journal} {Nature Physics}\
  }\textbf {\bibinfo {volume} {18}},\ \bibinfo {pages} {528} (\bibinfo {year}
  {2022})}\BibitemShut {NoStop}%
\bibitem [{\citenamefont {Mele}(2024)}]{mele2024introduction}%
  \BibitemOpen
  \bibfield  {author} {\bibinfo {author} {\bibfnamefont {A.~A.}\ \bibnamefont
  {Mele}},\ }\bibfield  {title} {\bibinfo {title} {Introduction to haar measure
  tools in quantum information: A beginner's tutorial},\ }\href@noop {}
  {\bibfield  {journal} {\bibinfo  {journal} {Quantum}\ }\textbf {\bibinfo
  {volume} {8}},\ \bibinfo {pages} {1340} (\bibinfo {year} {2024})}\BibitemShut
  {NoStop}%
\bibitem [{\citenamefont {Varj{\'u}}(2013)}]{varju2013random}%
  \BibitemOpen
  \bibfield  {author} {\bibinfo {author} {\bibfnamefont {P.~P.}\ \bibnamefont
  {Varj{\'u}}},\ }\bibfield  {title} {\bibinfo {title} {Random walks in compact
  groups},\ }\href@noop {} {\bibfield  {journal} {\bibinfo  {journal}
  {Documenta Mathematica}\ }\textbf {\bibinfo {volume} {18}},\ \bibinfo {pages}
  {1137} (\bibinfo {year} {2013})}\BibitemShut {NoStop}%
\bibitem [{sup()}]{supplement}%
  \BibitemOpen
  \href@noop {} {\bibinfo  {journal} {See Supplemental Material (SM) for the
  details}\ }\BibitemShut {NoStop}%
\bibitem [{\citenamefont {Aharonov}\ \emph {et~al.}(2009)\citenamefont
  {Aharonov}, \citenamefont {Arad}, \citenamefont {Landau},\ and\ \citenamefont
  {Vazirani}}]{aharonov2009detectability}%
  \BibitemOpen
\bibfield  {journal} {  }\bibfield  {author} {\bibinfo {author} {\bibfnamefont
  {D.}~\bibnamefont {Aharonov}}, \bibinfo {author} {\bibfnamefont
  {I.}~\bibnamefont {Arad}}, \bibinfo {author} {\bibfnamefont {Z.}~\bibnamefont
  {Landau}},\ and\ \bibinfo {author} {\bibfnamefont {U.}~\bibnamefont
  {Vazirani}},\ }\bibfield  {title} {\bibinfo {title} {The detectability lemma
  and quantum gap amplification},\ }in\ \href@noop {} {\emph {\bibinfo
  {booktitle} {Proceedings of the forty-first annual ACM symposium on Theory of
  computing}}}\ (\bibinfo {year} {2009})\ pp.\ \bibinfo {pages}
  {417--426}\BibitemShut {NoStop}%
\bibitem [{\citenamefont {Anshu}\ \emph {et~al.}(2016)\citenamefont {Anshu},
  \citenamefont {Arad},\ and\ \citenamefont {Vidick}}]{anshu2016simple}%
  \BibitemOpen
  \bibfield  {author} {\bibinfo {author} {\bibfnamefont {A.}~\bibnamefont
  {Anshu}}, \bibinfo {author} {\bibfnamefont {I.}~\bibnamefont {Arad}},\ and\
  \bibinfo {author} {\bibfnamefont {T.}~\bibnamefont {Vidick}},\ }\bibfield
  {title} {\bibinfo {title} {Simple proof of the detectability lemma and
  spectral gap amplification},\ }\href@noop {} {\bibfield  {journal} {\bibinfo
  {journal} {Phys. Rev. B}\ }\textbf {\bibinfo {volume} {93}},\ \bibinfo
  {pages} {205142} (\bibinfo {year} {2016})}\BibitemShut {NoStop}%
\bibitem [{\citenamefont {Choi}\ \emph {et~al.}(2023)\citenamefont {Choi},
  \citenamefont {Shaw}, \citenamefont {Madjarov}, \citenamefont {Xie},
  \citenamefont {Finkelstein}, \citenamefont {Covey}, \citenamefont {Cotler},
  \citenamefont {Mark}, \citenamefont {Huang}, \citenamefont {Kale},
  \citenamefont {Pichler}, \citenamefont {Brandão}, \citenamefont {Choi},\
  and\ \citenamefont {Endres}}]{choi2023preparing}%
  \BibitemOpen
  \bibfield  {author} {\bibinfo {author} {\bibfnamefont {J.}~\bibnamefont
  {Choi}}, \bibinfo {author} {\bibfnamefont {A.~L.}\ \bibnamefont {Shaw}},
  \bibinfo {author} {\bibfnamefont {I.~S.}\ \bibnamefont {Madjarov}}, \bibinfo
  {author} {\bibfnamefont {X.}~\bibnamefont {Xie}}, \bibinfo {author}
  {\bibfnamefont {R.}~\bibnamefont {Finkelstein}}, \bibinfo {author}
  {\bibfnamefont {J.~P.}\ \bibnamefont {Covey}}, \bibinfo {author}
  {\bibfnamefont {J.~S.}\ \bibnamefont {Cotler}}, \bibinfo {author}
  {\bibfnamefont {D.~K.}\ \bibnamefont {Mark}}, \bibinfo {author}
  {\bibfnamefont {H.-Y.}\ \bibnamefont {Huang}}, \bibinfo {author}
  {\bibfnamefont {A.}~\bibnamefont {Kale}}, \bibinfo {author} {\bibfnamefont
  {H.}~\bibnamefont {Pichler}}, \bibinfo {author} {\bibfnamefont {F.~G. S.~L.}\
  \bibnamefont {Brandão}}, \bibinfo {author} {\bibfnamefont {S.}~\bibnamefont
  {Choi}},\ and\ \bibinfo {author} {\bibfnamefont {M.}~\bibnamefont {Endres}},\
  }\bibfield  {title} {\bibinfo {title} {Preparing random states and
  benchmarking with many-body quantum chaos},\ }\href@noop {} {\bibfield
  {journal} {\bibinfo  {journal} {Nature}\ }\textbf {\bibinfo {volume} {613}},\
  \bibinfo {pages} {468} (\bibinfo {year} {2023})}\BibitemShut {NoStop}%
\bibitem [{\citenamefont {Cotler}\ \emph {et~al.}(2023)\citenamefont {Cotler},
  \citenamefont {Mark}, \citenamefont {Huang}, \citenamefont {Hernandez},
  \citenamefont {Choi}, \citenamefont {Shaw}, \citenamefont {Endres},\ and\
  \citenamefont {Choi}}]{cotler2023emergent}%
  \BibitemOpen
  \bibfield  {author} {\bibinfo {author} {\bibfnamefont {J.~S.}\ \bibnamefont
  {Cotler}}, \bibinfo {author} {\bibfnamefont {D.~K.}\ \bibnamefont {Mark}},
  \bibinfo {author} {\bibfnamefont {H.-Y.}\ \bibnamefont {Huang}}, \bibinfo
  {author} {\bibfnamefont {F.}~\bibnamefont {Hernandez}}, \bibinfo {author}
  {\bibfnamefont {J.}~\bibnamefont {Choi}}, \bibinfo {author} {\bibfnamefont
  {A.~L.}\ \bibnamefont {Shaw}}, \bibinfo {author} {\bibfnamefont
  {M.}~\bibnamefont {Endres}},\ and\ \bibinfo {author} {\bibfnamefont
  {S.}~\bibnamefont {Choi}},\ }\bibfield  {title} {\bibinfo {title} {Emergent
  quantum state designs from individual many-body wave functions},\ }\href@noop
  {} {\bibfield  {journal} {\bibinfo  {journal} {PRX Quantum}\ }\textbf
  {\bibinfo {volume} {4}},\ \bibinfo {pages} {010311} (\bibinfo {year}
  {2023})}\BibitemShut {NoStop}%
\bibitem [{\citenamefont {Ho}\ and\ \citenamefont {Choi}(2022)}]{ho2022exact}%
  \BibitemOpen
  \bibfield  {author} {\bibinfo {author} {\bibfnamefont {W.~W.}\ \bibnamefont
  {Ho}}\ and\ \bibinfo {author} {\bibfnamefont {S.}~\bibnamefont {Choi}},\
  }\bibfield  {title} {\bibinfo {title} {Exact emergent quantum state designs
  from quantum chaotic dynamics},\ }\href@noop {} {\bibfield  {journal}
  {\bibinfo  {journal} {Phys. Rev. Lett.}\ }\textbf {\bibinfo {volume} {128}},\
  \bibinfo {pages} {060601} (\bibinfo {year} {2022})}\BibitemShut {NoStop}%
\bibitem [{\citenamefont {Kaneko}\ \emph {et~al.}(2020)\citenamefont {Kaneko},
  \citenamefont {Iyoda},\ and\ \citenamefont
  {Sagawa}}]{kaneko2020characterizing}%
  \BibitemOpen
  \bibfield  {author} {\bibinfo {author} {\bibfnamefont {K.}~\bibnamefont
  {Kaneko}}, \bibinfo {author} {\bibfnamefont {E.}~\bibnamefont {Iyoda}},\ and\
  \bibinfo {author} {\bibfnamefont {T.}~\bibnamefont {Sagawa}},\ }\bibfield
  {title} {\bibinfo {title} {Characterizing complexity of many-body quantum
  dynamics by higher-order eigenstate thermalization},\ }\href@noop {}
  {\bibfield  {journal} {\bibinfo  {journal} {Phys. Rev. A}\ }\textbf {\bibinfo
  {volume} {101}},\ \bibinfo {pages} {042126} (\bibinfo {year}
  {2020})}\BibitemShut {NoStop}%
\bibitem [{\citenamefont {Dalzell}\ \emph {et~al.}(2022)\citenamefont
  {Dalzell}, \citenamefont {Hunter-Jones},\ and\ \citenamefont
  {Brand{\~a}o}}]{dalzell2022random}%
  \BibitemOpen
  \bibfield  {author} {\bibinfo {author} {\bibfnamefont {A.~M.}\ \bibnamefont
  {Dalzell}}, \bibinfo {author} {\bibfnamefont {N.}~\bibnamefont
  {Hunter-Jones}},\ and\ \bibinfo {author} {\bibfnamefont {F.~G.}\ \bibnamefont
  {Brand{\~a}o}},\ }\bibfield  {title} {\bibinfo {title} {Random quantum
  circuits anticoncentrate in log depth},\ }\href@noop {} {\bibfield  {journal}
  {\bibinfo  {journal} {PRX Quantum}\ }\textbf {\bibinfo {volume} {3}},\
  \bibinfo {pages} {010333} (\bibinfo {year} {2022})}\BibitemShut {NoStop}%
\bibitem [{\citenamefont {Hangleiter}\ \emph {et~al.}(2018)\citenamefont
  {Hangleiter}, \citenamefont {Bermejo-Vega}, \citenamefont {Schwarz},\ and\
  \citenamefont {Eisert}}]{hangleiter2018anticoncentration}%
  \BibitemOpen
  \bibfield  {author} {\bibinfo {author} {\bibfnamefont {D.}~\bibnamefont
  {Hangleiter}}, \bibinfo {author} {\bibfnamefont {J.}~\bibnamefont
  {Bermejo-Vega}}, \bibinfo {author} {\bibfnamefont {M.}~\bibnamefont
  {Schwarz}},\ and\ \bibinfo {author} {\bibfnamefont {J.}~\bibnamefont
  {Eisert}},\ }\bibfield  {title} {\bibinfo {title} {Anticoncentration theorems
  for schemes showing a quantum speedup},\ }\href@noop {} {\bibfield  {journal}
  {\bibinfo  {journal} {Quantum}\ }\textbf {\bibinfo {volume} {2}},\ \bibinfo
  {pages} {65} (\bibinfo {year} {2018})}\BibitemShut {NoStop}%
\bibitem [{\citenamefont {Heinrich}\ \emph {et~al.}(2022)\citenamefont
  {Heinrich}, \citenamefont {Kliesch},\ and\ \citenamefont
  {Roth}}]{heinrich2022randomized}%
  \BibitemOpen
  \bibfield  {author} {\bibinfo {author} {\bibfnamefont {M.}~\bibnamefont
  {Heinrich}}, \bibinfo {author} {\bibfnamefont {M.}~\bibnamefont {Kliesch}},\
  and\ \bibinfo {author} {\bibfnamefont {I.}~\bibnamefont {Roth}},\ }\bibfield
  {title} {\bibinfo {title} {Randomized benchmarking with random quantum
  circuits},\ }\href@noop {} {\bibfield  {journal} {\bibinfo  {journal} {arXiv
  preprint arXiv:2212.06181}\ } (\bibinfo {year} {2022})}\BibitemShut {NoStop}%
\bibitem [{\citenamefont {Aaronson}\ and\ \citenamefont
  {Arkhipov}(2011)}]{aaronson2011computational}%
  \BibitemOpen
  \bibfield  {author} {\bibinfo {author} {\bibfnamefont {S.}~\bibnamefont
  {Aaronson}}\ and\ \bibinfo {author} {\bibfnamefont {A.}~\bibnamefont
  {Arkhipov}},\ }\bibfield  {title} {\bibinfo {title} {The computational
  complexity of linear optics},\ }in\ \href@noop {} {\emph {\bibinfo
  {booktitle} {Proceedings of the forty-third annual ACM symposium on Theory of
  computing}}}\ (\bibinfo {year} {2011})\ pp.\ \bibinfo {pages}
  {333--342}\BibitemShut {NoStop}%
\bibitem [{\citenamefont {Wan}\ \emph {et~al.}(2023)\citenamefont {Wan},
  \citenamefont {Huggins}, \citenamefont {Lee},\ and\ \citenamefont
  {Babbush}}]{wan2023matchgate}%
  \BibitemOpen
  \bibfield  {author} {\bibinfo {author} {\bibfnamefont {K.}~\bibnamefont
  {Wan}}, \bibinfo {author} {\bibfnamefont {W.~J.}\ \bibnamefont {Huggins}},
  \bibinfo {author} {\bibfnamefont {J.}~\bibnamefont {Lee}},\ and\ \bibinfo
  {author} {\bibfnamefont {R.}~\bibnamefont {Babbush}},\ }\bibfield  {title}
  {\bibinfo {title} {Matchgate shadows for fermionic quantum simulation},\
  }\href@noop {} {\bibfield  {journal} {\bibinfo  {journal} {Commun. Math.
  Phys.}\ }\textbf {\bibinfo {volume} {404}},\ \bibinfo {pages} {629} (\bibinfo
  {year} {2023})}\BibitemShut {NoStop}%
\bibitem [{\citenamefont {Mitsuhashi}\ \emph
  {et~al.}(2024{\natexlab{a}})\citenamefont {Mitsuhashi}, \citenamefont
  {Suzuki}, \citenamefont {Soejima},\ and\ \citenamefont
  {Yoshioka}}]{mitsuhashi2024unitary}%
  \BibitemOpen
  \bibfield  {author} {\bibinfo {author} {\bibfnamefont {Y.}~\bibnamefont
  {Mitsuhashi}}, \bibinfo {author} {\bibfnamefont {R.}~\bibnamefont {Suzuki}},
  \bibinfo {author} {\bibfnamefont {T.}~\bibnamefont {Soejima}},\ and\ \bibinfo
  {author} {\bibfnamefont {N.}~\bibnamefont {Yoshioka}},\ }\bibfield  {title}
  {\bibinfo {title} {Unitary designs of symmetric local random circuits},\
  }\href@noop {} {\bibfield  {journal} {\bibinfo  {journal} {arXiv preprint
  arXiv:2408.13472}\ } (\bibinfo {year} {2024}{\natexlab{a}})}\BibitemShut
  {NoStop}%
\bibitem [{\citenamefont {Mitsuhashi}\ \emph
  {et~al.}(2024{\natexlab{b}})\citenamefont {Mitsuhashi}, \citenamefont
  {Suzuki}, \citenamefont {Soejima},\ and\ \citenamefont
  {Yoshioka}}]{mitsuhashi2024characterization}%
  \BibitemOpen
  \bibfield  {author} {\bibinfo {author} {\bibfnamefont {Y.}~\bibnamefont
  {Mitsuhashi}}, \bibinfo {author} {\bibfnamefont {R.}~\bibnamefont {Suzuki}},
  \bibinfo {author} {\bibfnamefont {T.}~\bibnamefont {Soejima}},\ and\ \bibinfo
  {author} {\bibfnamefont {N.}~\bibnamefont {Yoshioka}},\ }\bibfield  {title}
  {\bibinfo {title} {Characterization of randomness in quantum circuits of
  continuous gate sets},\ }\href@noop {} {\bibfield  {journal} {\bibinfo
  {journal} {arXiv preprint arXiv:2408.13475}\ } (\bibinfo {year}
  {2024}{\natexlab{b}})}\BibitemShut {NoStop}%
\bibitem [{\citenamefont {Liu}\ \emph {et~al.}(2024)\citenamefont {Liu},
  \citenamefont {Hulse},\ and\ \citenamefont {Marvian}}]{liu2024unitary}%
  \BibitemOpen
  \bibfield  {author} {\bibinfo {author} {\bibfnamefont {H.}~\bibnamefont
  {Liu}}, \bibinfo {author} {\bibfnamefont {A.}~\bibnamefont {Hulse}},\ and\
  \bibinfo {author} {\bibfnamefont {I.}~\bibnamefont {Marvian}},\ }\bibfield
  {title} {\bibinfo {title} {Unitary designs from random symmetric quantum
  circuits},\ }\href@noop {} {\bibfield  {journal} {\bibinfo  {journal} {arXiv
  preprint arXiv:2408.14463}\ } (\bibinfo {year} {2024})}\BibitemShut {NoStop}%
\bibitem [{\citenamefont {Suzuki}\ \emph {et~al.}(2024)\citenamefont {Suzuki},
  \citenamefont {Katsura}, \citenamefont {Mitsuhashi}, \citenamefont {Soejima},
  \citenamefont {Eisert},\ and\ \citenamefont {Yoshioka}}]{suzuki2024more}%
  \BibitemOpen
  \bibfield  {author} {\bibinfo {author} {\bibfnamefont {R.}~\bibnamefont
  {Suzuki}}, \bibinfo {author} {\bibfnamefont {H.}~\bibnamefont {Katsura}},
  \bibinfo {author} {\bibfnamefont {Y.}~\bibnamefont {Mitsuhashi}}, \bibinfo
  {author} {\bibfnamefont {T.}~\bibnamefont {Soejima}}, \bibinfo {author}
  {\bibfnamefont {J.}~\bibnamefont {Eisert}},\ and\ \bibinfo {author}
  {\bibfnamefont {N.}~\bibnamefont {Yoshioka}},\ }\bibfield  {title} {\bibinfo
  {title} {More global randomness from less random local gates},\ }\href@noop
  {} {\bibfield  {journal} {\bibinfo  {journal} {arXiv preprint
  arXiv:2410.24127}\ } (\bibinfo {year} {2024})}\BibitemShut {NoStop}%
\bibitem [{\citenamefont {Kong}\ \emph {et~al.}(2024)\citenamefont {Kong},
  \citenamefont {Li},\ and\ \citenamefont {Liu}}]{kong2024convergence}%
  \BibitemOpen
  \bibfield  {author} {\bibinfo {author} {\bibfnamefont {L.}~\bibnamefont
  {Kong}}, \bibinfo {author} {\bibfnamefont {Z.}~\bibnamefont {Li}},\ and\
  \bibinfo {author} {\bibfnamefont {Z.-W.}\ \bibnamefont {Liu}},\ }\bibfield
  {title} {\bibinfo {title} {Convergence efficiency of quantum gates and
  circuits},\ }\href@noop {} {\bibfield  {journal} {\bibinfo  {journal} {arXiv
  preprint arXiv:2411.04898}\ } (\bibinfo {year} {2024})}\BibitemShut {NoStop}%
\bibitem [{\citenamefont {Riddell}\ \emph {et~al.}(2025)\citenamefont
  {Riddell}, \citenamefont {Klobas},\ and\ \citenamefont
  {Bertini}}]{riddell2025quantum}%
  \BibitemOpen
  \bibfield  {author} {\bibinfo {author} {\bibfnamefont {J.}~\bibnamefont
  {Riddell}}, \bibinfo {author} {\bibfnamefont {K.}~\bibnamefont {Klobas}},\
  and\ \bibinfo {author} {\bibfnamefont {B.}~\bibnamefont {Bertini}},\
  }\bibfield  {title} {\bibinfo {title} {Quantum state designs from minimally
  random quantum circuits},\ }\href@noop {} {\bibfield  {journal} {\bibinfo
  {journal} {arXiv preprint arXiv:2503.05698}\ } (\bibinfo {year}
  {2025})}\BibitemShut {NoStop}%
\end{thebibliography}%


\begin{thebibliography}{15}%
\makeatletter
\providecommand \@ifxundefined [1]{%
 \@ifx{#1\undefined}
}%
\providecommand \@ifnum [1]{%
 \ifnum #1\expandafter \@firstoftwo
 \else \expandafter \@secondoftwo
 \fi
}%
\providecommand \@ifx [1]{%
 \ifx #1\expandafter \@firstoftwo
 \else \expandafter \@secondoftwo
 \fi
}%
\providecommand \natexlab [1]{#1}%
\providecommand \enquote  [1]{``#1''}%
\providecommand \bibnamefont  [1]{#1}%
\providecommand \bibfnamefont [1]{#1}%
\providecommand \citenamefont [1]{#1}%
\providecommand \href@noop [0]{\@secondoftwo}%
\providecommand \href [0]{\begingroup \@sanitize@url \@href}%
\providecommand \@href[1]{\@@startlink{#1}\@@href}%
\providecommand \@@href[1]{\endgroup#1\@@endlink}%
\providecommand \@sanitize@url [0]{\catcode `\\12\catcode `\$12\catcode
  `\&12\catcode `\#12\catcode `\^12\catcode `\_12\catcode `\%12\relax}%
\providecommand \@@startlink[1]{}%
\providecommand \@@endlink[0]{}%
\providecommand \url  [0]{\begingroup\@sanitize@url \@url }%
\providecommand \@url [1]{\endgroup\@href {#1}{\urlprefix }}%
\providecommand \urlprefix  [0]{URL }%
\providecommand \Eprint [0]{\href }%
\providecommand \doibase [0]{https://doi.org/}%
\providecommand \selectlanguage [0]{\@gobble}%
\providecommand \bibinfo  [0]{\@secondoftwo}%
\providecommand \bibfield  [0]{\@secondoftwo}%
\providecommand \translation [1]{[#1]}%
\providecommand \BibitemOpen [0]{}%
\providecommand \bibitemStop [0]{}%
\providecommand \bibitemNoStop [0]{.\EOS\space}%
\providecommand \EOS [0]{\spacefactor3000\relax}%
\providecommand \BibitemShut  [1]{\csname bibitem#1\endcsname}%
\let\auto@bib@innerbib\@empty
\bibitem [{\citenamefont {Mele}(2024)}]{mele2024introduction}%
  \BibitemOpen
  \bibfield  {author} {\bibinfo {author} {\bibfnamefont {A.~A.}\ \bibnamefont
  {Mele}},\ }\bibfield  {title} {\bibinfo {title} {Introduction to haar measure
  tools in quantum information: A beginner's tutorial},\ }\href@noop {}
  {\bibfield  {journal} {\bibinfo  {journal} {Quantum}\ }\textbf {\bibinfo
  {volume} {8}},\ \bibinfo {pages} {1340} (\bibinfo {year} {2024})}\BibitemShut
  {NoStop}%
\bibitem [{\citenamefont {Brand{\~a}o}\ \emph {et~al.}(2016)\citenamefont
  {Brand{\~a}o}, \citenamefont {Harrow},\ and\ \citenamefont
  {Horodecki}}]{brandao2016local}%
  \BibitemOpen
  \bibfield  {author} {\bibinfo {author} {\bibfnamefont {F.~G.}\ \bibnamefont
  {Brand{\~a}o}}, \bibinfo {author} {\bibfnamefont {A.~W.}\ \bibnamefont
  {Harrow}},\ and\ \bibinfo {author} {\bibfnamefont {M.}~\bibnamefont
  {Horodecki}},\ }\bibfield  {title} {\bibinfo {title} {Local random quantum
  circuits are approximate polynomial-designs},\ }\href@noop {} {\bibfield
  {journal} {\bibinfo  {journal} {Commun. Math. Phys.}\ }\textbf {\bibinfo
  {volume} {346}},\ \bibinfo {pages} {397} (\bibinfo {year}
  {2016})}\BibitemShut {NoStop}%
\bibitem [{\citenamefont {Haferkamp}(2022)}]{haferkamp2022random}%
  \BibitemOpen
  \bibfield  {author} {\bibinfo {author} {\bibfnamefont {J.}~\bibnamefont
  {Haferkamp}},\ }\bibfield  {title} {\bibinfo {title} {Random quantum circuits
  are approximate unitary {$t$}-designs in depth
  {$O\left(nt^{5+o(1)}\right)$}},\ }\href@noop {} {\bibfield  {journal}
  {\bibinfo  {journal} {{Quantum}}\ }\textbf {\bibinfo {volume} {6}},\ \bibinfo
  {pages} {795} (\bibinfo {year} {2022})}\BibitemShut {NoStop}%
\bibitem [{\citenamefont {Chen}\ \emph {et~al.}(2024)\citenamefont {Chen},
  \citenamefont {Haah}, \citenamefont {Haferkamp}, \citenamefont {Liu},
  \citenamefont {Metger},\ and\ \citenamefont
  {Tan}}]{chen2024incompressibility}%
  \BibitemOpen
  \bibfield  {author} {\bibinfo {author} {\bibfnamefont {C.-F.}\ \bibnamefont
  {Chen}}, \bibinfo {author} {\bibfnamefont {J.}~\bibnamefont {Haah}}, \bibinfo
  {author} {\bibfnamefont {J.}~\bibnamefont {Haferkamp}}, \bibinfo {author}
  {\bibfnamefont {Y.}~\bibnamefont {Liu}}, \bibinfo {author} {\bibfnamefont
  {T.}~\bibnamefont {Metger}},\ and\ \bibinfo {author} {\bibfnamefont
  {X.}~\bibnamefont {Tan}},\ }\bibfield  {title} {\bibinfo {title}
  {Incompressibility and spectral gaps of random circuits},\ }\href@noop {}
  {\bibfield  {journal} {\bibinfo  {journal} {arXiv preprint arXiv:2406.07478}\
  } (\bibinfo {year} {2024})}\BibitemShut {NoStop}%
\bibitem [{\citenamefont {Haferkamp}\ and\ \citenamefont
  {Hunter-Jones}(2021)}]{haferkamp2021improved}%
  \BibitemOpen
  \bibfield  {author} {\bibinfo {author} {\bibfnamefont {J.}~\bibnamefont
  {Haferkamp}}\ and\ \bibinfo {author} {\bibfnamefont {N.}~\bibnamefont
  {Hunter-Jones}},\ }\bibfield  {title} {\bibinfo {title} {Improved spectral
  gaps for random quantum circuits: Large local dimensions and all-to-all
  interactions},\ }\href@noop {} {\bibfield  {journal} {\bibinfo  {journal}
  {Phys. Rev. A}\ }\textbf {\bibinfo {volume} {104}},\ \bibinfo {pages}
  {022417} (\bibinfo {year} {2021})}\BibitemShut {NoStop}%
\bibitem [{\citenamefont {Mittal}\ and\ \citenamefont
  {Hunter-Jones}(2023)}]{mittal2023local}%
  \BibitemOpen
  \bibfield  {author} {\bibinfo {author} {\bibfnamefont {S.}~\bibnamefont
  {Mittal}}\ and\ \bibinfo {author} {\bibfnamefont {N.}~\bibnamefont
  {Hunter-Jones}},\ }\bibfield  {title} {\bibinfo {title} {Local random quantum
  circuits form approximate designs on arbitrary architectures},\ }\href@noop
  {} {\bibfield  {journal} {\bibinfo  {journal} {arXiv preprint
  arXiv:2310.19355}\ } (\bibinfo {year} {2023})}\BibitemShut {NoStop}%
\bibitem [{\citenamefont {Belkin}\ \emph {et~al.}(2024)\citenamefont {Belkin},
  \citenamefont {Allen}, \citenamefont {Ghosh}, \citenamefont {Kang},
  \citenamefont {Lin}, \citenamefont {Sud}, \citenamefont {Chong},
  \citenamefont {Fefferman},\ and\ \citenamefont
  {Clark}}]{belkin2023approximate}%
  \BibitemOpen
  \bibfield  {author} {\bibinfo {author} {\bibfnamefont {D.}~\bibnamefont
  {Belkin}}, \bibinfo {author} {\bibfnamefont {J.}~\bibnamefont {Allen}},
  \bibinfo {author} {\bibfnamefont {S.}~\bibnamefont {Ghosh}}, \bibinfo
  {author} {\bibfnamefont {C.}~\bibnamefont {Kang}}, \bibinfo {author}
  {\bibfnamefont {S.}~\bibnamefont {Lin}}, \bibinfo {author} {\bibfnamefont
  {J.}~\bibnamefont {Sud}}, \bibinfo {author} {\bibfnamefont {F.~T.}\
  \bibnamefont {Chong}}, \bibinfo {author} {\bibfnamefont {B.}~\bibnamefont
  {Fefferman}},\ and\ \bibinfo {author} {\bibfnamefont {B.~K.}\ \bibnamefont
  {Clark}},\ }\bibfield  {title} {\bibinfo {title} {Approximate t-designs in
  generic circuit architectures},\ }\href@noop {} {\bibfield  {journal}
  {\bibinfo  {journal} {PRX Quantum}\ }\textbf {\bibinfo {volume} {5}},\
  \bibinfo {pages} {040344} (\bibinfo {year} {2024})}\BibitemShut {NoStop}%
\bibitem [{\citenamefont {Schuster}\ \emph {et~al.}(2024)\citenamefont
  {Schuster}, \citenamefont {Haferkamp},\ and\ \citenamefont
  {Huang}}]{schuster2024random}%
  \BibitemOpen
  \bibfield  {author} {\bibinfo {author} {\bibfnamefont {T.}~\bibnamefont
  {Schuster}}, \bibinfo {author} {\bibfnamefont {J.}~\bibnamefont
  {Haferkamp}},\ and\ \bibinfo {author} {\bibfnamefont {H.-Y.}\ \bibnamefont
  {Huang}},\ }\bibfield  {title} {\bibinfo {title} {Random unitaries in
  extremely low depth},\ }\href@noop {} {\bibfield  {journal} {\bibinfo
  {journal} {arXiv preprint arXiv:2407.07754}\ } (\bibinfo {year}
  {2024})}\BibitemShut {NoStop}%
\bibitem [{\citenamefont {LaRacuente}\ and\ \citenamefont
  {Leditzky}(2024)}]{laracuente2024approximate}%
  \BibitemOpen
  \bibfield  {author} {\bibinfo {author} {\bibfnamefont {N.}~\bibnamefont
  {LaRacuente}}\ and\ \bibinfo {author} {\bibfnamefont {F.}~\bibnamefont
  {Leditzky}},\ }\bibfield  {title} {\bibinfo {title} {Approximate unitary $ k
  $-designs from shallow, low-communication circuits},\ }\href@noop {}
  {\bibfield  {journal} {\bibinfo  {journal} {arXiv preprint arXiv:2407.07876}\
  } (\bibinfo {year} {2024})}\BibitemShut {NoStop}%
\bibitem [{\citenamefont {Oszmaniec}\ \emph {et~al.}(2021)\citenamefont
  {Oszmaniec}, \citenamefont {Sawicki},\ and\ \citenamefont
  {Horodecki}}]{oszmaniec2021epsilon}%
  \BibitemOpen
  \bibfield  {author} {\bibinfo {author} {\bibfnamefont {M.}~\bibnamefont
  {Oszmaniec}}, \bibinfo {author} {\bibfnamefont {A.}~\bibnamefont {Sawicki}},\
  and\ \bibinfo {author} {\bibfnamefont {M.}~\bibnamefont {Horodecki}},\
  }\bibfield  {title} {\bibinfo {title} {Epsilon-nets, unitary designs, and
  random quantum circuits},\ }\href@noop {} {\bibfield  {journal} {\bibinfo
  {journal} {IEEE Trans. Inf. Theory.}\ }\textbf {\bibinfo {volume} {68}},\
  \bibinfo {pages} {989} (\bibinfo {year} {2021})}\BibitemShut {NoStop}%
\bibitem [{\citenamefont {Oszmaniec}\ \emph {et~al.}(2024)\citenamefont
  {Oszmaniec}, \citenamefont {Kotowski}, \citenamefont {Horodecki},\ and\
  \citenamefont {Hunter-Jones}}]{oszmaniec2022saturation}%
  \BibitemOpen
  \bibfield  {author} {\bibinfo {author} {\bibfnamefont {M.}~\bibnamefont
  {Oszmaniec}}, \bibinfo {author} {\bibfnamefont {M.}~\bibnamefont {Kotowski}},
  \bibinfo {author} {\bibfnamefont {M.}~\bibnamefont {Horodecki}},\ and\
  \bibinfo {author} {\bibfnamefont {N.}~\bibnamefont {Hunter-Jones}},\
  }\bibfield  {title} {\bibinfo {title} {Saturation and recurrence of quantum
  complexity in random local quantum dynamics},\ }\href@noop {} {\bibfield
  {journal} {\bibinfo  {journal} {Phys. Rev. X}\ }\textbf {\bibinfo {volume}
  {14}},\ \bibinfo {pages} {041068} (\bibinfo {year} {2024})}\BibitemShut
  {NoStop}%
\bibitem [{\citenamefont {Lloyd}(1995)}]{lloyd1995almost}%
  \BibitemOpen
  \bibfield  {author} {\bibinfo {author} {\bibfnamefont {S.}~\bibnamefont
  {Lloyd}},\ }\bibfield  {title} {\bibinfo {title} {Almost any quantum logic
  gate is universal},\ }\href@noop {} {\bibfield  {journal} {\bibinfo
  {journal} {Phys. Rev. Lett.}\ }\textbf {\bibinfo {volume} {75}},\ \bibinfo
  {pages} {346} (\bibinfo {year} {1995})}\BibitemShut {NoStop}%
\bibitem [{\citenamefont {Varj{\'u}}(2013)}]{varju2013random}%
  \BibitemOpen
  \bibfield  {author} {\bibinfo {author} {\bibfnamefont {P.~P.}\ \bibnamefont
  {Varj{\'u}}},\ }\bibfield  {title} {\bibinfo {title} {Random walks in compact
  groups},\ }\href@noop {} {\bibfield  {journal} {\bibinfo  {journal}
  {Documenta Mathematica}\ }\textbf {\bibinfo {volume} {18}},\ \bibinfo {pages}
  {1137} (\bibinfo {year} {2013})}\BibitemShut {NoStop}%
\bibitem [{\citenamefont {Harrow}\ and\ \citenamefont
  {Low}(2009)}]{harrow2009random}%
  \BibitemOpen
  \bibfield  {author} {\bibinfo {author} {\bibfnamefont {A.~W.}\ \bibnamefont
  {Harrow}}\ and\ \bibinfo {author} {\bibfnamefont {R.~A.}\ \bibnamefont
  {Low}},\ }\bibfield  {title} {\bibinfo {title} {Random quantum circuits are
  approximate 2-designs},\ }\href@noop {} {\bibfield  {journal} {\bibinfo
  {journal} {Commun. Math. Phys.}\ }\textbf {\bibinfo {volume} {291}},\
  \bibinfo {pages} {257} (\bibinfo {year} {2009})}\BibitemShut {NoStop}%
\bibitem [{\citenamefont {Brown}\ and\ \citenamefont
  {Viola}(2010)}]{brown2010convergence}%
  \BibitemOpen
  \bibfield  {author} {\bibinfo {author} {\bibfnamefont {W.~G.}\ \bibnamefont
  {Brown}}\ and\ \bibinfo {author} {\bibfnamefont {L.}~\bibnamefont {Viola}},\
  }\bibfield  {title} {\bibinfo {title} {Convergence rates for arbitrary
  statistical moments of random quantum circuits},\ }\href@noop {} {\bibfield
  {journal} {\bibinfo  {journal} {Phys. Rev. Lett.}\ }\textbf {\bibinfo
  {volume} {104}},\ \bibinfo {pages} {250501} (\bibinfo {year}
  {2010})}\BibitemShut {NoStop}%
\end{thebibliography}%

\end{document}